\newcommand{\todo}[1]{{}}
\newcommand{\na}[1]{{}}
\newcommand{\remove}[1]{} 
\newtheorem{theorem}{Theorem}
\newtheorem{lemma}{Lemma}
\newtheorem{definition}{Definition}
\newtheorem{proposition}{Proposition}
\newtheorem{corollary}{Corollary}
\definecolor{carmine}{rgb}{0.59, 0.0, 0.09}
\newcommand{\sref}[1]{\hyperref[#1]{\color{blue} \ref{#1}}}
\newcommand{\tref}[1]{\hyperref[#1]{\color{Mahogany} \ref{#1}}}
\newcommand{\demand}{D}
\newcommand{\fref}[1]{\hyperref[#1]{\color{PineGreen} \ref{#1}}}
\DeclareMathOperator*{\E}{\mathbb{E}}
\newcommand{\p}{\mathbb{P}}
\renewcommand{\P}{\mathcal{P}}
\newcommand{\mI}{\mathcal{I}}
\newcommand{\N}{\mathbb{N}}
\newcommand{\mA}{\mathcal{A}}
\newcommand{\Interest}{\mathfrak{I}}
\newcommand{\Admissions}{\mathfrak{A}}
\newcommand{\Matching}{\mathfrak{M}}
\newcommand{\M}{\mathcal{M}}
\newcommand{\R}{\mathcal{R}}
\newcommand{\mE}{\mathcal{E}}
\newcommand{\indif}{\Theta}
\newcommand{\averagerank}{AverageRank}
\newcommand{\acceptancerate}{AcceptanceRate}
\newcommand{\mv}{\mathcal{V}}
\renewcommand{\S}{\ensuremath{\mathcal{S}}}
\renewcommand{\H}{\ensuremath{\mathcal{H}}}
\newcommand{\pois}[1]{\ensuremath{{\bf Po}(#1)}}
\newcommand{\abs}[1]{\ensuremath{\left| #1 \right|}}
\newcommand{\norm}[1]{\ensuremath{\left|\left| #1 \right|\right|_1}}
\newcommand{\enrollment}{Enrollment}
\newcommand{\T}{\mathcal{T}}
\newcommand{\D}{\mathcal{D}}
\newcommand{\mV}{\mathcal{V}}
\newcommand{\vdet}{\mV^{det}}
\newcommand{\vpois}{\mV^{pois}}
\newcommand{\Dtau}{\D^\tau}
\newcommand{\I}{I}
\newcommand{\MStable}{M^*}
\newcommand{\interest}{I}
\title{A Continuum Model of \\ Stable Matching with Finite Capacities}
\author{Nick Arnosti, University of Minnesota }
\begin{document}

$\fontdimen14\textfont2=4pt
\fontdimen15\textfont2=4pt
\fontdimen16\textfont2=3pt
\fontdimen17\textfont2=3pt$

\onehalfspacing

\maketitle
\begin{abstract}
This paper introduces a unified framework for stable matching, which nests the traditional definition of stable matching in finite markets and the continuum definition of stable matching from \citet{azevedo-leshno_2016} as special cases. Within this framework, I identify a novel continuum model, which makes individual-level probabilistic predictions.

This new model always has a unique stable outcome, which can be found using an analog of the Deferred Acceptance algorithm. The crucial difference between this model and that of \citet{azevedo-leshno_2016} is that they assume that the amount of student interest at each school is deterministic, whereas my proposed alternative assumes that it follows a Poisson distribution. As a result, this new model accurately predicts the simulated distribution of cutoffs, even for markets with only ten schools and twenty students. 

This model generates new insights about the number and quality of matches. When schools are homogeneous, it provides upper and lower bounds on students' average rank, which match results from \citet{ashlagi-kanoria-leshno_2017} but apply to more general settings. This model also provides clean analytical expressions for the number of matches in a platform pricing setting considered by \citet{marx-schummer_2021}. 
%
%
\end{abstract}

\na{Alex Teytelboym's point: don't want to be seen as incremental improvements on Azevedo Leshno. Point out their serious shortcomings. Point out serious shortcomings of Ashlagi Kanoria Leshno (lack of flexibility). Say that I am (sort of) unifying these models in the right way.
}

\remove{
That being said, many of the moving parts are defined only formally, without any proper introduction, motivation or interpretation, and the it is left for the reader to understand what is their meaning or function. This is true even for elements that are the heart of the model. In particular, the vacancy functions are never explained. I had to read over the mathematical formulae several times to understand how to interpret them, and why are they called "vacancy functions", and I'm still not sure. Similarly, the economic or even statistical intuitions are absent. For example, the Poisson vacancy function is introduced with the following explanation:
"This choice is motivated by the thought that the number of students who are interested in $h$ and have priority above $p$ should follow a Poisson distribution with mean$I_h(p)$."
* The more textual parts of the paper are also vague and fail to present the purpose or contribution of this paper clearly. The introduction and the conclusion are not effective at all (except possible section 1.3).

* Page 4: Last sentence before section 2.1 is cut off in the middle.
* The beginning of section 2.3 (all the way to 2.3.1) is more confusing than helpful. The functions that are mentioned are only partially defined, but not given any intuition. It felt like I'm being given hints. In particular, when reading it, it wasn't clear to me at all why $I_h$ should be decreasing (maybe because it wasn't clear what "a measure of interest" means). Also, the "vacancy function" was used, without being formally or informally defined.
* I am not sure what the economic significance of this paper is.
* All in all, I see a clear path of improvement for this paper. The following are only suggestions, but I think the authors should be clear and precise about what they are trying to achieve and why this is interesting. Personally, I would write the introduction from scratch, focusing on impact of this paper, including practical ways in which it can be used, such as the examples that are already in it, and in terms of giving researchers a tool for e expressing their own ideas, similar to Azevedo and Leshno. Then the authors should work on improving the readability of the technical parts. I would consider moving more technical parts away from the main body of the paper.

These are valuable technical contributions, and potentially useful in some applications that previous work may not be. If there are such concrete applications, they deserve to be discussed. Without such applications or new insights, the paper could be a better fit for a venue that targets audience with more specialized interests.

Comments for authors
--------------------
Minor comments:
In defining the admission function $V(I_h(p), C_h)$, the interest function is used without referring to a matching M, i.e. the superscript $I^M$ is dropped without further explanation. I see where the definition is going, but clarification is helpful here.

Incomplete sentence right before section 2.1

The paper can much benefit from a concrete real-world application that demonstrates the contribution of the proposed model compared to the prior work. 
}


\section{Introduction}

Ever since \citet{gale-shapley_1962} defined stability in two-sided matching markets, the topic has generated a great deal of interest from academics and practitioners alike: their paper has over 7,000 citations, and variants of their deferred acceptance algorithm are used to assign medical residencies in the United States and public school seats in cities across the globe. These developments prompted the award of the 2012 Nobel Prize to Alvin Roth and Lloyd Shapley ``for the theory of stable allocations and the practice of market design."

Although many aspects of this theory are now well-understood, it remains difficult to predict how changes to market primitives will affect the set of stable outcomes. A recent report by the Brookings institution highlighted this as a key challenge facing school choice initiatives:
\begin{quote}
{\it Even if DA [Deferred Acceptance] algorithms are relatively simple, predicting how student assignment policies will affect enrollment and outcomes is difficult... 
This creates challenges for policymakers to assess a priori how policy decisions will affect students and schools -- and creates potential for unintended negative consequences. \hfill \citep{kasman-valant_2019} }
\end{quote}

Along these lines, \citet{kojima_2012} shows that affirmative action policies that boost the priority of students in a targeted group may result in worse outcomes for all members of this group. Even when the direction of an effect is known, seemingly small changes can have surprisingly large consequences: \citet{ashlagi-kanoria-leshno_2017} show that in balanced markets with random preferences, adding a single student dramatically increases students' average rank. Conversely, seemingly large changes may have minimal impact. In Boston, a 1999 compromise granted students  living within a school's walk zone priority for 50\% of its seats. Over a decade later, \citet{dur-kominers-pathak-sonmez_2018} demonstrated that this policy led to virtually identical outcomes as a policy that granted no walk zone priority at all!


In an effort to better understand two-sided matching markets, researchers have turned to a variety of ``large market" and ``continuum" models. An ideal approach would be {\em flexible} enough to incorporate complex preferences and priorities, able to {\em accurately} predict match outcomes, and would offer new {\em insights}. 

One line of work studies outcomes in large  finite markets. To maintain tractability, these papers typically impose strong assumptions. For example,\citet{pittel_1989}, \citet{knuth_1996}, \citet{ashlagi-kanoria-leshno_2017}, \citet{ashlagi-nikzad-romm_2019}, \citet{ashlagi-nikzad_2019} and \citet{kanoria-min-qian_2021} all assume that schools are symmetric (student preferences are iid and uniformly distributed), and all but \citet{knuth_1996} assume that school priorities are either identical or drawn independently and uniformly at random. Recent work by \citet{arnosti_2021} permits schools to differ along a one-dimensional measure of ``popularity," but still imposes strong assumptions on preferences and priorities. These papers offer {\em insights} into how match outcomes depend on the choice of proposing side, the market imbalance, and the length of student lists. However, the analysis underlying these insights is not {\em flexible} enough to accommodate more realistic assumptions on preferences and priorities. 

As a result, these models are of limited use when trying to tackle practical problems. For example, parents may want to know how likely their child is to be admitted to a particular school. Administrators may want to predict how a proposed policy change will affect the number of students who fail to match to any school on their list. For these problems, the best tool is often simulation. \citet{abdulkadiroglu-pathak-roth_2009} use simulations to compare different tiebreaking procedures in New York City and Boston, and \citet{dehaan-gautier-oosterbeek-vanderklaauw_2018} do the same for Amsterdam. \citet{ashlagi-nikzad_2019} and \citet{kanoria-min-qian_2021} use simulation data from New York to answer different questions. While simulation is a very {\em flexible} and {\em accurate} tool, it typically does not offer much {\em insight}. It shows what is true, but not why it is true. If a pattern is observed through simulation, it can be difficult to predict whether the same pattern will continue to hold in other settings.

\subsection{The Continuum Model of \citet{azevedo-leshno_2016}} 

Perhaps the work that comes closest to hitting the trifecta of flexibility, accuracy, and insightfulness is that of \citet{azevedo-leshno_2016}. Students in their model are described by a ``type" $\theta$, which determines both their preferences and their priority score at each school. Student types are distributed according to an (almost) arbitrary measure $\eta$, giving the model {\em flexibility} to capture complex preferences and priorities. This measure determines the set of market-clearing ``cutoff scores":  school-specific scores such that if schools admit students with priority above their cutoff score, and students attend their favorite school where they are admitted, expected ``demand" (enrollment) is equal to capacity at each school with a positive cutoff. This approach enables {\em tractable} analysis into the effect of market primitives on the final cutoffs.


The {\em accuracy} of their model at predicting outcomes in finite random markets depends on the number of seats at each school. We elaborate on this point, as it motivates our work. Hypothetically, if each school were to use its predicted cutoff score to determine admissions, and $n$ student types were drawn iid from the measure $\eta$, then a school $h$ with expected demand equal to its capacity $C_h$ would have {\em realized} demand following a binomial distribution with parameters $n$ and $C_h/n$. In reality, capacity constraints must hold for each realization (rather than only in expectation), so the fluctuations in realized demand translate to fluctuations in realized cutoff scores. As a result, a student whose priority is a bit above the predicted cutoff score is not truly ``safe", and one with priority below the predicted cutoff is not hopeless. Rather, students' probability of admission varies continuously with their priority.

If $C_h$ is large, then a Binomial with parameters $n$ and $C_h/n$ will be concentrated around its expectation, and the fluctuation in cutoff scores will be minor. That is, students with priority significantly above the predicted cutoff are almost certain to be admitted, while those with priority significantly below the predicted cutoff have almost no chance. This is formalized in Theorem 2 and Proposition 3 from \citet{azevedo-leshno_2016}.

However, the capacity of each school is small, then random variation in student demand results in significant variation in schools' realized cutoff scores, as shown in Figure \tref{fig:azevedo-leshno}. Their model fails to capture this variability, causing it to produce inaccurate predictions. For a simple (if stylized) illustration of this point, consider a market where $n$ schools each have capacity $C$ and $nC$ students each list a single school selected uniformly at random. Their model predicts a cutoff score of zero for each school, implying that all students will match to their first choice. If $C$ is very large, this is nearly correct, but if $C = 1$, the expected fraction of students who match is $1 - (1 - 1/n)^n \approx 1 - 1/e$.


\begin{figure}
\captionsetup{width=\textwidth}
\includegraphics[width = \textwidth]{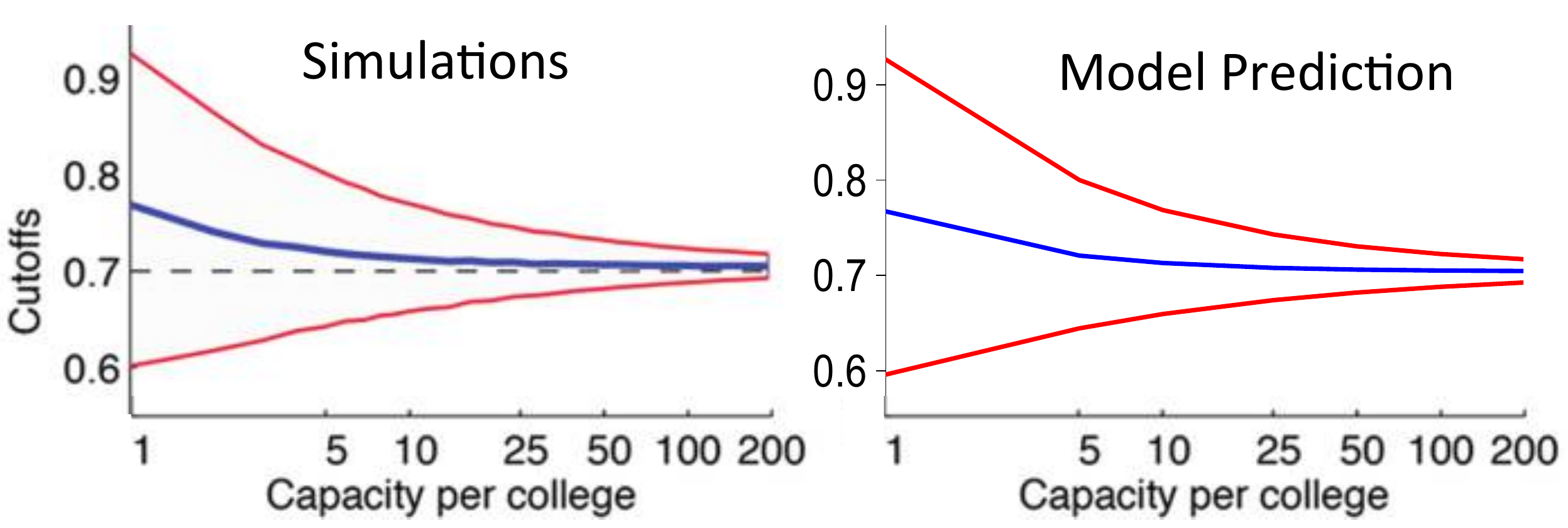}
\caption{The distribution of ``cutoff scores" required for admission in a class of examples considered by \citet{azevedo-leshno_2016}. In this market there are ten schools, and the $x$-axis denotes the number of seats per school, which ranges from $1$ to 200. In all scenarios, there are twice as many students as seats. Students submit complete lists drawn uniformly at random, and schools' priority scores are imperfectly correlated: they consist of the average of the student's quality (drawn uniformly on $[0,1]$) and iid student-school terms (also drawn uniformly on $[0,1]$). This correlation renders direct analysis of the finite random market intractable. The left panel displays simulation results reported by \citet{azevedo-leshno_2016}: the blue line shows the empirical average cutoff score, with the red lines representing the $5^{th}$ and $95^{th}$ percentile of the empirical distribution. Their continuum model predicts a deterministic cutoff score, shown by the black dotted line. Note that this prediction does not depend on the number of seats at each school, and does not capture the uncertainty in cutoffs, which is significant unless school capacities are large. Our proposed alternative predicts the {\em distribution} of cutoff scores. The right panel shows the average, $5^{th}$ percentile and $95^{th}$ percentile of the predicted distribution. \todo{Is it better to put in full panel?}
} \label{fig:azevedo-leshno}
\end{figure}

\subsection{Our Contributions}

In Section \ref{sec:poisson}, we introduce a model that addresses this shortcoming, and predicts a distribution (rather than a point estimate) for each school's cutoff score. As noted above, if $D_h(P)$ denotes {\em expected} demand at school $h$ when all schools use cutoff scores given by $P$, then {\em realized} demand at $h$ will follow a binomial distribution with parameters $n$ and $D_h(P)/n$. If the number of students $n$ is modestly large, this is well-approximated by a Poisson distribution with mean $D_h(P)$. Motivated by these observations, our model assumes that for any potential cutoff $p \in [0,1]$, the realized demand at school $h$ follows a Poisson distribution. 



We use this assumption to calculate a cutoff score distribution for each school. Of course, the calculation for school $h$ does not assume that schools other than $h$ use deterministic cutoff scores. Instead, we look for a set of ``self-consistent" cutoff score distributions (in a sense made precise in Section \sref{sec:stability}).

Our model, like that of \citet{azevedo-leshno_2016}, is {\em flexible} enough to permit a nearly arbitrary joint distribution of student preferences and priorities. We demonstrate its {\em tractability} by reproducing key insights from the recent work of \citet{ashlagi-kanoria-leshno_2017} in Section \sref{sec:avg-rank} (and extending these insights to settings where students submit incomplete lists and schools have multiple seats), and by providing closed-form expressions for the number of matches in a setting considered by \citet{marx-schummer_2021} in Section \sref{sec:matched-students}. Finally, our new model is {\em accurate}. As school capacities grow, its predictions converge to those of \citet{azevedo-leshno_2016}. When capacities are not large, we demonstrate numerically that our model accurately predicts a range of outcomes: Figure \tref{fig:azevedo-leshno} compares its predictions for the distribution of school cutoffs to simulations from \cite{azevedo-leshno_2016}, Figure \tref{fig:akl2} compares its predictions for students' average rank to simulations from \citet{ashlagi-kanoria-leshno_2017}, and Figure \fref{fig:ms12} compares its predictions for the number of matches formed to exact results and simulations from \citet{marx-schummer_2021}. In all three cases, the match is excellent. 

 
In addition to providing a new model which is flexible, tractable, and accurate, we provide a new framework for studying stable matching. This framework describes matchings through three different perspectives: {\em admissions functions} which describe the distribution of school cutoff scores, {\em interest functions} which describe the expected number of students at each priority level who wish to attend each school, and {\em matchings} which describe the probability that students are assigned to each school on their list.

Our framework identifies a large class of stable matching models, parameterized by a {\em measure of student types $\eta$} and a {\em vacancy function} $\mV$. We show that suitable choices of these parameters recover the traditional model of stable matching in finite markets (Proposition \tref{prop:finite}) as well as the continuum model of \citet{azevedo-leshno_2016} (Proposition \tref{prop:al}). Both of these definitions use a deterministic vacancy function $\vdet$ with range $\{0,1\}$. By contrast, our model from Section \sref{sec:poisson} uses a vacancy function $\vpois$ based on the Poisson distribution, which allows it to make probabilistic predictions.

In addition to offering a unified perspective on several definitions of stable matching, our framework identifies other definitions associated with other choices of the vacancy function $\mV$. We show that for {\em any} weakly decreasing vacancy function, classic results for stable matchings continue to hold: stable matchings always exist and form a lattice (Theorem \tref{thm:existence}), and the extreme points of this lattice can be found using a generalization of the Deferred Acceptance algorithm (defined in Section \sref{sec:da-general}). If $\eta$ has no mass points, then the ``rural hospital theorem" holds (Theorem \tref{thm:rht}). Finally, if the vacancy function $\mV$ is {\em strictly} decreasing, then there is a unique stable matching (Theorem \tref{thm:uniqueness}). This uniqueness result holds more generally than that of \citet{azevedo-leshno_2016}, and helps to explain the small core observed empirically by \citet{roth-peranson_1999}, and theoretically by \citet{immorlica-mahdian_2005}, \citet{kojima-pathak_2009}, and \citet{ashlagi-kanoria-leshno_2017}.

\begin{figure}
\captionsetup{width=\textwidth}
\centerline{
\includegraphics[width = \textwidth]{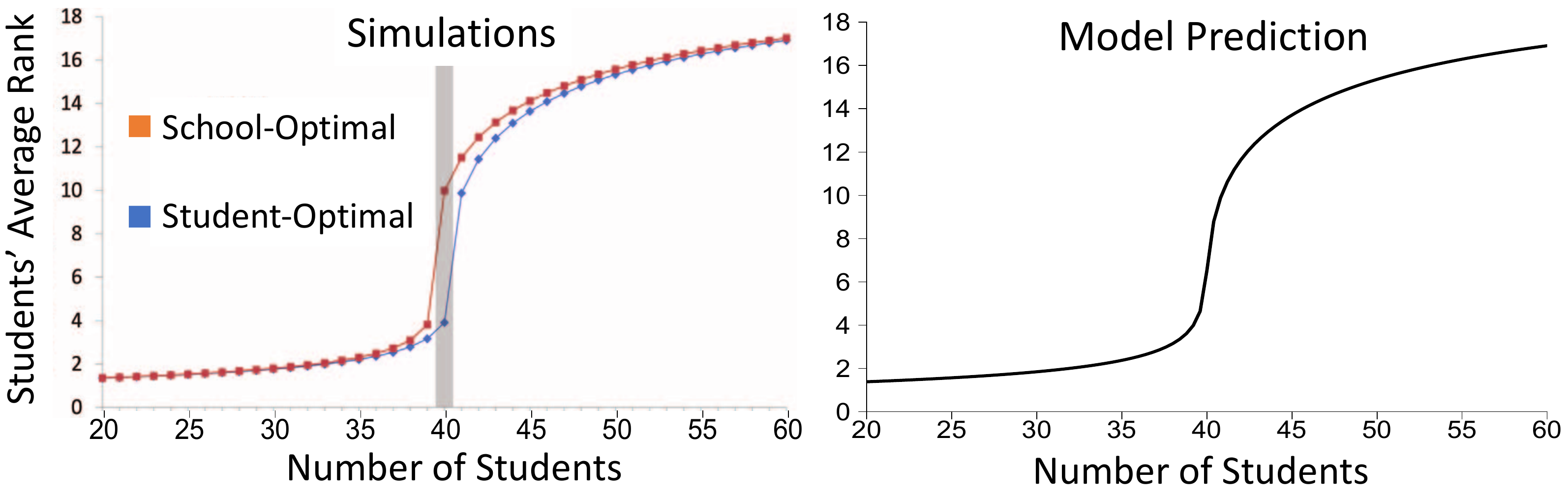}}
\caption{Students' average rank for their assigned school. In this market, there are 40 schools, each with a single seat. The number of students is given along the $x$ axis, and both student preferences and school priorities are drawn iid and uniformly at random. The left panel shows simulation results from \citet{ashlagi-kanoria-leshno_2017} demonstrating that (i) the difference between the school-optimal and student-optimal stable match is typically small, and (ii) in a balanced market (highlighted in gray), adding or removing one student has a dramatic effect. Our proposed model of stable matching generates a unique prediction (right), which closely matches the simulations and captures the dramatic effect of additional students in nearly-balanced markets.} \label{fig:akl2}
\end{figure}

\na{
\section{Exposition}

Consider a model of search. There are $m$ renters and $n > m$ available apartments. Each renter is shown one random available apartment. Apartments accept random renters. \na{Use nannies as motivation? Like school for small kids?}

Consider a market with $n$ symmetric schools, each with $C$ seats. There are $m$ students, each of which lists a random school. Schools rank students using school-specific lotteries. 

A canonical case to study is one where students and schools have iid uniform preferences and priorities, respectively.
}

\section{Model} \label{sec:stability}

There is a finite set of high schools $\H$. School $h \in \H$ has capacity $C_h \in \mathbb{N}$. We let $\H_0 = \H \cup \{\emptyset\}$ denote the set of schools along with the outside option of going unassigned, and define $C_\emptyset = \infty$. Let $\R$ be the set of complete orders over $\H_0$. We impose no restriction on the number of acceptable schools for each student (i.e. the number of schools preferred to the outside option $\emptyset$), and the order of schools ranked below $\emptyset$ will be irrelevant.

Students are characterized by their type $\theta = (\succ^\theta, {\bf p}^\theta)$, where $\succ^\theta \in \R$ indicates the student's preferences and ${\bf p}^\theta \in [0,1]^\H$ indicates the student's priority score at school $h$ (higher is better). We let $\Theta = \R \times [0,1]^\H$ denote the space of student types. Students are distributed according to a positive finite measure $\eta$ over $\Theta$. 

A fractional matching is a function $M$ mapping each $\theta \in \Theta$ to a probability distribution on $\H_0$. For each $h \in \H_0$ and $\theta \in \Theta$, the quantity $M_h(\theta)$ can be interpreted as the probability that a student of type $\theta$ is assigned to $h$.  Hereafter, we use ``matching" to mean a fractional matching, and denote the space of matchings by $\Matching$. 

We now define what it means for a matching to be {\em stable}. Our definition uses two auxiliary concepts, which are based on the perspective of individual agents. What matters to each student is the set of schools that admit them. What matters to a school is the set of students who are ``interested," meaning that they would attend if admitted. In our model, these are described by 
\begin{itemize}
\item An {\em admissions function} $A: [0,1] \rightarrow [0,1]^{\H_0}$. 
\item An {\em interest function} $\I : [0,1] \rightarrow \mathbb{R}_+^{\H_0}$.
\end{itemize}
Given $h \in \H$ and $p \in [0,1]$, $A_h(p)$ can be interpreted as the probability that a student with priority $p$ at $h$ will be admitted, while $\I_h(p)$ can be interpreted as the measure of interest in school $h$ from students whose priority at $h$ exceeds $p$. Given these interpretations, it is natural that $A_h$ should be increasing and $I_h$ should be decreasing. We let $\Admissions$ denote the set of componentwise weakly increasing functions from $[0,1]$ to $[0,1]^{\H_0}$, and let $\Interest$ denote the set of componentwise weakly decreasing functions from $[0,1]$ to $\mathbb{R}_+^{\H_0}$. 

We will define consistency conditions that link a matching $M$ to school interest $I$ and student admissions decisions $A$. Formally, we define maps $\mI : \Matching \rightarrow \interest$,  $\mA : \Interest \rightarrow \Admissions$ and $\M : \Admissions \rightarrow \Matching$, and define a stable matching as a fixed point of the composition of these maps. 

Although our approach may seem unfamiliar, our definition subsumes existing ones. The function $\mI$ depends on the measure of student types $\eta$, and the function $\mA$ depends on a {\em vacancy function $\mV$}. Section \sref{sec:finite-special-case} shows that for a particular choice of $\eta$ and $\mV$, our definition of a stable matching coincides with the absence of blocking pairs in a finite market. Section \sref{sec:al-special-case} shows that when $\eta$ is changed to a continuous measure, any matching that is stable according to our definition is associated with a set of market-clearing cutoffs, and vice versa.

\subsection{Matching to Interest} \label{sec:matching-to-interest}

Given any matching $M \in \Matching$, define $\mI(M) \in \Interest$ to be the interest function $\I^M$ such that for each $h \in \H_0$ and $p \in [0,1]$, 
\begin{equation} \I^M_h(p) =   \int {\bf 1}(p_h^\theta \ge p) (1 - \sum_{h' \succ^\theta h} M_{h'}^\theta) \, d \eta(\theta). \label{eq:continuum-interest} \end{equation}
Note that the sum in \eqref{eq:continuum-interest} gives the probability under matching $M$ that student type $\theta$ matches to a school preferred to $h$, so the interpretation of \eqref{eq:continuum-interest} is that students are interested in $h$ if they are not matched to any preferred school. The indicator ensures that the only students contributing to $\I^M_h(p)$ are those with priority above $p$ at $h$, allowing us to interpret $I_h^M(p)$ as the expected number of students with priority above $p$ who are interested in $h$.


\remove{To gain intuition for the interest function $\I^M_h$, note that if $M$ is a deterministic matching induced by a cutoff vector $P$, then $1 - \sum_{h' \succ^\theta h} M_{h'}^\theta = \prod_{h' \succ^\theta h} {\bf 1}(p_{h'}^\theta \leq P_{h'})$ is an indicator that student $\theta$ is not admitted to any school preferred to $h$ and thus by \eqref{eq:demand}, $I^M_h(p)$ gives the demand at $h$ when $P_h = p$, holding fixed the cutoffs $\{P_{h'}\}_{h' \neq h}$.}

\subsection{Interest to Admissions}

The interest function $\I$ describes expected interest at each school $h \in \H$ and priority level $p \in [0,1]$. From this, we wish to determine an admissions function $A : [0,1] \rightarrow [0,1]^\H$, where $A_h(p)$ is interpreted as the probability that a student with priority $p$ at $h$ will be admitted to $h$ (equivalently, the probability that school $h$ has a final cutoff below $p$). We define $A$ using a {\em vacancy function} $\mV: \mathbb{R}_+ \times \mathbb{N} \rightarrow [0,1]$. Formally, we let $\mA(\I)$ be the admissions function $A^\I \in \Admissions$ that satisfies, for each $h \in \H$ and $p \in [0,1]$,
\begin{equation} A_h^\I(p) = \mV(\I_h(p), C_h). \label{eq:admissions}\end{equation}
We define $A_\emptyset^\I(p) = 1$ for all $p \in [0,1]$ (students are always admitted to the outside option).

The choice of vacancy function is an important feature of the model, and one of our key innovations. The quantity $\mV(\lambda,C)$ is interpreted as the probability that when {\em expected} interest is equal to $\lambda$, {\em realized} interest will be below $C$. Thus, if schools consider students in descending order of priority, $\mV(\I_h(p), C_h)$ gives the probability that school $h$ will still have at least one vacancy when it considers a student with priority $p$.

One natural choice of vacancy function is 
\begin{equation} \vdet(\lambda,C) = {\bf 1}(\lambda < C) \hspace{.3 in} \forall \lambda \in \mathbb{R}_+, C \in \mathbb{N}, \label{eq:vdet} \end{equation}
In other words, realized interest is {\em det}erministically equal to expected interest, and there is still a vacancy if and only if expected interest is below the school's capacity. This choice produces a deterministic prediction for each student type $\theta$ and each school $h$. We show in Sections \sref{sec:finite-special-case} and \sref{sec:al-special-case} that this choice of vacancy function allows us to recover the definition of stability in a finite market, as well as the definition used by \citet{azevedo-leshno_2016}.

In Section \sref{sec:poisson}, we use the following alternative choice of vacancy function, which assumes that when expected interest is equal to $\lambda$, realized interest follows a Poisson distribution with mean $\lambda$. \todo{Explain more?}
\begin{equation} \vpois(\lambda,C) = \sum_{k = 0}^{C-1} \frac{e^{-\lambda} \lambda^k}{k!}. \label{eq:poisson} \end{equation}
Note that this choice produces admissions probabilities in $(0,1]$, reflecting the uncertainty facing participants in finite random matching markets.




\subsection{Admissions to Matching}

Recall that an admissions function $A$ describes the probability that a student of any given priority $p \in [0,1]$ will be admitted to each school. From this, we construct an associated fractional matching $\M(A) = M^A$ given by
\begin{equation} M_h^A(\theta)  =  A_h(p_h^\theta) \prod_{h' \succ^\theta h} (1 - A_{h'}(p_{h'}^\theta)). \label{eq:enroll-prob}  \end{equation}
This says that a student matches to $h$ if and only if she is admitted to $h$ and not to any preferred school. Note that this formula implicitly assumes independence of admissions outcomes across schools. A straightforward inductive argument implies that for any $A \in \Admissions$, $\theta \in \Theta$ and $h \in \H_0$,
\begin{equation} 1 - \sum_{h' \succ^{\theta} h} M_{h'}^A(\theta) = \prod_{h' \succ^{\theta} h} (1 - A_{h'}(p_{h'}^\theta)),  \label{eq:unravel} \end{equation}
with both sides interpreted to be $1$ if $h$ is the first choice of $\theta$. 


\subsection{Definition of Stability}

The two key parameters to our model are the measure of student types $\eta$ (which determines the function $\mI: \Matching \rightarrow \Interest$), and vacancy function $\mV$ (which determines the function $\mA : \Interest \rightarrow \Admissions$). For any $\eta$ and $\mV$, we employ the following definitions of stability. 

\na{Say ($\eta,\mV$)-stable, or just $\mV$-stable? $\eta$ included in $\mathcal{E}$, and I am already making the dependence on $C$ implicit. Could use $(\mE,\mV)$-stable.)}
\begin{definition} \label{def:stability} \text{ }

A matching $M \in \Matching$ is $(\eta, \mV)$-{\bf stable} if $M = \M(\mA(\mI(M)))$. 

An admissions function $A \in \Admissions$ is $(\eta, \mV)$-{\bf stable} if $A = \mA(\mI(\M(A)))$. 

An interest function $I \in \Interest$ is $(\eta, \mV)$-{\bf stable} if $\I = \mI(\M(\mA(\I)))$. 

An outcome $(M, \I, A) \in \Matching \times \Interest \times \Admissions$ is $(\eta,\mV)$-{\bf stable} if $M = \M(A), \I = \mI(M)$, and $A = \mA(I)$.
\end{definition}
Definition \ref{def:stability} implies that there is a one-to-one correspondence between stable matchings, stable admissions functions, stable interest functions, and stable outcomes. We include each of the definitions above because it is sometimes most convenient to work with stable matchings, and at other times simpler to work with stable interest functions or stable outcomes.

\section{Results} \label{sec:results}

Our definition of stability in Definition \ref{def:stability} may seem strange to those familiar with more traditional definitions based on the absence of blocking pairs, or cutoffs that clear the market. It more closely resembles fixed-point characterizations of stable matchings by \citet{adachi_2000}, \citet{fleiner_2003}, and \citet{echenique_2004}. Our first results bridge this divide by showing that when using the deterministic vacancy function $\vdet$ from \eqref{eq:vdet}, our definition encompasses more traditional definitions as special cases. Section \ref{sec:finite-special-case} shows that in finite markets, our definition corresponds to the absence of blocking pairs. Section \ref{sec:al-special-case} shows that in continuum markets, our definition is equivalent to one based on market-clearing cutoffs.

While these results both assume that the vacancy function $\mV$ is as given in \eqref{eq:vdet}, we proceed to establish that for {\em any} $\eta$ and $\mV$, several classic results hold: the set of stable matchings is a non-empty lattice, the extreme points of this lattice can be found using the deferred acceptance algorithm, and the rural hospital theorem applies. Finally, we prove that if $\eta$ has strict priorities and $\mV$ is strictly decreasing, there is a unique stable matching. 



\subsection{Finite Markets: Stability = No Blocking Pairs} \label{sec:finite-special-case}

Traditionally, stable matching problems involve a finite set of students $\S \subset \Theta$. We adopt the standard assumption that $\S$ has ``strict priorities": no two students in $\S$ have identical priority at any school. 
Given $h \in \H$ and $p \in [0,1]$, define
\begin{equation} \indif_h(p) = \{ \theta : h \succ^\theta \emptyset, p_h^\theta = p\}\label{eq:indifference-set} \end{equation}
to be the set of student types that consider school $h$ acceptable and have priority $p$ at school $h$. 

\begin{definition}[Strict Priorities] \label{def:strict} \text{ } 

A finite subset $\S \subset \Theta$ has {\bf strict priorities} if for each $h \in \H$ and $p \in [0,1]$, $|\S \cap \indif_h(p) | \leq 1$.

A positive measure $\eta$ on $\Theta$ has {\bf strict priorities} if for each $h \in \H$ and $p \in [0,1]$, $\eta(\indif_h(p)) = 0$.
\end{definition} 
The second part of this definition is motivated by the study of {\em random} finite matching markets, where $\S$ is generated by drawing student types iid from some measure $\eta$ over $\Theta$. In this case, the condition above ensures that $\S$ has strict priorities with probability one.\footnote{The assumption that there are no ties is essential to many of our results. This is not an artifact of our definitions or proof techniques, but rather reflects fundamental challenges to defining stable matchings with indifferences.}

We now give a version of the traditional definition of stability based on the absence of blocking pairs. We refer to this concept as ``no blocking pairs" to distinguish it from the definition of stability in Definition \ref{def:stability}.

\begin{definition}[No Blocking Pairs] \label{def:finite-stability}
Given any finite set $\S \subset \Theta$, an $\S$-matching is a function $\mu: \S \to \H_0$. 
An $\S$-matching $\mu$ is {\bf feasible} if for each $h \in \H_0$, 
\begin{equation} | \{\theta \in \S : \mu(\theta) = h\}| \leq C_h. \end{equation}
An $\S$-matching $\mu$ {\bf has no blocking pairs} if it is feasible, and for each $\theta' \in S$ and each $h \in \H_0$ such that $h \succ^{\theta'} \mu(\theta')$, 
\begin{equation} | \{\theta \in \S : \mu(\theta) = h, p_h^\theta > p_{h}^{\theta'}\}| = C_h. \label{eq:no-blocking} \end{equation}
\end{definition}

Definition \ref{def:finite-stability} states that a feasible $\S$-matching has no blocking pairs if for each student $\theta' \in \S$, each school that $\theta'$ prefers to its assignment is filled with higher-priority students. Note that this implies individual rationality: because $C_{\emptyset} = \infty$, the outside option is never filled to capacity. Therefore, if $\mu$ has no blocking pairs, then it does not assign any student to a school that she considers inferior to the outside option. 

Our first result is to show that our definition of stability corresponds with the traditional definition based on the absence of blocking pairs. To state this result formally, we note that each finite set $\S \subset \Theta$ is naturally associated with an associated counting measure $\eta^\S$ over $\Theta$, defined by
\begin{align} 
\eta^\S(\tilde{\Theta}) & = | \tilde{\Theta} \cap \S| & \forall \tilde{\Theta} \subseteq \Theta. \label{eq:etaS}
\end{align}

Similarly, there is a natural correspondence between $\S$-matchings (which define an assignment only for student types in $\S$) and deterministic matchings (which define an assignment for all types in $\Theta$). Any deterministic matching $M$ naturally defines an $\S$-matching $\mu^M$: for each $\theta \in \S$, let
\begin{equation}\mu^M(\theta) = h \Leftrightarrow M_h(\theta) = 1.\end{equation}
Similarly, each $\S$-matching $\mu$ naturally induces a deterministic matching $M^\mu$ as follows. Define the admissions outcome $A^\mu$ by
\begin{equation} A_h^\mu(p) = {\bf 1}(| \{\theta \in \S : p_h^\theta > p,  \mu(\theta) = h\}| < C_h),\label{eq:amu} \end{equation} 
and define $M^\mu = \M(A^\mu)$. In other words, \eqref{eq:amu} says that student $\theta \in \Theta$ is admitted to $h$ if there are fewer than $C_h$ higher-priority students from $\S$ matched to $h$ under $\mu$, and $M^\mu(\theta)$ is the matching that results when each student type $\theta$ is assigned to its most-preferred school among those where it is admitted.

The following result says that if priorities are strict, then the functions $M \to \mu^M$ and $\mu \to M^\mu$ define a bijection between the set of $(\eta^\S, \vdet)$-stable matchings, and the set of $\S$-matchings with no blocking pairs. The proof of this result is deferred to Appendix \sref{app:prop-finite}.

\begin{proposition}[No Blocking Pairs Corresponds to a Stable Matching] \label{prop:finite}
Let $\S$ be a finite subset of $\Theta$ with strict priorities. If $M$ is a $(\eta^\S, \vdet)$-stable matching, then $\mu^M$ has no blocking pairs. If $\mu$ is an $\S$-matching with no blocking pairs, then $M^\mu$ is $(\eta^\S, \vdet)$-stable, and $\mu^{M^{\mu}} = \mu$.
\end{proposition}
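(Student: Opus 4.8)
The plan is to handle the three claims through a single bridging observation. Under the deterministic vacancy function $\vdet$ of \eqref{eq:vdet}, every admissions function produced by $\mA$ takes values in $\{0,1\}$, so by \eqref{eq:enroll-prob} any $(\eta^\S,\vdet)$-stable matching $M$ is deterministic and assigns each $\theta$ to its favorite school among those with $A_h(p_h^\theta)=1$ (always including $\emptyset$); hence $\mu^M$ is well defined and individually rational. Since $\I_h$ is weakly decreasing, \eqref{eq:admissions} makes $A_h$ a threshold function with cutoff $P_h=\inf\{p:\I_h(p)<C_h\}$. The heart of the argument is to show that, evaluated at the priority $p_h^\theta$ of an actual student of $\S$, the quantity $\I_h(p_h^\theta)$ counts exactly the students of $\S$ with strictly higher priority at $h$ who are ``available'' for $h$ (not admitted to any preferred school), so that \eqref{eq:admissions} reduces to the finite ``seat remaining'' indicator of \eqref{eq:amu}.

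For the forward direction, suppose $M$ is $(\eta^\S,\vdet)$-stable with associated $A=\mA(\mI(M))$. Using \eqref{eq:unravel}, $\I_h(p)$ equals the number of students of $\S$ available for $h$ with priority above $p$; ordering these available students by decreasing priority shows that the highest-priority ones, up to $C_h$ of them, clear the cutoff and match to $h$, which gives feasibility. For no blocking pairs, take $\theta'$ and $h\succ^{\theta'}\mu^M(\theta')$. Then $\theta'$ is not admitted to $h$, so $A_h(p_h^{\theta'})=0$ and $\I_h(p_h^{\theta'})\ge C_h$; thus at least $C_h$ available students outrank $\theta'$ at $h$, and since the highest-priority available students are precisely those matched to $h$, at least $C_h$ students matched to $h$ outrank $\theta'$. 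Combined with feasibility, this yields the equality in \eqref{eq:no-blocking}.

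For the reverse direction, let $\mu$ have no blocking pairs and set $M^\mu=\M(A^\mu)$. I would first prove $\mu^{M^\mu}=\mu$: each $\theta$ is admitted to $\mu(\theta)$ because feasibility leaves at most $C_{\mu(\theta)}-1$ strictly-higher-priority students there, while \eqref{eq:no-blocking} shows every school $\theta$ prefers to $\mu(\theta)$ is filled by strictly-higher-priority students, so $A^\mu$ rejects $\theta$ there; hence $\theta$'s favorite admitted school is exactly $\mu(\theta)$. With $\mu^{M^\mu}=\mu$ in hand, $1-\sum_{h'\succ^\theta h}M^\mu_{h'}(\theta)$ is the indicator that $\mu(\theta)\preceq^\theta h$, so $\mI(M^\mu)_h(p)$ counts students with priority above $p$ whose $\mu$-assignment is not preferred to $h$. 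I would then verify $\mA(\mI(M^\mu))=A^\mu$ by a short case analysis on $p$: if some counted student with priority above $p$ is matched by $\mu$ to a school strictly worse than $h$, then that student envies $h$ and \eqref{eq:no-blocking} forces $h$ to be full above $p$, so both admission indicators vanish; otherwise the two counts coincide exactly. Hence $\M(\mA(\mI(M^\mu)))=\M(A^\mu)=M^\mu$, so $M^\mu$ is stable, and $\mu^{M^\mu}=\mu$ completes the bijection.

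The main obstacle is the bridging observation itself, namely the bookkeeping at a student's own priority level. The continuum integral in \eqref{eq:continuum-interest} is insensitive to a single point under a non-atomic measure, but under the counting measure $\eta^\S$ its value at $p=p_h^\theta$ must be read as counting students strictly above $\theta$; it is exactly the strict-priorities hypothesis (Definition \ref{def:strict}) that renders ``the student at priority $p$'' unambiguous and lets the weakly decreasing step function $A_h$ line up with the finite seat-counting rule \eqref{eq:amu}. A secondary subtlety is that being ``available for $h$'' is weaker than being ``matched to $h$''; closing this gap is precisely where the no-blocking-pairs equality \eqref{eq:no-blocking}, and conversely the cutoff structure of a stable $A$, do the work.
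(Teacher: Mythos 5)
Your proposal is correct and follows essentially the same route as the paper: your ``bridging observation'' is exactly the paper's Lemma \ref{lem:enrollment-finite} (that enrollment above priority $p$ equals $\min(I_h(p),C_h)$, with strict priorities handling the unit steps at the cutoff), your forward direction uses it the same way, and your reverse direction matches the paper's two-step structure of first proving $\mu^{M^\mu}=\mu$ and then verifying $A^\mu=\mA(\mI(M^\mu))$ via the same case analysis in which \eqref{eq:no-blocking} collapses interest to enrollment. The boundary subtlety you flag (reading $\I_h(p_h^\theta)$ as a strict count, plus the careful treatment of a possible student at the cutoff itself) is precisely what the paper's lemma and its feasibility computation handle explicitly, so your sketch fills in to the same proof.
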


\subsection{Continuum Markets: Stability = Market-Clearing Cutoffs} \label{sec:al-special-case}

\citet{azevedo-leshno_2016} provide a continuum model in which a market is described by a positive measure $\eta$ over $\Theta$ and a stable matching is described by a vector of priority cutoffs $P \in [0,1]^\H$. Students are admitted to school $h$ if and only if their priority at $h$ exceeds its cutoff $P_h$. They define demand for school $h$ at cutoffs $P$ by\footnote{An astute and informed reader might notice that our choice of $A^P$ assumes that student types $\theta$ with $p_h^\theta = P_h$ are not admitted to $h$, whereas \citet{azevedo-leshno_2016} assume that they are admitted. Because $\eta$ is a continuum measure with strict priorities in both cases, this distinction is of consequence only to sets of $\eta$-measure zero.}
\begin{align}
\demand_h(P) = \int {\bf 1}(p_h^\theta > P_h) \prod_{h' \succ^\theta h} {\bf 1}(p_{h'}^\theta \leq P_{h'}) d\eta(\theta). \label{eq:demand}
\end{align}
That is, demand for $h$ at cutoffs $P$ is equal to the measure of students who are admitted to $h$ and are not admitted to any school that they prefer to $h$. Each cutoff vector is naturally associated with a deterministic matching in which students attend the school that they demand. The definition of stability used by \citet{azevedo-leshno_2016} is that the cutoff vector should clear the market.
\begin{definition}
A cutoff $P \in [0,1]^\H$ is $\eta$-{\bf market-clearing} if $\demand_h(P) \leq C_h$ for all $h \in \H$, with equality if $P_h > 0$. \label{def:market-clearing}
\end{definition}

In this section, we show that in continuum markets, a cutoff vector $P$ is $\eta$-market-clearing if and only if a corresponding interest function is $(\eta, \vdet)$-stable. To formalize this claim, we first define a natural associate between cutoff vectors and interest functions. Each cutoff vector $P$ is naturally associated with an interest function $I^P$ defined for each $h \in \H$ and $p \in [0,1]$ by
\begin{equation} I^P_h(p)  = \int {\bf 1}(p_h^\theta > p) \prod_{h' \succ^\theta h} {\bf 1}(p_{h'}^\theta \leq P_{h'}) d\eta(\theta). \label{eq:iP}\end{equation}
That is, students contribute to this quantity if they have priority above $p$ at $h$ and do not ``clear the cutoff" at any school that they prefer. Note that when $p = P_h$, we have $I^P_h(P_h) = D_h(P)$.
 
Conversely, from any interest function $I \in \Interest$, we can define the associated cutoffs  $\P(\I) = \{ \P_h(\I) \}_{h\in \H} \in [0,1]^\H$ by
 \begin{equation} \P_h(\I) = \inf \{p \geq 0 : \I_h(p) < C_h\}. \label{eq:cutoffs} \end{equation}

Equation \eqref{eq:iP} defines a mapping from cutoffs to interest functions, while \eqref{eq:cutoffs} defines mapping from interest functions to cutoffs. It turns out that these mappings take stable interest functions to market-clearing cutoffs, and vice versa.\footnote{We briefly comment on a subtlety that explains why Proposition \tref{prop:al} is stated in terms of the interest function $I^P$ rather than the admissions function $A^P$ or the matching $M^P$. In general, multiple market-clearing cutoffs may correspond to the same stable matching (up to a set of measure zero). For example, suppose that there is a single school $h$ with capacity $C$, and that the total measure of students is $\eta(\Theta) = 2C$, with priorities uniformly distributed on $(0,1/3) \cup (2/3,1)$. Then any cutoff $P \in [1/3,2/3]$ clears the market. Our definition of stability eliminates this redundancy: the unique $(\eta,\vdet)$-stable matching corresponds to a cutoff of $2/3$ and leaves students in $[0,2/3]$ unassigned. Thus, if $P \in [1/3,2/3)$, $P$ clears the market but $M^P$ is not $(\eta,\vdet)$-stable. By contrast, for any $P$, $I^P(p) =  \eta(\{\Theta : p_h^\theta > p\})$ is a stable interest function.
}
 \begin{proposition}[Market-Clearing Cutoffs Correspond to Stable Interest Functions] \text{ } \\
\label{prop:al}
Let $\eta$ have strict priorities. If $P$ is $\eta$-market-clearing, then $I^P$ is $(\eta,\vdet)$-stable. If $I$ is $(\eta,\vdet)$-stable, then $\P(I)$ is $\eta$-market-clearing, and $I = I^{\P(I)}$.
\end{proposition}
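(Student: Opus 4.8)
The plan is to exploit the one-to-one correspondence built into Definition~\ref{def:stability}, reducing both directions to statements about the cutoff map $\P(\cdot)$ and the assignment map $I^{(\cdot)}$. Throughout I will lean on two consequences of strict priorities. First, for every $h$ and $p$ the set of types that find $h$ acceptable and have $p_h^\theta = p$ is $\eta$-null, which lets me freely swap ${\bf 1}(p_h^\theta \ge p)$ for ${\bf 1}(p_h^\theta > p)$ inside the integrals defining interest functions. Second, since each $I_h$ is weakly decreasing, the threshold admissions function $A_h(p) = \vdet(I_h(p),C_h) = {\bf 1}(I_h(p) < C_h)$ agrees with ${\bf 1}(p > \P_h(I))$ for every $p \neq \P_h(I)$, hence for $\eta$-a.e.\ type. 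I record that the key identity $\mI(\M(\mA(I)))_h(p) = \int {\bf 1}(p_h^\theta \ge p)\prod_{h'\succ^\theta h}(1 - A_{h'}(p_{h'}^\theta))\,d\eta$ holds by \eqref{eq:unravel}.

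For the second claim (stable interest function $\Rightarrow$ market-clearing cutoffs), I would set $P = \P(I)$ and expand the stability identity. The remark above gives $1 - A_{h'}(p_{h'}^\theta) = {\bf 1}(p_{h'}^\theta \le P_{h'})$ for a.e.\ $\theta$, so the key identity becomes $\int {\bf 1}(p_h^\theta \ge p)\prod_{h'\succ^\theta h}{\bf 1}(p_{h'}^\theta \le P_{h'})\,d\eta$, which equals $I^P_h(p)$ after swapping $\ge$ for $>$; since the left side is $I_h(p)$ by stability, this yields $I = I^P = I^{\P(I)}$, the last assertion. Market-clearing then follows from $D_h(P) = I^P_h(P_h) = I_h(P_h)$ together with one-sided limits: $I^P_h$ is right-continuous (monotone convergence, as ${\bf 1}(p_h^\theta > p) \uparrow {\bf 1}(p_h^\theta > P_h)$ as $p \downarrow P_h$), so $I_h(P_h) = \lim_{p\downarrow P_h} I_h(p) \le C_h$; and when $P_h > 0$ the infimum definition gives $I_h(p) \ge C_h$ for $p < P_h$, while the left limit equals $I_h(P_h)$ (the extra mass at $p_h^\theta = P_h$ being null by strict priorities), forcing $D_h(P) = C_h$.

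For the first claim (market-clearing $\Rightarrow$ stable), the identical computation applied to $I^P$ shows $\mI(\M(\mA(I^P))) = I^{\hat P}$, where $\hat P := \P(I^P)$, so stability of $I^P$ is \emph{equivalent} to the identity $I^{\hat P} = I^P$. The main obstacle is that $\hat P$ need not equal $P$: as the footnote's example illustrates, market-clearing cutoffs are not unique, and in general one only has $\hat P \ge P$ componentwise, with possibly strict gaps. The crux is thus to show these gaps carry no relevant mass. I would verify that on each interval $[P_{h'}, \hat P_{h'})$ the function $I^P_{h'}$ is constant: it is $\ge C_{h'}$ below $\hat P_{h'}$ by definition of the infimum, and $\le I^P_{h'}(P_{h'}) = D_{h'}(P) = C_{h'}$ at and above $P_{h'}$ (market-clearing delivering $D_{h'}(P) = C_{h'}$ precisely when the gap is nontrivial). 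Constancy means the $\eta$-measure of types interested in $h'$ under $P$ whose priority at $h'$ lies in the open gap is zero.

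Finally I would convert this ``no mass in the gaps'' fact into $I^{\hat P} = I^P$ by comparing the cutoff assignments $\sigma^P(\theta)$ and $\sigma^{\hat P}(\theta)$, each defined as the $\succ^\theta$-favorite school (or $\emptyset$) whose cutoff $\theta$ clears. Since $\hat P \ge P$, the cleared set shrinks, so $\sigma^{\hat P}(\theta)$ is always weakly less preferred than $\sigma^P(\theta)$; and if the two differ, then $h' := \sigma^P(\theta)$ satisfies $p_{h'}^\theta \in (P_{h'}, \hat P_{h'}]$ with $\theta$ interested in $h'$ under $P$, which by the previous paragraph (plus strict priorities to discard the endpoint $p_{h'}^\theta = \hat P_{h'}$) occurs only on an $\eta$-null set. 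A finite union over $h'$ gives $\sigma^P = \sigma^{\hat P}$ a.e., so the products $\prod_{h'\succ^\theta h}{\bf 1}(p_{h'}^\theta \le \cdot)$ agree a.e.\ and $I^P_h = I^{\hat P}_h$ for every $h, p$, completing the reduction. I expect this gap analysis to be the only delicate step; everything else is bookkeeping with monotone functions and $\eta$-null sets.
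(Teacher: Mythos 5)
Your proposal is correct and takes essentially the same route as the paper's proof: both directions reduce to the identity $I = I^{\P(I)}$ (respectively $I^{\P(I^P)} = I^P$), with the crux in the market-clearing-to-stable direction being that $\P(I^P) \ge P$ componentwise, that $I^P_h$ is constant (equal to $C_h$) on each nontrivial gap, and that the resulting $\eta$-null sets, unioned over the finitely many schools, leave the interest function unchanged — your comparison of the assignments $\sigma^P$ and $\sigma^{\hat P}$ is exactly the paper's argument that the discrepancy set $\Delta$ is contained in $\bigcup_{h'} \Delta_{h'}$. The only cosmetic difference is that you verify $D_h(\P(I)) \le C_h$ (with equality when $\P_h(I) > 0$) directly from monotonicity, right-continuity, and atomlessness, whereas the paper routes this step through its enrollment identity, Lemma \ref{lem:enrollment}.
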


We prove this result in Appendix \sref{app:prop-al}.

\subsection{Existence and Lattice Structure}  \label{sec:existence}

Having established that when $\mV = \vdet$, our definition of stability nests existing definitions, we now prove results for general type measures $\eta$ and vacancy functions $\mV$. The first of these results shows that stable matchings always exist and form a lattice. To state this result, we define the following partial orders:
\begin{itemize}
\item $M \succeq^\Matching \tilde{M}$ if for each $h \in \H_0$ and $\theta \in \Theta$, 
\[\sum_{h' \succeq^\theta h} M_{h'}(\theta) \geq \sum_{h' \succeq^\theta h} \tilde{M}_{h'}(\theta).\]
That is, $M \succeq^\Matching \tilde{M}$ if each student prefers $M$ to $\tilde{M}$ in the sense of first-order stochastic dominance.
\item $A \succeq^\Admissions \tilde{A}$ if for each $h \in \H_0$ and $p \in [0,1]$, $A_{h}(p) \geq \tilde{A}_h(p)$. \\
That is, $A \succeq^\Admissions \tilde{A}$ if admissions probabilities are uniformly higher under $A$.
\item $\I \succeq^\Interest \tilde{\I}$ if for each $h \in \H_0$ and $p \in [0,1]$, $\I_{h}(p) \geq \tilde{\I}_h(p)$. \\
That is, $\I \succeq^\Interest \tilde{\I}$ if each school receives more interest at every cutoff under $\I$.
\item $(M,A,\I) \succeq (\tilde{M}, \tilde{A},\tilde{\I})$ if $M \succeq^\Matching \tilde{M}$, $A \succeq^\Admissions \tilde{A}$, and $\tilde{\I} \succeq^\Interest \I$.
\end{itemize}

\begin{theorem}[Existence and Lattice Structure] \label{thm:existence} 
If the vacancy function $\mV$ is weakly decreasing in its first argument, then for any $(\H, {\bf C}, \eta)$, the set of $(\eta,\mV)$-stable outcomes is non-empty, and forms complete lattice with partial order $\succeq$.
\end{theorem}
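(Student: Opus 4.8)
The plan is to recast the stability condition as a fixed-point problem for a single \emph{monotone} self-map and then invoke the Knaster--Tarski theorem. I would work in the space $\Admissions$ of componentwise weakly increasing functions $[0,1]\to[0,1]^{\H_0}$, as this is the most convenient complete lattice, and consider the composite map $\Psi = \mA\circ\mI\circ\M : \Admissions \to \Admissions$. By Definition~\ref{def:stability}, the fixed points of $\Psi$ are exactly the $(\eta,\mV)$-stable admissions functions, so it suffices to show that $\Psi$ is order-preserving on a complete lattice and then apply Knaster--Tarski to conclude that its fixed-point set is a non-empty complete lattice. I would first note that $(\Admissions, \succeq^\Admissions)$ is indeed a complete lattice: the pointwise supremum and infimum of any family in $\Admissions$ are again componentwise weakly increasing and $[0,1]$-valued, hence lie in $\Admissions$. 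No measurability difficulty arises, since weakly increasing functions are automatically Borel, so $\mI\circ\M$ stays well defined on joins and meets. (The same argument makes $\Interest$ a complete lattice, which is useful for the transfer step.)

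The heart of the argument is the monotonicity of the three structural maps, with careful bookkeeping of directions. Using the telescoping identity \eqref{eq:unravel}, one computes $\sum_{h'\succeq^\theta h} M^A_{h'}(\theta) = 1 - \prod_{h'\succeq^\theta h}\bigl(1 - A_{h'}(p_{h'}^\theta)\bigr)$, which is increasing in $A$ for every $\theta$ and $h$; hence $\M$ is order-preserving from $(\Admissions,\succeq^\Admissions)$ to $(\Matching,\succeq^\Matching)$. In the interest formula \eqref{eq:continuum-interest}, the weight $1 - \sum_{h'\succ^\theta h}M_{h'}(\theta)$ is decreasing in $M$ (a student matched to a more preferred school contributes less residual interest), so $\mI$ is order-reversing from $(\Matching,\succeq^\Matching)$ to $(\Interest,\succeq^\Interest)$. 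Finally, since $\mV$ is weakly decreasing in its first argument, the defining relation \eqref{eq:admissions} shows $\mA$ is order-reversing from $(\Interest,\succeq^\Interest)$ to $(\Admissions,\succeq^\Admissions)$. Composing, the two reversals cancel, so $\Psi=\mA\circ\mI\circ\M$ is order-preserving, as required.

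It then follows from Knaster--Tarski that the set of fixed points of $\Psi$, i.e.\ the $(\eta,\mV)$-stable admissions functions, is a non-empty complete lattice under $\succeq^\Admissions$. To finish, I would transfer this structure to stable outcomes. By Definition~\ref{def:stability}, the assignment $A \mapsto \bigl(\M(A),\, A,\, \mI(\M(A))\bigr)$ is a bijection from stable admissions functions onto stable outcomes, with inverse given by projection onto the admissions coordinate. Because $\M$ is order-preserving and $\mI\circ\M$ is order-reversing, and because the outcome order $\succeq$ was defined precisely so that $M$ and $A$ increase while $\I$ decreases, this bijection is an order isomorphism; an order isomorphism carries a complete lattice to a complete lattice and preserves its intrinsic joins and meets. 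Hence the set of $(\eta,\mV)$-stable outcomes is a non-empty complete lattice under $\succeq$, and the induced bijections of Definition~\ref{def:stability} yield the same conclusion for stable matchings and stable interest functions.

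I expect the main obstacle to be the transfer step rather than the fixed-point argument itself. The lattice operations that Knaster--Tarski produces on the fixed-point set of $\Psi$ need not coincide with the ambient pointwise operations on $\Admissions$, so one cannot read off joins and meets of stable outcomes from pointwise formulas; instead one must argue at the level of order isomorphisms that the correspondence between stable admissions functions and stable outcomes respects joins and meets. Getting the reversal bookkeeping exactly right in the definition of $\succeq$ is what makes this correspondence a genuine isomorphism rather than merely a bijection.
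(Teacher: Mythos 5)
Your proposal is correct and takes essentially the same approach as the paper: it establishes the identical three monotonicity facts ($\M$ order-preserving, $\mI$ and $\mA$ order-reversing), applies Tarski's fixed point theorem to the order-preserving composite, and transfers the lattice structure via the observation that stable outcomes are ordered under $\succeq$ exactly when one of their components is -- the paper's proof does all of this with $\xi = \M \circ \mA \circ \mI$ on $\Matching$ rather than your $\Psi = \mA \circ \mI \circ \M$ on $\Admissions$, a cosmetic re-siting of the same argument. Your added care in verifying that the ambient space is a complete lattice and that $\mI \circ \M$ remains well defined (monotone admissions functions being Borel) addresses prerequisites the paper leaves implicit, and is sound.
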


\begin{proof}[Proof of Theorem \tref{thm:existence}]
Define the function $\xi : \Matching \rightarrow \Matching$ by 
\begin{equation} \xi(M) =\M(\mA(\mI(M))). \end{equation}
Note that
\begin{itemize}
\item By \eqref{eq:continuum-interest}, $\tilde{M} \succeq^\Matching M$ implies $\mI(M) \succeq^\Interest \mI(\tilde{M})$. 
\item By \eqref{eq:admissions} and monotonicity of $\mV$, $\I \succeq^\Interest \tilde{\I}$ implies $\mA(\tilde{\I}) \succeq^\Admissions \mA(\I)$.
\item By \eqref{eq:enroll-prob}, $\tilde{A} \succeq^{\Admissions} A$ implies $\M(\tilde{A}) \succeq^{\Matching} \M(A)$.
\end{itemize}
From this, we draw two conclusions. First, if $(M,\I,A)$ and $(\tilde{M},\tilde{\I}, \tilde{A})$ are stable outcomes, then $(M,\I,A) \succeq (\tilde{M},\tilde{\I}, \tilde{A})$ if and only if $M \succeq^\Matching \tilde{M}$. Second, the function $\xi$ is an order preserving function, so Tarski's fixed point theorem implies that the set of fixed points of $\xi$ (that is, the set of stable matchings) forms a complete lattice with respect to $\succeq^\Matching$ (and in particular is non-empty). 
\end{proof}

\subsection{Deferred Acceptance Algorithm} \label{sec:da-general} 

Theorem \tref{thm:existence} establishes the existence of stable outcomes, but does not address how to find them. However, the proof suggests a natural procedure: start from a matching $M$ and repeatedly apply the function $\xi$ defined by $\xi(M) = \M(\mA(\mI(M)))$. If one starts from the matching $\overline{M}$ which assigns each student to her most preferred school, then this procedure corresponds to the student-proposing deferred acceptance algorithm, and converges to the student-optimal stable matching. To see that it converges, note that $\xi(\overline{M}) \preceq^\Matching \overline{M}$, from which the fact that $\xi$ is order-preserving implies that the sequence $\{\xi^k(\overline{M}) \}_{k = 0}^\infty$ is decreasing. Therefore, it converges by completeness of $\Matching$. Conversely, repeatedly applying $\xi$ from the student-pessimal matching $\underline{M}$ (defined by $\underline{M}_\emptyset(\theta) = 1$ for all $\theta$) produces an increasing sequence of matchings that converges to the school-optimal stable matching. 

Although convergence is guaranteed, in general it does not occur in finitely many steps. In examples that we have tried, convergence happens quickly enough that this algorithm can be applied fruitfully. The main practical challenge is computing $\mI(M)$, which requires taking an integral over student types. Although this may be challenging for arbitrary measures $\eta$, it is tractable for many cases of interest.

\begin{figure}
\captionsetup{width=\textwidth}
\includegraphics[width = .8\textwidth]{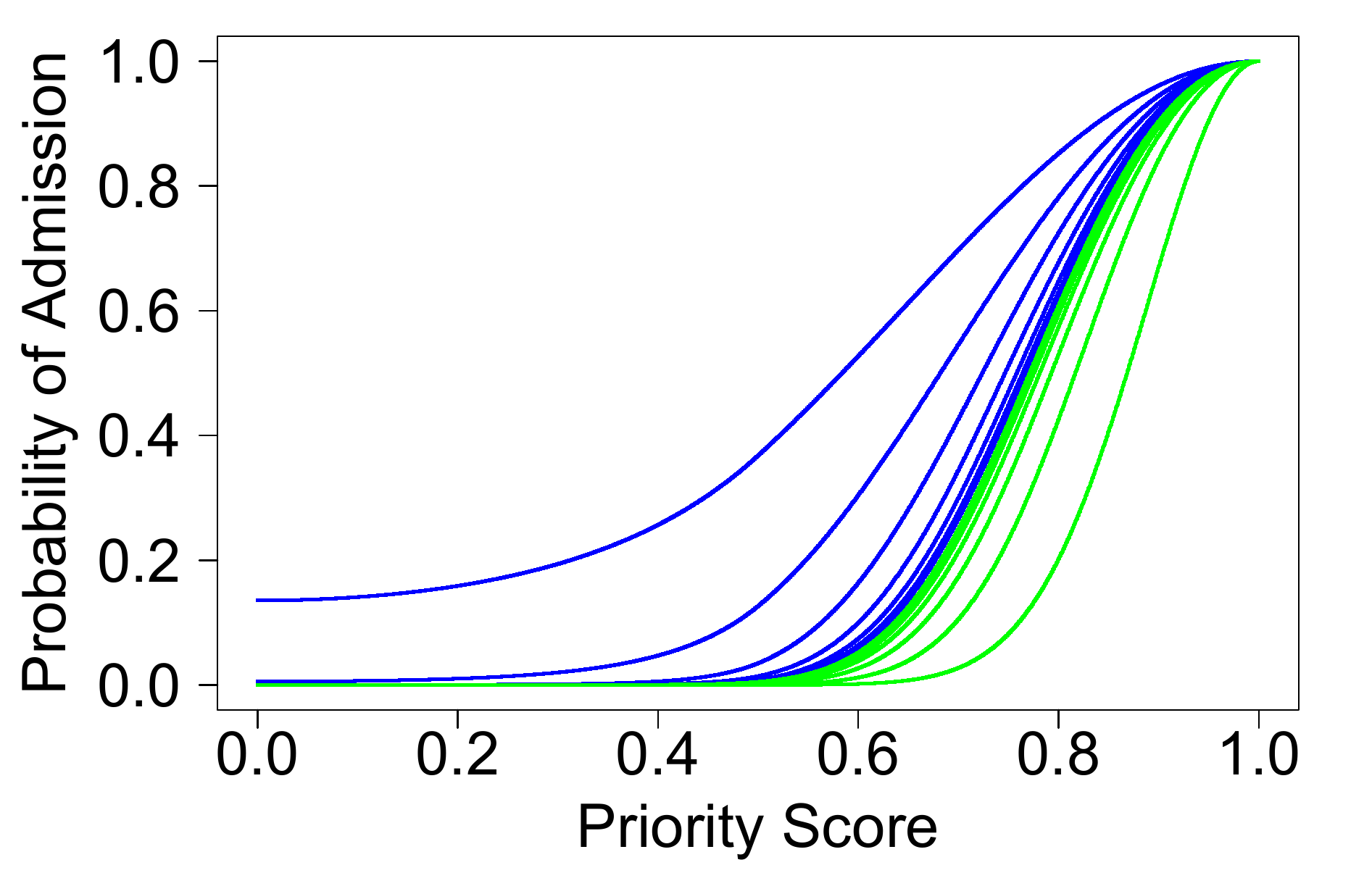}
\caption{Finding a stable admissions function for the example from \citet{azevedo-leshno_2016}, using vacancy function $\vpois$ defined in \eqref{eq:poisson}. Because schools are symmetric, $A_h = A_{h'}$ for $h, h' \in \H$. We iterate the map $A \rightarrow  \mA(\mI(\M(A)))$. Starting from $A(p) = 1$ results in a decreasing sequence of admissions functions (in blue) converging to the student-optimal stable admissions function. Starting from $A(p) = 0$ results in an increasing sequence of admissions functions (in green) converging to the school-optimal stable admissions function. Because $\vpois$ is strictly decreasing, Theorem \tref{thm:uniqueness} guarantees that these admissions functions coincide. Notice that the admissions function is the CDF of the cutoff distribution whose $5^{th}$ and $95^{th}$ percentiles are shown in Figure \tref{fig:azevedo-leshno}.} \label{fig:da}
\end{figure}

Because of the correspondence between stable matchings, stable interest functions, and stable admissions functions, it is also possible to apply an analogous iterative process using the admissions function as the primitive of interest. In that case, one could start from $A_h(p) = 1$ for all $h$ and $p$ (resulting in convergence to the student-optimal stable matching) or $A_h(p) = 0$ for all $h$ and $p$ (resulting in convergence to the student-pessimal stable matching). Figure \fref{fig:da} shows the sequence of admissions functions that result when this iterative process is applied to an example from \citet{azevedo-leshno_2016}.



\subsection{Rural Hospital Theorem}

We now establish a ``rural hospital theorem," which states that for any two stable matchings, each student's probability of assignment and each school's measure of matched students is identical. This is a generalization of the corresponding result for finite markets, proved by \citet{mcvitie-wilson_1970} and \citet{roth_1986}.


\begin{theorem}[Rural Hospital Theorem] \label{thm:rht}
If $\eta$ has strict priorities and $\mV$ is weakly decreasing in its first argument, then the set of matched agents is identical across stable outcomes: if $(M, \I, A)$ and $(\tilde{M}, \tilde{\I}, \tilde{A})$ are $(\eta,\mV)$-stable outcomes, then for each $h \in \H_0$,
\begin{equation} \int M_h(\theta) d\eta(\theta) = \int \tilde{M}_h(\theta) d\eta(\theta),\label{eq:school-rht} \end{equation}
and for each $\theta \in \Theta$ outside of a set of $\eta$-measure zero, 
\begin{equation} \sum_{h \succ^\theta \emptyset} M_h(\theta) = \sum_{h \succ^\theta \emptyset} \tilde{M}_h(\theta). \label{eq:student-rht} \end{equation}
\end{theorem}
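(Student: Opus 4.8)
The plan is to use the complete-lattice structure of Theorem~\tref{thm:existence} to reduce the claim to a comparison against a single fixed outcome, and then to establish a closed form for the measure matched to each school. Since the $(\eta,\mV)$-stable outcomes form a complete lattice under $\succeq$, there is a greatest element $(M^+,\I^+,A^+)$ (the student-optimal outcome), satisfying $(M^+,\I^+,A^+) \succeq (M,\I,A)$ for every stable outcome. By the definition of $\succeq$ on outcomes this simultaneously delivers $A^+ \succeq^\Admissions A$ and $\I \succeq^\Interest \I^+$. It therefore suffices to show that each stable outcome agrees with $(M^+,\I^+,A^+)$ in the two quantities \eqref{eq:school-rht} and \eqref{eq:student-rht}; two arbitrary stable outcomes then agree with each other through the common optimum.

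\textbf{A closed form for the matched measure.} The engine of the proof is the identity, for each $h \in \H$,
\[ \int M_h(\theta)\,d\eta(\theta) = \int_0^{\I_h(0)} \mV(\lambda, C_h)\,d\lambda. \]
To obtain it I would begin from \eqref{eq:enroll-prob} and substitute \eqref{eq:unravel} to write $M_h(\theta) = A_h(p_h^\theta)\big(1 - \sum_{h'\succ^\theta h}M_{h'}(\theta)\big)$, then replace $A_h$ by $\mV(\I_h(\cdot),C_h)$ using \eqref{eq:admissions}. Defining the finite ``interested mass'' measure $\beta_h$ on $[0,1]$ by $\beta_h(B) = \int {\bf 1}(p_h^\theta \in B)\big(1 - \sum_{h'\succ^\theta h}M_{h'}(\theta)\big)\,d\eta(\theta)$, the matched measure becomes $\int_{[0,1]}\mV(\beta_h([p,1]),C_h)\,d\beta_h(p)$, while \eqref{eq:continuum-interest} gives $\I_h(p) = \beta_h([p,1])$ and in particular $\I_h(0) = \beta_h([0,1])$. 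The identity is then the change of variables $\lambda = \beta_h([p,1])$. The main obstacle is making this change of variables rigorous: it holds precisely when $\beta_h$ is atomless, since an atom would contribute a single value of $\mV$ rather than an integral over an interval of $\lambda$. This is exactly where strict priorities enters. The factor $1 - \sum_{h'\succ^\theta h}M_{h'}(\theta) = \prod_{h'\succ^\theta h}(1 - A_{h'}(p_{h'}^\theta))$ lies in $[0,1]$ and vanishes whenever $\emptyset \succ^\theta h$, because the product then contains the term $1 - A_\emptyset \equiv 0$; hence only types with $h \succ^\theta \emptyset$ and $p_h^\theta = p$ contribute to $\beta_h(\{p\})$, giving $\beta_h(\{p\}) \le \eta(\indif_h(p)) = 0$ for every $p$ by Definition~\ref{def:strict}. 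Thus $\beta_h$ has no atoms and the identity holds.

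\textbf{The squeeze.} Write $\mu_h = \int M_h\,d\eta$ and $\mu_h^+ = \int M_h^+\,d\eta$. Two opposing monotonicities now force equality. First, $\I_h(0) \ge \I_h^+(0)$ for each $h \in \H$ together with $\mV \ge 0$ gives $\mu_h \ge \mu_h^+$ for every $h$ via the closed form. Second, since $M_\emptyset(\theta) = \prod_{h'\succ^\theta\emptyset}(1 - A_{h'}(p_{h'}^\theta))$ and $A^+ \succeq^\Admissions A$, we have $M_\emptyset^+(\theta) \le M_\emptyset(\theta)$ pointwise; because $\sum_{h\in\H}M_h(\theta) = 1 - M_\emptyset(\theta)$, integrating yields $\sum_{h\in\H}\mu_h = \int(1 - M_\emptyset)\,d\eta \le \int(1 - M_\emptyset^+)\,d\eta = \sum_{h\in\H}\mu_h^+$. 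The termwise inequalities $\mu_h \ge \mu_h^+$ and the reversed sum inequality can hold together only if $\mu_h = \mu_h^+$ for every $h \in \H$, which is \eqref{eq:school-rht} for the schools; the sum inequality is then an equality, so $\int M_\emptyset\,d\eta = \int M_\emptyset^+\,d\eta$, giving \eqref{eq:school-rht} for $h = \emptyset$. Finally, this equality of integrals combined with the pointwise inequality $M_\emptyset^+ \le M_\emptyset$ forces $M_\emptyset^+ = M_\emptyset$ for $\eta$-almost every $\theta$, and since $\sum_{h\succ^\theta\emptyset}M_h(\theta) = 1 - M_\emptyset(\theta)$ this is precisely \eqref{eq:student-rht}. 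Applying this comparison to every stable outcome against $(M^+,\I^+,A^+)$ completes the argument.
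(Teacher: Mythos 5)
Your proposal is correct, and its overall architecture coincides with the paper's: reduce via the lattice of Theorem \tref{thm:existence} to a comparison with an extremal stable outcome, establish the enrollment identity $\int M_h(\theta)\,d\eta(\theta) = \int_0^{\I_h(0)} \mV(\lambda,C_h)\,d\lambda$, and then squeeze between the termwise inequality coming from $\I \succeq^\Interest \I^+$ and the opposing aggregate inequality coming from the pointwise comparison of match probabilities (the paper phrases the latter via $M^H \succeq^\Matching M^L$; your derivation from $A^+ \succeq^\Admissions A$ and \eqref{eq:enroll-prob} is the same content). The one genuine difference is how you prove the enrollment identity, which the paper isolates as Lemma \ref{lem:enrollment}: there, the author partitions interest levels $L_i$, uses continuity of $\I_h$ (a consequence of strict priorities) to choose priorities $P_i$ with $\I_h(P_i) = L_i$, sandwiches the integrand using the monotonicity of $\mV$, and passes to the limit of Riemann sums. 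You instead push forward the atomless ``interested mass'' measure $\beta_h$ under its survival function $p \mapsto \beta_h([p,1])$, obtaining Lebesgue measure on $[0,\I_h(0)]$ exactly. Your route localizes the role of strict priorities cleanly (the observation that types with $\emptyset \succ^\theta h$ contribute zero, so $\beta_h(\{p\}) \le \eta(\indif_h(p)) = 0$, is exactly right) and, as a small bonus, it establishes the identity for arbitrary measurable $\mV$ --- monotonicity then enters your argument only through Theorem \tref{thm:existence} --- whereas the paper's Riemann argument uses $\mV$ weakly decreasing inside the lemma itself. Two cosmetic differences, both harmless: comparing every stable outcome against the greatest element is a slightly more direct reduction than the paper's comparison of the two extremes (which implicitly sandwiches intermediate outcomes), and your identity at $p=0$ quietly absorbs the set $\{\theta : p_h^\theta = 0\}$ that the paper's Lemma \ref{lem:enrollment} excludes with the strict indicator ${\bf 1}(p_h^\theta > 0)$; under strict priorities this set is $\eta$-negligible for the purposes of \eqref{eq:school-rht}.
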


In contrast to the existence result in Theorem \tref{thm:existence}, Theorem \tref{thm:rht} requires an assumption of strict priorities. This assumption is essential for the result to hold.\footnote{To see that the conclusion of Theorem \tref{thm:rht} may fail to hold if $\eta$ does not have strict priorities, consider an example with two schools, $A$ and $B$, each with a single seat. The measure $\eta$ corresponds to a finite market with three students, $x,y,z$. Students $x$ and $y$ prefer $A$ to $B$, while student $z$ prefers $B$ to $A$. Student $x$ has priority $1/4$ at school $A$ and $3/4$ at school $B$. Student $y$ has priority $1/4$ at school $A$ and $2/4$ at school $B$. Student $z$ has priority $3/4$ at school $A$ and $1/4$ at school $B$.

We claim that there are two $(\eta,\vdet)$-stable matchings, and that $y$ is assigned in one and unassigned in the other. In the school-optimal stable matching, $x$ goes to $B$, $z$ goes to $A$, and $y$ is unassigned.  In the student-optimal stable matching, $x$ and $y$ go to $A$, and $z$ goes to $B$.  Note that the student-optimal stable matching is infeasible (two students are assigned to $A$). This illustrates that our definition of stability (which was intended for markets with strict priorities) does not enforce capacity constraints in markets with ties.}
The failure of the rural hospital theorem when there are ties in priority is not specific to our definition of stability: in finite markets with indifferences, it is known that strongly stable matchings may not exist \citep{irving_1994}, and weakly stable ones may not satisfy the rural hospital theorem \citep{manlove_1999}.

\subsection{Uniqueness}

\na{Continuum measure $\rightarrow$ continuous measure.}

Finally, we establish conditions under which there is a unique stable outcome. 


\begin{theorem}[Uniqueness] \label{thm:uniqueness}
If $\eta$ has strict priorities and $\mV$ is strictly decreasing in its first argument, then there is a unique $(\eta,\mV)$-stable outcome.
\end{theorem}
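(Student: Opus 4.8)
The plan is to use Theorem \tref{thm:existence} to extract the student-optimal stable outcome $(M, \I, A)$ (the top of the lattice) and the school-optimal stable outcome $(\tilde M, \tilde \I, \tilde A)$ (the bottom), which satisfy $M \succeq^\Matching \tilde M$, $A \succeq^\Admissions \tilde A$, and $\tilde \I \succeq^\Interest \I$. Since every stable outcome lies between these two in the lattice order, it suffices to prove that the top and bottom coincide. The strategy is to show, using the Rural Hospital Theorem (Theorem \tref{thm:rht}) together with the strict monotonicity of $\mV$, that the two extreme outcomes induce the same total interest at every school, and then to propagate this agreement down each student's preference list to conclude $M = \tilde M$.

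The key computation, and what I expect to be the main obstacle, is an enrollment identity. For $M = \M(A)$, write $g_h(\theta) = \prod_{h' \succ^\theta h}(1 - A_{h'}(p_{h'}^\theta))$ for the probability that $\theta$ is not matched above $h$; by \eqref{eq:unravel} this equals $1 - \sum_{h' \succ^\theta h} M_{h'}(\theta)$, so by \eqref{eq:continuum-interest} we have $\I_h(p) = \int {\bf 1}(p_h^\theta \geq p)\, g_h(\theta)\, d\eta(\theta)$ and the enrollment $\int M_h\, d\eta = \int A_h(p_h^\theta)\, g_h(\theta)\, d\eta(\theta)$. Letting $\nu_h$ be the pushforward of $g_h\, d\eta$ under $\theta \mapsto p_h^\theta$, note that any $\theta$ with $\emptyset \succ^\theta h$ contributes a factor $1 - A_\emptyset = 0$ to $g_h$, so only types finding $h$ acceptable contribute to $\nu_h$; strict priorities ($\eta(\indif_h(p)) = 0$) then make $\nu_h$ atomless. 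Since $\I_h(p) = \nu_h([p,1])$ and $A_h(p) = \mV(\I_h(p), C_h)$ by \eqref{eq:admissions}, a probability-integral-transform change of variables should yield
\begin{equation*} \int M_h\, d\eta = \int_{[0,1]} \mV(\nu_h([p,1]), C_h)\, d\nu_h(p) = \int_0^{\I_h(0)} \mV(u, C_h)\, du. \end{equation*}
This identity is what converts the infinite-dimensional comparison of outcomes into a one-dimensional one.

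With the identity in hand, I would invoke the Rural Hospital Theorem, which (as $\eta$ has strict priorities) forces $\int M_h\, d\eta = \int \tilde M_h\, d\eta$ for each $h \in \H$, hence $\int_0^{\I_h(0)} \mV(u,C_h)\,du = \int_0^{\tilde\I_h(0)} \mV(u,C_h)\,du$, while $\tilde\I \succeq^\Interest \I$ gives $\tilde\I_h(0) \geq \I_h(0)$. Because $\mV(\cdot, C_h)$ is strictly decreasing it is injective, hence positive at all but at most one point, so its integral over $[\I_h(0), \tilde\I_h(0)]$ vanishes only if that interval is degenerate; thus $\I_h(0) = \tilde\I_h(0)$ for every $h \in \H$. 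Since $\I_h(0) = \int g_h\, d\eta$ and $\tilde\I_h(0) = \int \tilde g_h\, d\eta$ with $\tilde g_h \geq g_h$ pointwise (the school-optimal side admits less, so is less likely matched above $h$), equality of the integrals yields $g_h = \tilde g_h$ for $\eta$-almost every $\theta$, for all $h \in \H$; the student-side conclusion \eqref{eq:student-rht} supplies the same for $h = \emptyset$.

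Finally I would propagate this agreement. Fixing a type $\theta$ and an acceptable school $h$ with immediate successor $h^+$ in $\succ^\theta$, the recursion $g_{h^+}(\theta) = g_h(\theta)(1 - A_h(p_h^\theta))$ and its analogue for $\tilde g$ give, wherever $g_h(\theta) = \tilde g_h(\theta) > 0$, that $A_h(p_h^\theta) = \tilde A_h(p_h^\theta)$. Combined with $g_h = \tilde g_h$ and \eqref{eq:enroll-prob}, this yields $M_h(\theta) = \tilde M_h(\theta)$ whenever $g_h(\theta) > 0$, while $g_h(\theta) = 0$ forces both sides to vanish; hence $M = \tilde M$ $\eta$-almost everywhere. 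Since $\mI$ and $\mA$ depend on $M$ only through integrals, $\I = \mI(M) = \mI(\tilde M) = \tilde\I$ and $A = \mA(\I) = \tilde A$ exactly, and then $M = \M(A) = \tilde M$ exactly, so the stable outcome is unique.
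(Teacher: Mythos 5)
Your proof is correct, and it reaches the conclusion by a genuinely different route than the paper. The paper argues by contradiction: assuming the extremal stable interest functions satisfy $\I^H \succ^\Interest \I^L$, it uses continuity of interest functions (from strict priorities) to locate an interval $(\underline{p},\overline{p})$ on which $\I^H_h$ strictly exceeds $\I^L_h$ and is non-constant, and then exhibits a positive-$\eta$-measure set of types who are strictly less likely to match under the student-pessimal outcome (after showing that the set of surely-matched types is $\eta$-null, a second place where the paper invokes strict monotonicity of $\mV$), contradicting Theorem \tref{thm:rht}. You instead run the argument forward: the school-side conclusion \eqref{eq:school-rht} of the rural hospital theorem, combined with your enrollment identity and the positivity of a strictly decreasing $\mV$, forces $\I_h(0)=\tilde{\I}_h(0)$; the pointwise ordering of the ``unmatched above $h$'' probabilities $g_h \le \tilde{g}_h$ (which follows from the lattice comparison of admissions functions) upgrades this to $\eta$-a.e.\ equality; and the recursion along each preference list, with \eqref{eq:student-rht} supplying the terminal step at $\emptyset$, yields $M=\tilde{M}$ a.e., which you correctly bootstrap to exact equality of the full outcome since $\mI$ sees $M$ only through integrals. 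Note that your enrollment identity is precisely the paper's Lemma \ref{lem:enrollment} specialized to $p=0$: the paper proves it by a Riemann-sum discretization, whereas your pushforward/probability-integral-transform derivation is a clean alternative whose atomlessness hypothesis is exactly the strict-priorities condition of Definition \ref{def:strict}, so no generality is lost. As for what each approach buys: the paper's contradiction localizes where multiplicity would have to live (a priority interval with differing interest), while your direct argument isolates the single use of strict monotonicity---positivity of $\mV$, which is exactly what fails for $\vdet$---and produces the agreement of the two extremal matchings explicitly rather than merely refuting their disagreement. Both proofs rest on the same two pillars, the lattice extremes of Theorem \tref{thm:existence} and the rural hospital theorem, so neither is more elementary; yours is a valid and somewhat more constructive alternative.
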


The intuition underlying this result is as follows. By Theorem \ref{thm:existence}, there are student-optimal and student-pessimal stable admissions functions $A$ and $\tilde{A}$, with $A \succeq \tilde{A}$. It follows that all students will be weakly more likely to match under $A$. If $\mV$ is strictly decreasing, then $A \succ \tilde{A}$ implies that some students will be strictly more likely to match under $\tilde{A}$. This contradicts the rural hospital theorem, implying that we  must have $A = \tilde{A}$. The complete proof is provided in Appendix \ref{app:main-results}. 


If $\mV$ is only weakly decreasing, it is possible that there are multiple stable matchings that lead to different outcomes for a positive $\eta$-measure of students: see \citet{azevedo-leshno_2016} for an example with $\mV = \vdet$. However, their Theorem 1 shows that even in this case, there is typically a unique stable matching: this holds if $\eta$ has full support, or for a generic set of school capacities.

\section{A New Vacancy Function} \label{sec:poisson}


In this section, we study model predictions when using the vacancy function $\vpois$ given in \eqref{eq:poisson}, which assumes that realized interest follows a Poisson distribution. 

To motivate this assumption, we revisit our discussion of the introduction. Suppose that schools post deterministic cutoff scores given by the vector $P \in [0,1]^\H$ and we generate a set $\S$ of $n$ students by sampling iid from a probability measure $\tilde{\eta}$ on $\Theta$. Then for each $h \in \H$ and each $p \in [0,1]$, the set of students who are in $\S$, interested in $h$, and have priority above $p$ at $h$ will follow a Binomial distribution with parameters $n$ and $I^P_h(p)/n$ (where we define $I^P$ as in \eqref{eq:iP}, with $\eta = n \tilde{\eta}$). If $n$ is modestly large, this will be well-approximated by a Poisson with parameter $I^P_h(p)$.

%

\na{ 

\subsection{Asymptotic Equivalence}

Given a measure $\eta$, define the metric $d_\eta$ on the space of admissions functions $\Admissions$ by
\begin{equation} d_\eta(A,A') =  \frac{1}{2}\int_\Theta \norm{M_h^A(\theta) - M_h^{A'}(\theta) } d\eta(\theta). \label{eq:d-metric}\end{equation}
Note that $\frac{1}{2}\norm{M_h^A(\theta) - M_h^{A'}(\theta) }$ is the total variation distance between $M^A(\theta)$ and $M^{A'}(\theta)$, so $d_\eta(A,A')$ can be interpreted as the expected number of students whose assignment changes from $A$ to $A'$.

\na{Discuss scaling up capacity of each school.}

\begin{proposition} \label{prop:convergence}
Fix $\H$ and ${\bf C} \in \mathbb{N}^\H$. Let $\eta$ be a continuous measure with strict priorities and full support, and let $\mathcal{E} = (\H, {\bf C}, \eta)$. Define a sequence of markets $\mathcal{E}^m = (\H, {\bf C}^m, \eta^m)$, with $C_h^m= m C_h$ for all $h \in \H$ and $\eta^m(\tilde{\Theta}) = m \cdot \eta(\tilde{\Theta})$ for all measurable $\tilde{\Theta} \subseteq \Theta$. 

Then there exists a unique $(\mE,\vdet)$-stable admissions function $\tilde{A}^{det}$, a unique $(\mE^m,\vpois)$-stable admissions function $\tilde{A}^m$ for each $m$, and $d_\eta(\tilde{A}^m,\tilde{A}^{det}) \rightarrow 0$ as $m \rightarrow \infty$.
%
\end{proposition}

\begin{lemma} \label{lem:uniform-convergence-helpful}
Let $\mathcal{X}$ be a compact metric space, and let $f_n$ and $f$ be continuous functions from $\mathcal{X}$ to $\mathbb{R}_+$. Suppose that $f_n$ converges uniformly to $f$, that each $f_n$ has a unique root $x_n$, and that $f$ has a unique root $x$. Then $x_n \rightarrow x$.\end{lemma}

\na{An example showing that compactness is necessary. Let $X = \mathbb{R}_+$, and let $f(x) = 0$ and $f(x) = \min(x,1/x)$ for $x > 0$. Let $f_n(x) = f(x)$ for $x \in [1/n,n]$ and $x > n+1$, with $f_n(x) = 1/n$ for $x \in [0,1/n]$ and $f_n$ dipping down to graze zero between $n$ and $n+1$. Then $f_n$ converges uniformly to $f$ (the maximum distance between the two is $1/n$), and $f$ and $f_n$ have unique roots, but the roots of $f_n$ do not converge to the root of $f$.}

\na{An example showing that uniform convergence is necessary. Let $X = [0,2]$ (which is compact), with $f(x) = \min(x,1)$. Let $f_n$ be identical to $f$ on $[1/n,1]$, with $f_n(x) = 1/n$ on $[0,1/n]$. For $x \in [1,2]$, let $f_n$ be identically 1, except for a quick dip down to zero at $x = 1+1/n$. Then $f_n$ converges to $f$ (though not uniformly), and the roots of $f_n$ converge to $1$, not the root of $f$.} 
\begin{proof}[Proof of Lemma \ref{lem:uniform-convergence-helpful}]
Seeking a contradiction, suppose that the $x_n$ do not converge to $x$. That is, for any $\varepsilon > 0$, the cardinality of the set $\{n : d(x_n, x) > \varepsilon\}$ is infinite. By compactness, among the points in $\{x_n\}$ not within a distance $\varepsilon$ of $x$, there is a convergent subsequence; say that this subsequence converges to a point $y$. Furthermore, $0 = f_{n_k}(x_{n_k}) \rightarrow f(y)$, where we have used the fact that the $f_n$ converge uniformly to $f$ and that $x_{n_k}$ converge to $y$. But $y \neq x$ (in particular, $d(x,y) > \varepsilon$), so $f$ must have two roots, contradicting our initial assumption.
\todo{Could write a little more cleanly.}
\end{proof}

\begin{lemma}
If $\eta$ has full support, the function $d_\eta$ given in \eqref{eq:d-metric} defines a metric on the set of component-wise left-continuous admissions functions. \label{lem:metric} 
\end{lemma}
\begin{proof}[Proof of Lemma \ref{lem:metric}]
Symmetry and the triangle inequality are immediate. It remains to show that if $A \neq A'$ are left-continuous admissions functions, then $d_\eta(A,A') \neq 0$. Let $h \in \H$ and $\tilde{p} \in [0,1]$ be such that $A_h(\tilde{p}) \neq A'_h(\tilde{p})$, and without loss of generality assume $A_h(\tilde{p}) > A'_h(\tilde{p})$. Then it follows by left continuity of $A_h$ that there is some interval to the left of $\tilde{p}$ such that $A_h(p) > A'_h(p)$ for all $p$ in this interval. Thus, for any student types $\theta$ that rank $h$ first and have priority $p_h^\theta$ in this interval, $||M^A(\theta) - M^{A'}(\theta)||_1 > 0$. Because $\eta$ has full support, the $\eta$-measure of this set of student types is positive, implying that $d_\eta(A,A') > 0$.

\na{If want to show that it is compact, may use that compact is equivalent to complete and totally bounded.}
\end{proof}

\todo{Right now lots of definitions, no explanation or motivation. Make it easier to parse.}
\begin{lemma}
Fix $\H$, $C \in \mathbb{N}^\H$ and let $\eta$ be a measure with strict priorities and full support. Define $\mI$ as in \eqref{eq:continuum-interest}. \\
Define $\mA^{det}: \Interest \rightarrow \Admissions$ to be the admissions function $A$ satisfying $A_h(p) = \vdet(\I_h(p),C_h)$. \\ For $m \in \mathbb{N}$, define $\mA^{m}: \Interest \rightarrow \Admissions$ to be the admissions function $A$ satisfying $A_h(p) = \vpois(m\I_h(p),mC_h)$. \\
\na{Define $\xi : \Admissions \rightarrow \Admissions$ by $\xi(A) = \mA^{det}(\mI(\M(A)))$, and define $\xi^m : \Admissions \rightarrow \Admissions$ by $\xi^m(A) = \mA^m(\mI(\M(A)))$.}
Define $\Delta, \Delta^m : \Admissions \rightarrow \mathbb{R}_+$ by $\Delta(A) = d_\eta(A, \xi(A))$, and $\Delta^m(A) = d_\eta(A, \xi^m(A))$. \\
Then $\Delta^m$ converges uniformly to $\Delta$.
\label{lem:uniform-convergence-holds}
\end{lemma}

\begin{proof}[Proof of Lemma \ref{lem:uniform-convergence-holds}]
We must show that as $m \rightarrow \infty$,
\[\sup_{A \in \Admissions} \abs{\Delta^m(A) - \Delta(A)} \rightarrow 0.\]
To reduce notation, we fix $A \in \Admissions$ and define $A^m = \xi^m(A)$, $A^{det} = \xi(A)$, $M^m = \M(A^m)$ and $M^{det} = \M(A^{det})$. Note that because $d_\eta$ satisfies the triangle inequality, 
\begin{align*}
\abs{\Delta^m(A) - \Delta(A)} &  = \abs{d_\eta(A,A^m) - d_\eta(A,A^{det})} \\
&  \le d_\eta(A^m,A^{det}).\\
& = \frac{1}{2} \int \norm{M^m(\theta) - M^{det}(\theta)}  d\eta(\theta).
\end{align*}
Fix $\varepsilon > 0$, and define
\begin{equation}
S_h(A) = \{ \theta : \abs{A^{m}(p_h^\theta) - A^{det}(p_h^\theta)} > \varepsilon\}.
\end{equation}
In what follows, we mostly omit the dependence of $S_h$ on the admissions function $A$, except where necessary to avoid confusion. We claim that 
\begin{equation} \theta \not \in \bigcup_{h \in \H} S_h \Rightarrow \frac{1}{2} \norm{M^m(\theta) - M^{det}(\theta)} \le \varepsilon |\H|. \label{eq:eps-bound} \end{equation}
To see this, note that $M^{det}(\theta) \in \{0,1\}^{\H_0}$ has exactly one nonzero component. If $M^{det}_\emptyset(\theta) = 1$, then \eqref{eq:enroll-prob} implies that 
\begin{equation} \frac{1}{2}\norm{M^m(\theta) - M^{det}(\theta)} \le \sum_{h \in \H} A_h^m(p_h^\theta) \le \varepsilon |\H|,\label{eq:bound1} \end{equation}
with the second inequality following because $M_\emptyset^{det}(\theta) = 1$ implies $A_h^{det}(p_h^\theta) = 0$ for all $h \succ^\theta \emptyset$, and thus $A_h^{m}(p_h^\theta) \le \varepsilon$ by the assumption that $\theta \not \in S_h$.

Meanwhile, if $M_h^{det}(\theta) = 1$ for some $h \in \H$, then 
\begin{align}
\frac{1}{2}\norm{M^m(\theta) - M^{det}(\theta)} & = 1 - M_h^m(\theta)  \nonumber\\
& = 1 - A_h^m(p_h^\theta) \prod_{h' \succ^\theta h} (1 - A_{h'}^m(p_{h'}^\theta)).\nonumber \\
& \le 1 - (1 - \varepsilon)^{|\H|} \nonumber \\
& \le \varepsilon |\H|. \label{eq:bound2}
\end{align}
The second line uses \eqref{eq:enroll-prob}, and the third uses that $\theta \not \in S_h$ for any $h$. Jointly, \eqref{eq:bound1} and \eqref{eq:bound2} imply \eqref{eq:eps-bound}. From this, it follows that 
\begin{align*}
d_\eta(A^m,A^{det}) & =  \frac{1}{2} \int_{\theta \in \bigcup_{h \in \H} S_h} \norm{M^m(\theta) - M^{det}(\theta)} d\eta(\theta) +  \frac{1}{2} \int_{\theta \not \in \bigcup_{h \in \H} S_h} \norm{M^m(\theta) - M^{det}(\theta)} d\eta(\theta) \\
& \le \frac{1}{2} \int_{\theta \in \bigcup_{h \in \H} S_h} \norm{M^m(\theta) - M^{det}(\theta)} d\eta(\theta) + \varepsilon |\H|\eta(\Theta) \\
& \le \sum_{h \in \H} \eta(S_h) + \varepsilon |\H| \eta(\Theta),
\end{align*}
where the first line uses \eqref{eq:eps-bound} and the second uses a union bound and the fact that 
\[\frac{1}{2}\norm{M^m(\theta) - M^{det}(\theta)} \le \frac{1}{2} (\norm{M^m(\theta)} + ||M^{det}(\theta)||_1) = 1.\] Thus, all that remains is to show that we can bound $\eta(S_h)$ by a term that does not depend on the initial admissions function $A$, and goes to zero as $m \rightarrow \infty$. 

We let $I^A = \mI(\M(A))$, and prove this result in three steps. The first step is given in Lemma \ref{lem:poisson-bounds}, which implies that if $\theta \in S_h$, then $I_h^A(p_h^\theta)$ is ``close" to $C_h$. More precisely,
\begin{equation} \abs{I_h^A(p_h^\theta) - C_h} \le \sqrt{2 C_h \log(1/\varepsilon)/m} + 4\log(1/\varepsilon)/m.\label{eq:interest-bound} \end{equation}

The second step is to conclude that $p_h^\theta$ must fall in a ``small" interval. That is, if we let $[\underline{p}_h^A, \overline{p}_h^A]$ be the interval of priority scores $p_h^\theta$ satisfying \eqref{eq:interest-bound}, we wish to show that $\overline{p_h}^A - \underline{p}_h^A$ can be bounded by a term that does not depend on $A$ and goes to zero as $m$ grows. That is, as $m \rightarrow \infty$,
\begin{equation} \sup_{A \in \Admissions} (\overline{p}_h^A - \underline{p}_h^A) \rightarrow 0. \label{eq:small-interval} \end{equation}

Our third step is to note that because $\eta$ has strict priorities, it immediately follows from \eqref{eq:small-interval} and the fact that $S_h(A) \subseteq \{\theta : p_h^\theta \in [\underline{p}_h^A, \overline{p}_h^A]\}$ that $\sup_{A \in \Admissions} \eta(S_h(A)) \rightarrow 0$ as $m \rightarrow \infty$.

All that remains is to establish \eqref{eq:small-interval}. Let $I^L$ be the interest function that results when starting with the admissions function $A_h(p) = 1$ for all $h, p$. The superscript $L$ is a reminder that because the functions $\M$ and $\mI$ are monotone according to the orders defined in Section \sref{sec:existence}, the interest function $I^L$ is pointwise smaller than the interest function that results from any other admissions function $A$. In fact, the definition of $\mI$ in \eqref{eq:continuum-interest} implies the stronger claim that for any $h \in \H$ and $0 \le \underline{p} \le \overline{p} \le 1$,
\begin{equation}I_h^L(\underline{p}) - I_h^L(\overline{p}) \le I_h^A(\underline{p}) - I_h^A(\overline{p}).  \end{equation} 
In particular, this holds for $\underline{p} = \underline{p}^A$ and $\overline{p} = \overline{p}^A$. By the definition of $\underline{p}^A$ and $\overline{p}^A$, it follows that
\[ I_h^L(\underline{p}^A) - I_h^L(\overline{p}^A) \le 2(\sqrt{2 C_h \log(1/\varepsilon)/m} + 4\log(1/\varepsilon)/m). \]
To conclude from this that $\overline{p}^A - \underline{p}^A$ must be small for all $A \in \Admissions$, we invoke the assumption that $\eta$ has full support, which implies that for each $h \in \H$, $I_h^L$ is {\em strictly} decreasing (that is, there are students of every priority level who list $h$ first).\footnote{If $\eta$ does not have full support, then $\Delta^m$ may not converge to $\Delta$. To see this, consider the following example. In $\mE$ there are two schools, with $C_1 = C_2 = 1$. All students prefer 1 to 2, and $\eta(\Theta) = \mu > 2$. Priorities at the two schools are uniformly distributed on the line $\{\theta : p_1^\theta + p_2^\theta = 1\}$. The unique $(\mE,\vdet)$-stable outcome is a cutoff of $1 - 1/\mu$ at both schools (both schools get their favorite students).

Define $A$ by $A_1(p) = {\bf 1}(p \ge 1/\mu)$ and $A_2(p) = 1$. Then the resulting interest functions are $I_1^A(p) = \mu(1-p)$, and $I_2^A(p) = \min(\mu(1-p),1)$. In particular, $I_2^A(p) = 1$ for $p \in [0,1 - 1/\mu]$. 

Thus $A^{det}(I)$ rejects students with priority below $1 - 1/\mu$ at both schools (recall that to be accepted, we need $I_h(p) < C_h$). Note that this is actually the $(\mE,\vdet)$-stable outcome.

Meanwhile $A^{m}$ satisfies $A_1^m(p) = \mathbb{P}(Pois(m\mu(1-p))<m)$, and $A_2^m(p) \ge A_2^m(0) = \mathbb{P}(Pois(m)<m) \approx 1/2$. This implies that students with $p_1^\theta < 1 - 1/\mu$ are unlikely to be admitted to school $1$, but have approximately a 50\% chance (or higher) of being admitted to school 2 under $A^m$. Thus, the assignments under $A^{det}$ and $A^m$ differ for approximately $(\mu-2)/2$ students (more formally, $d_\eta(A^m, A^{det}) \rightarrow (\mu-2)/2$ as $m \rightarrow \infty$).
}
\end{proof}

\begin{lemma}
Fix $m, C \in \mathbb{N}$ and $\varepsilon > 0$. If $\lambda \in \mathbb{R}_+$ is such that $\abs{\vpois(m\lambda, mC)-\vdet(\lambda,C)} > \varepsilon$, then
\[ C - \sqrt{2 C \log(1/\varepsilon)/m} \le  \lambda \le C + \sqrt{2 C \log(1/\varepsilon)/m} + 4\log(1/\varepsilon)/m.\] \label{lem:poisson-bounds}
\end{lemma}
\begin{proof}[Proof of Lemma \tref{lem:poisson-bounds}]
In what follows we make use of a concentration bound for Poisson random variables\footnote{\url{http://www.cs.columbia.edu/~ccanonne/files/misc/2017-poissonconcentration.pdf}} which states that for any $\gamma, x > 0$,
\begin{align}
\mathbb{P}(Pois(\gamma) \ge \gamma + x) & \le exp\left(- \frac{x^2}{2(\gamma+x)}\right) \label{eq:poisson-upper} \\
\mathbb{P}(Pois(\gamma) \le \gamma - x) & \le exp\left(- \frac{x^2}{2(\gamma+x)}\right) \label{eq:poisson-lower}
\end{align}

First consider the case $\vdet(\lambda,C) = 1$, meaning that $\lambda < C$. In that case, 
\[\abs{\vpois(m\lambda, mC)-\vdet(\lambda,C)} > \varepsilon \Rightarrow \vpois(m\lambda,mC) < 1 - \varepsilon \Rightarrow \mathbb{P}(Pois(m\lambda) \ge mC) > \varepsilon.\] 
Applying \eqref{eq:poisson-upper} with $\gamma = m \lambda$ and $x = mC - m\lambda > 0$ and taking the log of both sides yields
\[ \log(1/\varepsilon) > \frac{x^2}{2(\gamma+x)} = \frac{m (C-\lambda)^2}{2C}.\]
Because $\lambda < C$, rearranging this inequality implies 
\[ \lambda \ge C -  \sqrt{2C \log(1/\varepsilon)/m}.\]

Next consider the case $\vdet(\lambda,C) = 0$, meaning that $\lambda \ge C$. In that case, apply \eqref{eq:poisson-lower} with $\gamma = m \lambda$ and $x = m\lambda - mC> 0$ and take logarithms to get that 
\[\vpois(m\lambda, mC) > \varepsilon \Rightarrow \log(1/\varepsilon) \ge \frac{x^2}{2(\gamma + x)} = \frac{m(\lambda-C)^2}{4\lambda - 2C}. \]
Because the denominator is positive, we can rearrange this inequality as
\[ (\lambda- C)^2 - 4\lambda \log(1/\varepsilon)/m + 2C \log(1/\varepsilon)/m \le 0.\]
This can be rewritten as
\[ (\lambda - C -  2\log(1/\varepsilon)/m)^2 \le 2C \log(1/\varepsilon)/m+(2 \log(1/\varepsilon)/m)^2,\]
from which it follows (using $\sqrt{a+b} \le \sqrt{a} + \sqrt{b}$ for positive $a, b$) that 
\[ \lambda \le C+ \sqrt{2C \log(1/\varepsilon)/m}+ 4\log(1/\varepsilon)/m.\]
\end{proof}

\begin{proof}[Proof of Proposition \tref{prop:convergence}]

The definition of $\Delta$ and the definition of stability in Definition \ref{def:stability} implies that if $A$ is $(\mE,\vdet)$-stable, then $\Delta(A) = 0$. Because $d_\eta$ is a metric by Lemma \ref{lem:metric}, the converse is also true: if $\Delta(A) = 0$, then $A$ is $(\mE,\vdet)$-stable. Similarly, the definition of $\Delta^m$ implies that $\Delta^m(A) = 0$ if and only if $A$ is $(\mE^m,\vpois)$-stable. 

Theorem 1 from \cite{azevedo-leshno_2016} implies that there is a unique $(\mE,\vdet)$-stable admissions function (root of $\Delta$), which we denote $A^*$. Theorem \tref{thm:uniqueness} implies that there is a unique $(\mE^m,\vdet)$-stable admissions function (root of $\Delta^m$), which we denote $A^m$. Lemma \ref{lem:metric} states that $d_\eta$ is a metric on $\mA$, and that $\mA$ is compact. Lemma \ref{lem:uniform-convergence-holds} states that $\Delta^m$ converges to $\Delta$ uniformly. Therefore, we may apply Lemma \tref{lem:uniform-convergence-helpful} to conclude that $d_\eta(A^m,A^*) \rightarrow 0$.
\end{proof}
}

\todo{
High level idea: $\vpois(m\lambda,mC_h) \rightarrow \vdet(\lambda,C_h)$ for $\lambda \neq C_h$.

Idea of proof: show that convergence is uniform. Fixing large $m$, for any $A$, predictions from the two models aren't very far from each other.

To show uniform convergence, want to show that measure of students whose predictions are ``significantly different" from two models goes to zero.
\begin{enumerate}
\item Show that if $m$ is large, then disagreement implies $I_h(p_h^\theta)$ is very close to $C_h$
\item Show that $I_h(p_h^\theta)$ close to $C_h$ implies that $p_h^\theta$ in a ``small" window of priority scores (uses full support).
\item Show that looking at a small window of priority scores implies small $\eta$-measure (uses strict priorities).
\end{enumerate}
}

%
%
%
%

\na{Possibility that I probably don't want to pursue:
For each school $h$ in the original market, create $m$ copies in the new market (with identical capacities). Generate lists as in original market, except for each school on list, select one school uniformly at random among the copies of that school.
}

In the limit as the number of students and the capacity of each school grow, the predictions of this new model converge to those of \citet{azevedo-leshno_2016}.\footnote{To be more precise, given a market $\mathcal{E} = (\H, {\bf C}, \eta)$, we can define a sequence of markets $\mathcal{E}^m = (\H, {\bf C}^m, \eta^m)$, with $C_h^m= m C_h$ for all $h \in \H$ and $\eta^m(\tilde{\Theta}) = m \cdot \eta(\tilde{\Theta})$ for all measurable $\tilde{\Theta} \subseteq \Theta$. That is, we simply scale up the number of students and number of seats per school by a factor of $m$. This is the same limiting regime as considered by \citet{azevedo-leshno_2016}. If $\mathcal{E}$ has a unique $(\eta, \vdet)$-stable matching $\MStable$, then the $(\eta,\vpois)$-stable matchings of $\mathcal{E}^m$ converge to $\MStable$ as $m$ grows. This is because Poisson random variables with large means are highly concentrated. More formally, for any $C \in \mathbb{N}$ and any $\lambda \in \mathbb{R}_+ \backslash \{ C\}$, $\vpois(m\lambda,mC) \rightarrow \vdet(\lambda,C)\text{ as } m \rightarrow \infty$.
} Conceptually, this is because Poisson random variables with large means are highly concentrated. The advantage of our model lies in its ability to generate probabilistic predictions for markets where capacities are modest and cutoffs are uncertain. We demonstrate its accuracy by using it to generate predictions about two measures of interest: the measure of matched students
\begin{equation} \sum_{h \in \H} \int  M_h(\theta) d\eta(\theta), \end{equation}
 and the average rank of matched students, which we now define.  Denote $\theta$'s rank of $h \in \H$ by
\[R_h(\theta)= |\{h' \in \H_0 : h' \succeq^\theta h\}|\]
and define\footnote{Note that \eqref{eq:avg-rank} counts the average rank among matched students. We make this choice to align with the convention adopted by \citet{ashlagi-kanoria-leshno_2017}. One could instead work with the average rank across {\em all} students, counting a student who lists $\ell$ schools and goes unassigned as receiving their $(\ell+1)^{st}$ choice. In general, these two definitions are incomparable, meaning that each can be larger than the other. However, Propositions \tref{prop:more-seats} holds for either definition. To see this, note that 
\[ \frac{\int \sum_{h \in \H_0} M_h(\theta) R_h(\theta) d\eta(\theta)}{\eta(\Theta)} = \frac{\sum_{h \in \H} \I_h(0)+\I_\emptyset(0)}{\sum_{h \in \H} \int  M_h(\theta) d\eta(\theta) + \I_\emptyset(0)}  \leq \frac{\sum_{h \in \H} \I_h(0)}{\sum_{h \in \H} \int  M_h(\theta) d\eta(\theta)}. \]
}
\begin{equation} \averagerank(M) = \frac{\int \sum_{h \in \H} M_h(\theta) R_h(\theta) d\eta(\theta)}{\int \sum_{h \in \H} M_h(\theta)  d\eta(\theta)}. \label{eq:avg-rank} \end{equation}


We compare our predictions to existing analytical and simulation results for finite markets. To predict outcomes from random finite markets with $n$ students whose types are drawn drawn iid from a probability distribution $\tilde{\eta}$ on $\Theta$, we define the measure $\eta$ by $\eta(\tilde{\Theta}) = n \tilde{\eta}(\tilde{\Theta})$ for all $\tilde{\Theta} \subseteq \Theta$, and study $(\eta,\vpois)$-stable matchings.


Section \sref{sec:avg-rank} provides results for students' average rank, using the results of \citet{ashlagi-kanoria-leshno_2017} as the primary comparison. Section \sref{sec:matched-students} provides results for the number of matches, and compares against findings from \citet{marx-schummer_2021}. In both cases, our model accurately predicts simulation results for finite markets of moderate size. In addition, our model provides new analytical expressions and insights, described below. 

\na{Possibility that I probably don't want to pursue:
For each school $h$ in the original market, create $m$ copies in the new market (with identical capacities). Generate lists as in original market, except for each school on list, select one school uniformly at random among the copies of that school.
}



\subsection{Average Rank} \label{sec:avg-rank}

We now present our results on average rank, using as a comparison the simulations and numerical bounds from \citet{ashlagi-kanoria-leshno_2017}. We first introduce some new definitions. 


\begin{definition} \text{ }

The measure $\eta$ is a {\bf symmetric IID measure} if (i) the restriction of $\succ^\theta$ to $\H$ is uniformly distributed, and (ii) for each $\succ \in \mathcal{R}$, the conditional distribution of ${\bf p}^\theta$ given $\succ^\theta = \succ$ is uniform on $[0,1]^\H$. 

The measure $\eta$ is a {\bf symmetric RSD measure} if (i) the restriction of $\succ^\theta$ to $\H$ is uniformly distributed, and (ii) for each $\succ^\theta \in \mathcal{R}$,  the conditional distribution of ${\bf p}^\theta$ given $\succ^\theta = \succ$ is uniform on $\{\theta : p_h^\theta = p_{h'}^\theta \text{ for all } h, h' \in \H\}$.

\end{definition}


\begin{figure}
\captionsetup{width=\textwidth}
\centerline{\includegraphics[width = 0.8 \textwidth]{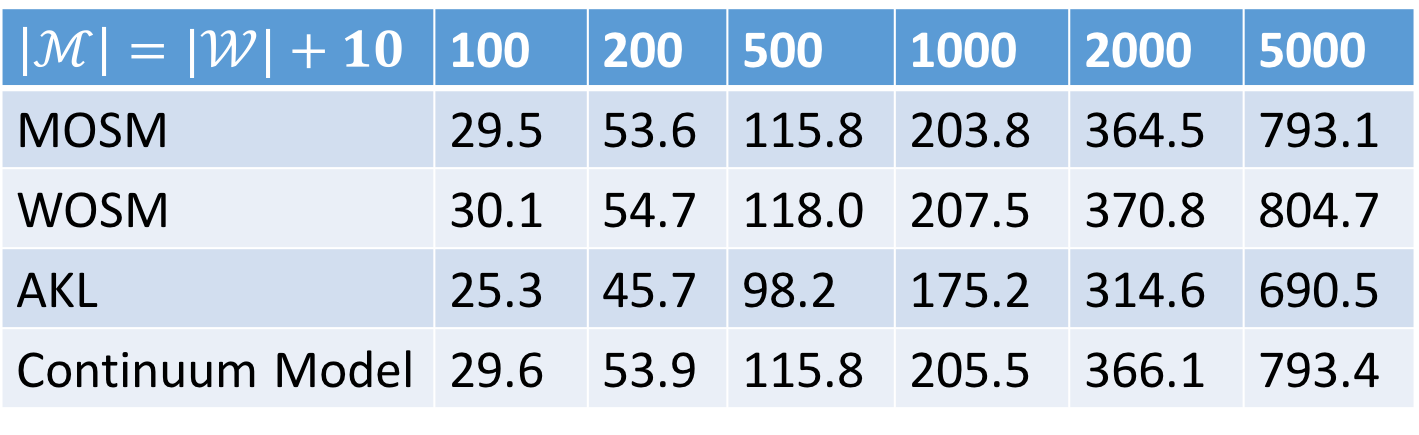}}
\caption{A reproduction of the first column of Table 1 from \citet{ashlagi-kanoria-leshno_2017}, along with our model predictions. Each column above corresponds to a different market size, holding fixed the absolute imbalance. The first two rows show the (simulated) average rank under the student optimal (``man optimal" in their paper) and school optimal (``woman optimal") stable matches, which differ by at most 2\%. The third row shows predictions made by \citet{ashlagi-kanoria-leshno_2017}, which underestimate the average rank by approximately 15\%. The final row shows our predictions, which always lie between the simulation results for the extremal stable matchings. \todo{Relabel MOSM/WOSM?} } \label{fig:akl-table1}
\end{figure} 


To predict the outcome of their simulations, we consider a market with $C_h = 1$ for all $h$, and $\eta$ the symmetric iid measure in which all schools are preferred to the outside option, and analyze $(\eta, \vpois)$-stable matchings. There are several reasons that our model's predictions might not match the simulation results. First, our model generates a unique prediction, whereas finite markets may have multiple stable matchings. Although \citet{ashlagi-kanoria-leshno_2017} establish that differences between stable matchings are relatively minor in imbalanced markets, our ``prediction error" must at least be comparable to the variation across stable matchings. Second, the assumption of independent outcomes across schools introduces error: when the number of students is below 40, every student in the finite market must match, whereas our model predicts that each student has a positive (albeit small) probability of going unassigned.

Despite these concerns, the model prediction is excellent. Figure \fref{fig:akl2} displays their reported simulation results alongside our predictions. The curves do not merely appear qualitatively similar, they also match quantitatively. To emphasize this point, we reproduce their Table 1 in Figure \fref{fig:akl-table1}. In their simulations, the average rank of students differs by at most 2\% between student-optimal and school-optimal stable matches. Our model predictions always lie in between these two numbers. 

Although the numerical predictions of our model are excellent, the resulting formulas are complex, and do not immediately offer insight about how students' average rank depends on market primitives. We address this by using our model to derive analytical bounds on students' average rank. Figure \fref{fig:akl2} makes it clear that the behavior of the market is very different depending on whether the number of students is less or greater than the number of seats. Accordingly, our analysis will consider these cases separately. We define $\rho$ to be the ratio of students to schools:
\begin{equation} \rho = \eta(\Theta)/|\H|, \end{equation}
and let $C$ denote the (common) capacity at each school. We analyze the case with more seats ($\rho < C$) in Section \sref{sec:more-seats}, and the case with more students ($\rho > C$) in Section \sref{sec:more-students}.

\subsubsection{More Seats than Students} \label{sec:more-seats}

We define
\begin{equation} \enrollment(\lambda,C) = \int_0^\lambda \mV(x,C) dx. \label{eq:enrollment} \end{equation}

To state our bounds, for $C \in \mathbb{N}$ and $\rho < C$, define $\Lambda(\rho,C)$ to be the smallest solution $\lambda$ to 
\[ \enrollment(\lambda,C) = \rho.\]
Note that the definition of $\vpois$ in \eqref{eq:poisson} implies that for any $C \in \mathbb{N}$,
\begin{equation} \int_0^\infty \vpois(\lambda,C) = C,\label{eq:fill-to-capacity} \end{equation}
so $\Lambda(\rho,C)$ is defined for $\rho < C$.
 
\begin{proposition} \label{prop:more-seats}
Fix $C \in \mathbb{N}$, and let $C_h = C$ for all $h \in \H$. Let $\eta^{IID}$ be a symmetric iid measure, and let $M^{IID}$ be the unique $(\eta^{IID},\vpois)$-stable matching guaranteed by Theorem \tref{thm:uniqueness}. If there are more seats than students ($\rho < C$), then 
\[\averagerank(M^{IID}) \leq \Lambda(\rho,C)/\rho.\]
\end{proposition}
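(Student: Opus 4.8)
The plan is to rewrite $\averagerank(M^{IID})$ entirely in terms of the common per-school interest $\lambda^* := \I_h(0)$ and the enrollment function $\enrollment$ from \eqref{eq:enrollment}, and then to finish with a short concavity argument. First I would exploit symmetry. Since $\eta^{IID}$ is invariant under relabeling the schools and $\vpois$ is strictly decreasing, Theorem \tref{thm:uniqueness} gives a unique stable outcome, which must itself be invariant under that relabeling; hence the stable interest and admissions functions are identical across schools, $\I_h = \I^*$ and $A_h = A^*$ for all $h \in \H$. Write $\lambda^* = \I^*(0)$ and $n = |\H|$, so that $\eta(\Theta) = n\rho$.

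Next I would compute the denominator of \eqref{eq:avg-rank} by showing that the measure of students assigned to a fixed school $h$ equals $\enrollment(\lambda^*, C)$. By \eqref{eq:enroll-prob} and \eqref{eq:unravel}, $\int M_h \, d\eta = \int A_h(p_h^\theta)\,(1 - \sum_{h' \succ^\theta h} M_{h'}(\theta)) \, d\eta(\theta)$, where $A_h(p) = \vpois(\I_h(p), C)$. Letting $\nu_h$ denote the pushforward under $\theta \mapsto p_h^\theta$ of the measure with density $1 - \sum_{h' \succ^\theta h} M_{h'}(\theta)$ against $\eta$, one has $\I_h(p) = \nu_h([p,1])$ by \eqref{eq:continuum-interest} and $\int M_h \, d\eta = \int_0^1 \vpois(\I_h(p), C)\, d\nu_h(p)$. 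Strict priorities force $\nu_h$ to have no atoms, so $\I_h$ is continuous and decreasing with $\I_h(1) = 0$, and the substitution $x = \I_h(p)$ gives $\int_0^1 \vpois(\I_h(p),C)\, d\nu_h(p) = \int_0^{\lambda^*} \vpois(x,C)\, dx = \enrollment(\lambda^*, C)$. Summing over the $n$ schools, the matched measure is $n\enrollment(\lambda^*, C)$, and since this cannot exceed $\eta(\Theta) = n\rho$ we obtain $\enrollment(\lambda^*, C) \le \rho$, hence $\lambda^* \le \Lambda(\rho, C)$ because $\enrollment(\cdot, C)$ is strictly increasing with inverse $\Lambda(\cdot, C)$.

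For the numerator, fix $\theta$ with acceptable schools ranked $h_1 \succ^\theta h_2 \succ^\theta \cdots$ and set $U_j(\theta) = \prod_{i \le j}(1 - A_{h_i}(p_{h_i}^\theta))$, the probability that $\theta$ is rejected by its top $j$ choices; then $M_{h_j}(\theta) = U_{j-1}(\theta) - U_j(\theta)$ and $R_{h_j}(\theta) = j$, so Abel summation yields $\sum_{h} M_h(\theta) R_h(\theta) = \sum_{j} U_{j-1}(\theta) - (\text{a nonnegative boundary term}) \le \sum_{h \in \H} \prod_{h' \succ^\theta h}(1 - A_{h'}(p_{h'}^\theta))$. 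Integrating against $\eta$ and comparing with \eqref{eq:continuum-interest} at $p = 0$ gives $\int \sum_h M_h R_h \, d\eta \le \sum_h \I_h(0) = n\lambda^*$. Combining this with the denominator computation, $\averagerank(M^{IID}) \le \frac{n\lambda^*}{n\enrollment(\lambda^*, C)} = \frac{\lambda^*}{\enrollment(\lambda^*, C)}$. Finally, $\enrollment(\cdot, C)$ is concave with $\enrollment(0, C) = 0$ because its derivative $\vpois(\cdot, C)$ is decreasing, so $\lambda \mapsto \lambda/\enrollment(\lambda, C)$ is nondecreasing; together with $\lambda^* \le \Lambda(\rho, C)$ and $\enrollment(\Lambda(\rho, C), C) = \rho$ this gives $\frac{\lambda^*}{\enrollment(\lambda^*, C)} \le \frac{\Lambda(\rho, C)}{\rho}$, which is the claim.

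I expect the main obstacle to be the enrollment identity in the second paragraph, $\int M_h \, d\eta = \enrollment(\lambda^*, C)$: this is where the abstract fixed-point objects $\mI, \mA, \M$ are converted into the scalar function $\enrollment$, and it is the only place where strict priorities (no atoms of $\nu_h$) and the normalization $\I_h(1) = 0$ are genuinely used, so the change-of-variables there must be justified carefully. By contrast, the symmetry reduction, the Abel identity, and the concavity of $\enrollment$ are routine once that identity is in hand; note in particular that the numerator step only needs the inequality $\int \sum_h M_h R_h \, d\eta \le \sum_h \I_h(0)$ rather than an exact formula, so the argument does not even require students to submit complete lists.
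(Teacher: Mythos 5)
Your proof is correct and follows essentially the same route as the paper's: your Abel-summation bound on the numerator is the paper's exchange-of-summation inequality \eqref{eq:interest-ub}, your concavity observation (that $\lambda \mapsto \lambda/\enrollment(\lambda,C)$ is nondecreasing) is exactly Lemma \ref{lem:monotonicity} on the monotonicity of $\acceptancerate$, and the closing step $\enrollment(\I_h(0),C) \leq \rho$, hence $\I_h(0) \leq \Lambda(\rho,C)$, from matched mass at most $\eta(\Theta)$ is identical. The only place you work harder than needed is the enrollment identity $\int M_h\,d\eta = \enrollment(\I_h(0),C)$, which you re-derive via an atomless pushforward and change of variables; the paper supplies this as Lemma \ref{lem:enrollment} (proved by a partition argument valid for any weakly decreasing vacancy function), so you could simply invoke it.
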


To clarify the relationship with results from \citet{ashlagi-kanoria-leshno_2017}, we parameterize $\rho$ and apply Proposition \tref{prop:more-seats} to the special case where $C = 1$.
\begin{corollary} \label{cor:akl}
In a symmetric iid market where schools have a single seat and $\rho = \frac{n}{n+k}  < 1$, 
\[\averagerank(M^{IID}) \leq \frac{n+k}{n} \log \left( \frac{n+k}{k} \right).\]
\end{corollary}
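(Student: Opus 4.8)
The plan is to apply Proposition \tref{prop:more-seats} directly with $C = 1$ and then evaluate the resulting bound $\Lambda(\rho,1)/\rho$ in closed form. Since the proposition already delivers the inequality $\averagerank(M^{IID}) \leq \Lambda(\rho,C)/\rho$, the only real work is to compute $\Lambda(\rho,1)$ explicitly and substitute the parameterization $\rho = \frac{n}{n+k}$.

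First I would specialize the vacancy function. By definition \eqref{eq:poisson}, $\vpois(\lambda,1) = \sum_{j=0}^{0} \frac{e^{-\lambda}\lambda^j}{j!} = e^{-\lambda}$, so from \eqref{eq:enrollment} we get
\[ \enrollment(\lambda,1) = \int_0^\lambda e^{-x}\,dx = 1 - e^{-\lambda}. \]
Next I would solve the defining equation $\enrollment(\lambda,1) = \rho$ for $\Lambda(\rho,1)$. Because $\lambda \mapsto 1 - e^{-\lambda}$ is strictly increasing on $[0,\infty)$ with range $[0,1)$, for any $\rho < 1$ there is a unique solution, which is therefore also the smallest solution demanded by the definition of $\Lambda$; solving $1 - e^{-\lambda} = \rho$ gives $\Lambda(\rho,1) = -\log(1-\rho) = \log\!\big(\tfrac{1}{1-\rho}\big)$.

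It then remains to substitute $\rho = \frac{n}{n+k}$, so that $1 - \rho = \frac{k}{n+k}$ and hence $\Lambda(\rho,1) = \log\!\big(\tfrac{n+k}{k}\big)$. Dividing by $\rho = \frac{n}{n+k}$ yields
\[ \frac{\Lambda(\rho,1)}{\rho} = \frac{n+k}{n}\log\!\left(\frac{n+k}{k}\right), \]
which is precisely the claimed bound. There is no genuine obstacle here: the corollary is a pure computation specializing the general inequality, and the only point requiring a sentence of justification is that the monotonicity of $1 - e^{-\lambda}$ guarantees $\Lambda(\rho,1)$ is well-defined and equals the closed-form expression above, rather than some other branch.
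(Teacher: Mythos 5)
Your proposal is correct and follows exactly the route the paper intends: the paper offers no separate proof of Corollary \tref{cor:akl}, treating it as an immediate specialization of Proposition \tref{prop:more-seats} with $C=1$, where $\vpois(\lambda,1)=e^{-\lambda}$ gives $\enrollment(\lambda,1)=1-e^{-\lambda}$, hence $\Lambda(\rho,1)=\log\bigl(\tfrac{1}{1-\rho}\bigr)=\log\bigl(\tfrac{n+k}{k}\bigr)$ and the bound $\Lambda(\rho,1)/\rho=\tfrac{n+k}{n}\log\bigl(\tfrac{n+k}{k}\bigr)$. Your added remark that strict monotonicity of $1-e^{-\lambda}$ makes the smallest solution unique (and that $\rho<1=C$ keeps $\Lambda$ well-defined) is a sensible, harmless elaboration of the same computation.
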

This upper bound exactly matches that from Theorem 2 of \citet{ashlagi-kanoria-leshno_2017}.\footnote{Also matches heuristic bound derived in conclusion of \citet{wilson_1972} when $\rho < 1$.} Because we work with different models, neither result directly implies the other.  \todo{Revisit commented text below.}
\remove{However, Corollary \ref{cor:akl} is ``stronger" in that it makes no assumptions on how priorities are generated. \citet{ashlagi-kanoria-leshno_2017} note that the average rank under random serial dictatorship is similar to that with independent priorities. Our result applies to these cases, as well as (for example) to cases where students systematically have low priority at their first choice school, where a high priority at one school implies lower priority at the remaining listed schools, or where students with long lists have systematically better (or worse) priority scores than those with short lists.\todo{\footnote{Explain balls in bins idea. Also point out that as long as most students are matching, this implies a nearly matching lower bound. For example, if students list identical number of schools, then the average rank is a deterministic function of the match rate (independent of priorities).\todo{Mention \citet{knoblauch_1999}, who shows that all priorities result in average rank asymptotic to $\log(n)$.}}} \todo{Mention \citet{knuth_1976,knuth_1997}, who proved this result. Also perhaps proven earlier by \citet{wilson_1972}.}}
However, the bound in Proposition \tref{prop:more-seats} provides insight beyond the cases considered by \citet{ashlagi-kanoria-leshno_2017}. First, it does not assume that students submit complete (or long) lists: the distribution of list lengths can be arbitrary.\footnote{While it is well known that increasing a student's list makes outcomes worse for all other students, this does not imply that extending a list increases the average rank, because average rank is calculated only for matched students. If some students submit short lists, extending their list may cause these students to match in place of (or in addition to) others with longer lists, thereby decreasing the average rank.} Instead, fixing school capacity $C$, the bound depends only on the ratio of students to schools $\rho$. Second, our bound improves as  school capacity grows: fixing the ratio of students to seats $\rho/C = 0.97$, the bound on average rank is approximately $3.6$ when $C = 1$,  $2.0$ when $C = 3$, $1.4$ when $C = 10$, and tends to $1$ as $C \rightarrow \infty$.


\subsubsection{More Students than Seats} \label{sec:more-students}

We now turn to the case where students outnumber seats. 
We consider the two special cases studied by \citet{ashlagi-kanoria-leshno_2017}: priorities are either generated iid $U[0,1]$ (IID), or are identical across schools, and independent of the length of student lists (RSD). The following result implies that RSD results in a much lower average rank than IID. 
\begin{proposition} \label{prop:more-students}
Suppose that $C_h = C_{h'}$ for all $h, h' \in \H$ and that there are more students than seats ($\rho > C$). 
\noindent Let $\eta^{IID}$ be a symmetric IID measure in which all students list $\ell$ schools, and let $M^{IID}$ be the unique $(\eta^{IID}, \vpois)$-stable matching. Then
\[ \averagerank(M^{IID}) \geq \ell \left(1 - \frac{\rho}{C}- \frac{1}{\log \left( 1-C/\rho \right)} \right) .\]
Let $\eta^{RSD}$ be a symmetric RSD measure in which all students list $\ell$ schools, and let $M^{RSD}$ be the unique $(\eta^{RSD}, \vpois)$-stable matching. Then
\[ \averagerank(M^{RSD}) \leq 1+\log(\ell). \hspace{1.3 in}\]
\end{proposition}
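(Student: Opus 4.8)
The plan is to use the symmetry of both measures to collapse each average rank into a one-dimensional quantity built from the common admissions function $A$ and common interest function $I$, and then bound that quantity. By Theorem \ref{thm:uniqueness} each market has a unique stable outcome, and since the market is invariant under relabeling the (symmetric) schools, uniqueness forces $A_h=A$ and $I_h=I$ for all $h$. Summing the defining relation \eqref{eq:continuum-interest} over $h\in\H_0$ and collecting terms by preference rank (an Abel/summation-by-parts step) gives the identity $\sum_{h\in\H_0}I_h(0)=\int\sum_{h\in\H_0}R_h(\theta)M_h(\theta)\,d\eta(\theta)$. Because every student lists exactly $\ell$ schools we have $R_\emptyset\equiv\ell+1$ and $\int M_\emptyset\,d\eta=I_\emptyset(0)$, so peeling off the outside-option term yields the working formula
\[ \averagerank(M)=\frac{\sum_{h\in\H}I_h(0)-\ell\,I_\emptyset(0)}{\eta(\Theta)-I_\emptyset(0)}. \]
The capacity constraint enters through \eqref{eq:enrollment}: the mass matched to each school is $\enrollment(I(0),C)$, which is strictly below $C$ by \eqref{eq:fill-to-capacity}.

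For the \textbf{IID} bound, independence of priorities across schools makes the rejection probability at each listed school equal to $1-q$ where $q=\int_0^1 A$, so a student listing $h$ in position $j$ is interested with probability $(1-q)^{j-1}$. Hence $I_h(0)=\rho\sum_{j=1}^{\ell}(1-q)^{j-1}=\rho\frac{1-w}{q}$ and $I_\emptyset(0)=\rho|\H|\,w$ with $w=(1-q)^\ell$, and the formula collapses to $\averagerank(M^{IID})=\frac1q-\frac{\ell w}{1-w}$. The capacity bound reads $\rho(1-w)=\enrollment(I(0),C)<C$, i.e. $w>1-C/\rho$. Two elementary facts finish it: (i) $-\log w=\ell(-\log(1-q))\ge \ell q$ gives $\frac1q\ge \ell/(-\log w)$, so $\averagerank(M^{IID})\ge \ell f(w)$ with $f(w)=-\frac{1}{\log w}-\frac{w}{1-w}$; and (ii) $f$ is increasing on $(0,1)$, since $f'(w)>0$ reduces to $g(w):=w^{-1/2}-w^{1/2}+\log w>0$, and $g(1)=0$ together with $2w^{3/2}g'(w)=-(1-\sqrt w)^2\le 0$ gives $g>0$. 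Substituting $w>1-C/\rho$ and computing $f(1-C/\rho)=1-\rho/C-1/\log(1-C/\rho)$ gives the stated lower bound.

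For the \textbf{RSD} bound, all $\ell$ draws share one lottery score $p^\theta=x$, so conditional on $x$ the entire rank distribution is driven by the single number $A(x)$. Writing $W_k=\int_0^1(1-A(x))^k\,dx$ and $T_k=W_k-W_\ell$ (the mass matched below rank $k$), the tail-sum identity for a positive integer rank gives $\averagerank(M^{RSD})=\sum_{k=0}^{\ell-1}T_k/T_0$; since $\sum_{k=0}^{\ell-1}\frac1{k+1}=H_\ell\le 1+\log\ell$, it suffices to show $T_k\le T_0/(k+1)$. I would change variables from $x$ to $\lambda=I(x)$ and then to $E=\enrollment(\lambda,C)$, noting $dE=\vpois(\lambda,C)\,d\lambda=(1-u)\,d\lambda$ where $u=1-\vpois(\lambda,C)$; the Jacobians telescope and leave $T_k=\frac1\rho\int_0^{E_0}\frac{u^k-u^\ell}{1-u^\ell}\,dE$ and $T_0=E_0/\rho$, with $E_0=\enrollment(I(0),C)$. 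Using $\frac{u^k-u^\ell}{1-u^\ell}\le u^k$ (equivalent to $u^k\le 1$) and then the fact that $u$ is a \emph{convex} function of $E$ vanishing at $E=0$ (so the chord bound $u(E)\le \frac{E}{E_0}u(E_0)\le E/E_0$ holds since $u(E_0)\le 1$), I obtain $T_k/T_0\le\frac1{E_0}\int_0^{E_0}(E/E_0)^k\,dE=\frac1{k+1}$, as required.

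The crux is the RSD step for general $C$, namely controlling the shape of the fixed-point $A$; the enrollment substitution is exactly what tames it, turning the relevant weight into the uniform measure on $[0,E_0]$ and reducing the whole estimate to the single geometric fact that $u=1-\vpois(\lambda,C)$ is convex in $E=\enrollment(\lambda,C)$ with $u=0$ at the origin. That convexity is equivalent to $\frac{du}{dE}=\frac{\lambda^{C-1}/(C-1)!}{\sum_{j<C}\lambda^j/j!}$ being increasing in $\lambda$, which I would verify by showing its reciprocal $\sum_{i=0}^{C-1}\frac{(C-1)!}{(C-1-i)!}\lambda^{-i}$ is decreasing. For $C=1$ this is transparent (there $u=E$ exactly), and the general case is the one place genuine work is needed; by contrast the IID bound is essentially mechanical once the closed form and the monotonicity of $f$ are established.
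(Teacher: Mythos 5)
Your proposal is correct, and while it shares the paper's skeleton---the symmetry collapse to a single $A$ and $I$, the closed form $\averagerank(M^{IID}) = \tfrac1q - \tfrac{\ell w}{1-w}$ with $q=\int_0^1 A$ and $w=(1-q)^\ell$ (the paper's $AR(q)$ in \eqref{eq:avg-rank-ar}), and the enrollment change of variables $E=\enrollment(I(p),C)$ for RSD---your bounding steps are genuinely different, and in two places stronger than what the paper writes down. For IID, the paper caps $q\le\alpha\le\alpha'$ via $e^{-\alpha'\ell}\ge(1-\alpha')^\ell$ and then invokes monotonicity of $AR$ (Lemma \ref{lem:ar}, whose proof the paper omits); you instead reparameterize in $w$, use $-\log(1-q)\ge q$, and prove monotonicity of $f(w)=-1/\log w - w/(1-w)$ by an explicit, verifiable computation ($2w^{3/2}g'(w)=-(1-\sqrt w)^2$), so your IID argument is self-contained where the paper's is not. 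For RSD, the paper bounds $\frac{1}{E_0}\int_0^{E_0}AR(q(u))\,du$ by $\int_0^1 AR(q)\,dq$ via a Chebyshev-type correlation inequality \eqref{eq:splitting} that needs $h(\lambda)=\frac1C\,\vpois(\lambda,C)/(\vpois(\lambda,C)-\vpois(\lambda,C-1))$ to be decreasing---a claim the paper asserts without proof---and then evaluates $\int_0^1 AR\le 1+\log\ell$ by a second substitution; you instead decompose into tail masses $T_k$, use the pointwise bound $\frac{u^k-u^\ell}{1-u^\ell}\le u^k$, and prove $T_k\le T_0/(k+1)$ from convexity of $u$ in $E$ (your reciprocal $\sum_{i=0}^{C-1}\frac{(C-1)!}{(C-1-i)!}\lambda^{-i}$ is indeed decreasing, since $-\frac{d}{d\lambda}\vpois(\lambda,C)=e^{-\lambda}\lambda^{C-1}/(C-1)!$) plus the chord bound and $H_\ell\le 1+\log\ell$. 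Your convexity lemma is essentially the integrated form of the paper's unproven monotonicity of $h$, but your route verifies it directly; a further bonus is that your RSD argument never uses $\rho>C$ or the comparison $E_0\le C$ (the paper's step \eqref{eq:step2}), so it yields the $1+\log\ell$ bound without the imbalance hypothesis. All intermediate identities check out: $T_k=\frac1\rho\int_0^{E_0}\frac{u^k-u^\ell}{1-u^\ell}\,dE$ follows from $-I'(x)=\rho\frac{1-u^\ell}{1-u}$ and $dE=(1-u)\,d\lambda$, matching the paper's own Jacobian, and $f(1-C/\rho)=1-\rho/C-1/\log(1-C/\rho)$ recovers the stated constant exactly.
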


To facilitate a comparison with bounds from \citet{ashlagi-kanoria-leshno_2017}, we parameterize $\rho$ and consider the special case where $C = 1$.
\begin{corollary} \label{cor:akl-lb}
In a symmetric iid market where $C = 1$, $\rho = \frac{n+k}{n}$ and all students list $n$ schools, 
\[\averagerank(M^{IID}) \geq \frac{n}{\log\left( \frac{n+k}{k} \right)} -k .\]
\end{corollary}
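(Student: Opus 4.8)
The plan is to derive Corollary \ref{cor:akl-lb} as a direct specialization of the IID lower bound in Proposition \ref{prop:more-students}; all of the genuine work lives in that proposition, and the corollary is obtained by substituting the stated parameter values and simplifying. First I would record the parameters: $C = 1$, each student lists $\ell = n$ schools, and $\rho = \frac{n+k}{n}$. Before applying the proposition I would verify its hypothesis that students outnumber seats: since $k > 0$ we have $\rho = \frac{n+k}{n} > 1 = C$, so we are indeed in the regime $\rho > C$ and the IID bound applies.

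Next I would substitute into the bound $\averagerank(M^{IID}) \ge \ell\left(1 - \frac{\rho}{C} - \frac{1}{\log(1 - C/\rho)}\right)$ and simplify the two nontrivial pieces. For the first term, $1 - \frac{\rho}{C} = 1 - \frac{n+k}{n} = -\frac{k}{n}$. For the second, $1 - \frac{C}{\rho} = 1 - \frac{n}{n+k} = \frac{k}{n+k}$, so $\log\!\left(1 - \frac{C}{\rho}\right) = -\log\!\left(\frac{n+k}{k}\right)$ and hence $-\frac{1}{\log(1-C/\rho)} = \frac{1}{\log\left(\frac{n+k}{k}\right)}$. Multiplying the bracketed quantity by $\ell = n$ then gives $n\left(-\frac{k}{n} + \frac{1}{\log((n+k)/k)}\right) = \frac{n}{\log((n+k)/k)} - k$, which is exactly the claimed inequality.

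Since the corollary reduces to arithmetic, there is no substantive obstacle in its proof. The only point requiring a moment's care is the sign bookkeeping around the logarithm: because $\rho > C$ one has $1 - \rho/C < 0$ while $1 - C/\rho \in (0,1)$, so $\log(1 - C/\rho) < 0$ and the term $-1/\log(1-C/\rho)$ contributes positively. Getting these signs right is what turns the raw substitution into the clean expression $\frac{n}{\log((n+k)/k)} - k$, and it is worth double-checking that the positive and negative contributions combine in the stated order rather than cancelling.
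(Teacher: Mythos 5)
Your proposal is correct and matches the paper's route exactly: the paper states Corollary \ref{cor:akl-lb} as a direct specialization of the IID bound in Proposition \ref{prop:more-students} with $C=1$, $\ell=n$, $\rho=\frac{n+k}{n}$, and your substitutions ($1-\rho/C=-k/n$ and $\log(1-C/\rho)=-\log\bigl(\frac{n+k}{k}\bigr)$) reproduce the intended arithmetic, including the sign bookkeeping. Nothing further is needed.
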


This lower bound is very similar to the bound of $\frac{n}{1+\frac{n+k}{n}\log\left( \frac{n+k}{k} \right)}$ from Theorem 2 of \citet{ashlagi-kanoria-leshno_2017} (in fact, our bound is tighter for $k \leq n$).\footnote{Letting $\beta = k/n$, algebra reveals that the bound in Corollary \ref{cor:akl-lb} is tighter than the bound from Theorem 2 of \citet{ashlagi-kanoria-leshno_2017} so long as $(1+\beta)\beta \log^2(1+\beta) \leq 1$, which holds for $\beta \leq 1$.} More importantly, Proposition \tref{prop:more-students} clarifies the effect of list length, school capacity, and market size.

In the model of \citet{ashlagi-kanoria-leshno_2017}, $n$ represents both the number of schools and the length of student lists, and it is assumed that each school has a single seat. It is not a priori clear how their bound changes if students list only a subset of the market, or if schools have multiple seats. Proposition \tref{prop:more-students} establishes that students' average rank under RSD is logarithmic in the {\em list length} (rather than the market size). Meanwhile, under IID priorities, students' average rank is linear in the list length, with proportionality constant that depends on the ratio of students to seats $\rho/C$. This implies that with IID priorities, large capacities don't result in meaningfully better outcomes: given a fixed ratio of students to seats $\rho/C$, the lower bound does not depend on whether schools are small or large.\footnote{This is in contrast to RSD. In this case, fixing the ratio of students to seats $\rho/C$, the average rank is decreasing in $C$, and converges to one as $C$ grows (although this fact is not reflected in the bound in Proposition \tref{prop:more-students}).}

\subsection{Number of Matches} \label{sec:matched-students}

Another metric of interest is the number of matches. Few theoretical papers study this quantity, despite its salience in many applications. In fact, many papers assume that students submit complete lists, or at leasts lists that are long enough that the short side of the market matches fully. 

One recent exception is \citet{marx-schummer_2021}. They consider the problem facing a matching platform that helps to match men and women, and charges prices $p_m$ and $p_w$ to each matched man and woman, respectively. A man and woman can only be matched if they are both willing to pay the fee, which occurs with probability $1 - q(p_m,p_w)$ independently across man-woman pairs. The platform thus faces a tradeoff: if its prices are too high, there will be few mutually acceptable pairs, and few matches will form. The goal of the platform is to choose prices to maximize its revenue.

\citet{marx-schummer_2021} assume that both sides have uniformly random preferences over mutually acceptable partners, and that the final matching is stable. Let $V^{IID}(W,M,q)$ be the expected number of matches with $W$ women, $M$ men, and probability of mutual compatibility $1 - q$. The main technical challenge they confront is analyzing $V^{IID}(W,M,q)$. They argue that an analytical result is intractible, and instead analyze an algorithm in which men declare all acceptable partners, and then women are placed in a random order and sequentially allowed to choose their favorite unmatched mutually acceptable man. In essence, they study the outcome of a woman-selecting random serial dictatorship. Letting $V^{RSD}(W,M,q)$ be the expected number of matches that form in this case, \citet{marx-schummer_2021} note that
\begin{equation} V^{RSD}(W,M,q) = \sum_{j = 1}^{\min(M,W)} \frac{\prod_{i = 0}^{j-1} (1 - q^{M-i}) \prod_{i = 0}^{j-1} (1 - q^{W-i}) }{1-q^j}. \label{eq:complex} \end{equation}
They use simulations to argue that $V^{RSD}(W,M,q)$ is a reasonable approximation of $V^{IID}(W,M,q)$. Figure \fref{fig:ms12} includes two plots from their paper: the first displays $V^{RSD}$, and the second displays simulation results showing the relative difference between $V^{RSD}$ and $V^{IID}$.

To generate corresponding predictions from our model, we let $W$ be the mass of students, $M$ be the number of schools, and assume that schools are symmetric and have capacity $C = 1$. In their simulations, student list length is Binomial with parameters $M$ and $1-q$. When $q$ is large, this is close to a Poisson with mean $M(1 - q)$, so we generate predictions assuming that student list lengths follow the latter distribution. We let $\hat{V}^{RSD}(W,M,q)$ and $\hat{V}^{IID}(W,M,q)$ be the mass of matches in the unique $(\eta^{RSD}, \vpois)$ and $(\eta^{IID}, \vpois)$ stable matchings of this market, and plot our predictions in Figure \fref{fig:ms12}, alongside the original graphs from \citet{marx-schummer_2021}. Despite the error in approximating the binomial distribution with a Poisson, the model predictions are very close to the exact expression and the simulation results. \todo{Could also generate predictions for binomial distribution -- probably not worth it.}

\begin{figure}
\captionsetup{width=\textwidth}
\centerline{\includegraphics[height = 1.6 in]{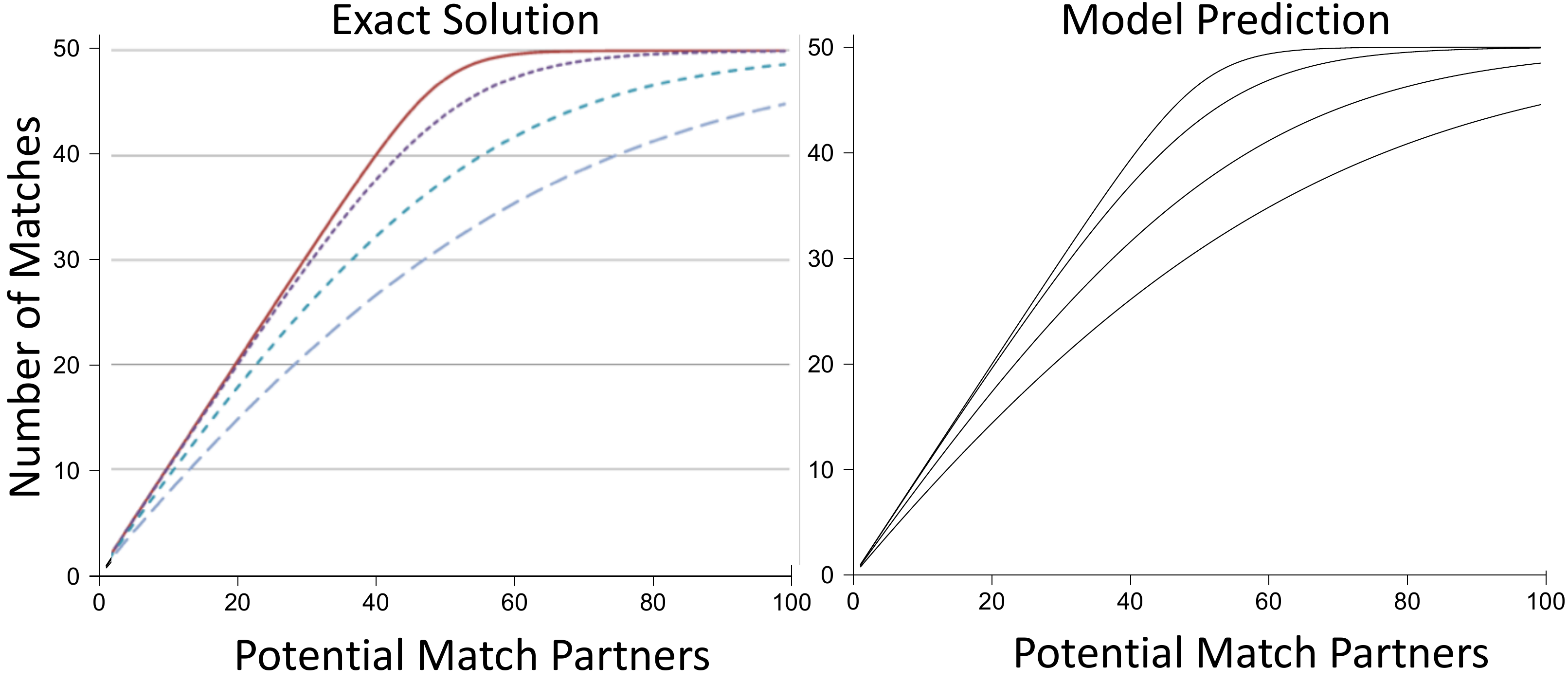}} \vspace{.1 in}
\centerline{\includegraphics[height = 1.6 in]{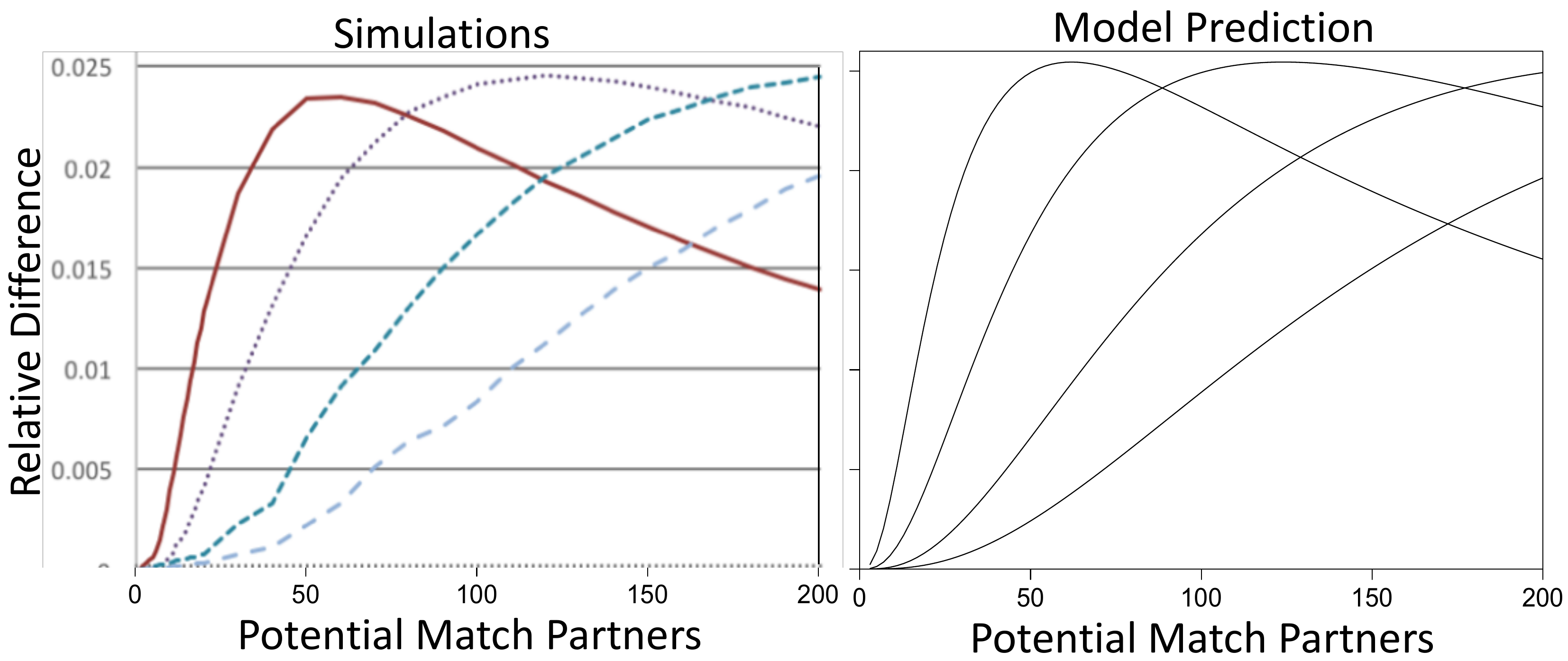} \vspace{-.1 in}}
\caption{Studying the number of matches that form, with 50 participants on one side and varying the number of participants on the other side. Upper left: exact expressions from \citet{marx-schummer_2021} for $V^{RSD}$ (each line represents a different probability of incompatibility $q$). Upper right: the corresponding predictions $\hat{V}^{RSD}$ from our proposed continuum model. Lower left: \citet{marx-schummer_2021} have no analytical expression for $V^{IID}$, but their simulations study the relative difference $(V^{IID} -V^{RSD} )/V^{RSD}$. In our proposed continuum model, it is possible to prove that $\hat{V}^{IID} \geq \hat{V}^{RSD}$: the lower right plot shows the predicted relative difference. By comparison, the continuum model of \citet{azevedo-leshno_2016} predicts that $V^{IID}=V^{RSD}$.
} \label{fig:ms12} 
\end{figure}

Figure \fref{fig:ms12} suggests that IID preferences result in more matches than RSD. This can be proven using our model. \citet{arnosti_2021} derives expressions which correspond to the predictions of our model with vacancy function $\vpois$, when student preferences are uniformly random and school priorities are either IID or RSD. Theorem 3 and Proposition 1 from \citet{arnosti_2021} imply that if the list length distribution has an increasing hazard rate, then IID priorities result in more matches than RSD. Both the binomial and Poisson distributions have an increasing hazard rate.

In addition to being numerically accurate, the predictions $\hat{V}^{RSD}(W,M,q)$ and $\hat{V}^{IID}(W,M,q)$ are analytically tractable. \citet{arnosti_2021} shows that when $W \leq M$,\footnote{The choice $W\leq M$ is without loss of generality, and cleans up the expression in \eqref{eq:simpler} by allowing the use of $W$ in place of $\min(W,M)$ and $M$ in place of $\max(W,M)$.}
\begin{equation} \hat{V}^{RSD}(W,M,q) = W - \frac{\log(1 + e^{-(M-W)(1-q)}- e^{-M(1-q)})}{1-q}.\label{eq:simpler} \end{equation}
This expression is much simpler than that in \eqref{eq:complex}, and more amenable to optimization. Furthermore, whereas calculating $V^{IID}(W,M,q)$ is intractable, \citet{arnosti_2021} shows that $\hat{V}^{IID}(W,M,q)$ is the unique solution to 
\begin{equation} (1 -q)\hat{V}^{IID}(W,M,q) = \log\left(1 - \frac{\hat{V}^{IID}(W,M,q)}{W} \right)  \log\left(1 - \frac{\hat{V}^{IID}(W,M,q)}{M} \right). \label{eq:iid} \end{equation} 
Note that for any desired match rate $V \leq \min(W,M)$, \eqref{eq:iid} gives a concise closed-form expression for the corresponding probability of incompatibility $q$.

\newpage
\section{Conclusion}
Stable matching algorithms are used to assign students to schools in cities across the globe. In theory, the design of school priorities offers a flexible tool for encoding policy objectives.  In practice, the benefits of designing priorities are limited by the fact that the relationship between priorities and the final outcome is complex and poorly understood. 

This paper provides a model that can begin to address these questions. Our model has three desirable features: it is {\em flexible} enough to accommodate complex preferences and priorities, its numerical predictions are extremely {\em accurate}, and it tractable enough to offer new {\em insights}. We use a novel framework for stable matching to show that the only difference between our model and that of \citet{azevedo-leshno_2016} is that they assume that interest at each school is deterministic, whereas we assume that it follows a Poisson distribution. This difference allows our model to make probabilistic predictions that reflect the uncertainty in finite random markets.


Much work remains, including the establishment of rigorous accuracy guarantees in settings with small school capacities. However, the formal guarantees of our model match those of \citet{azevedo-leshno_2016}, and its numerical accuracy is far superior when school capacities are modest. 
We believe that this model offers a fundamentally new perspective on stable matching, which will enable the study of settings -- such as those with small and asymmetrical schools -- which cannot be readily studied using prior techniques.

\remove{
\todo{Suppose we have finite set of students and schools. Create one type for each school. Should be a good prediction for finite market as long as students aren't too ``predictable" (i.e. so long as probability that application is sent to $h$, conditioned on previous applications and priorities, is not too large). One example where approximation is bad (that violates unpredictability) is if there are many small neighborhoods, each with two schools, and students are randomly put into neighborhood, but apply only within. In this case, high applications at school A imply high applications at B, so independence (and also Poisson?) assumptions are violated.}

\todo{Point out that general model is VERY flexible -- arbitrary student types, can account for first school being predictive of others, or last school being a ``safety" in which the student has high priority. Could for example study effect of splitting schools into subschools with reserved seats for different groups -- problem is, you don't know how many will apply! Can't capture this in Azevedo Leshno, as everything is predictable.}

\todo{Steps to do: location-based symmetric. Say, on a circle, with two students in between each school. Students more likely to list close schools, or equally likely to list each, but rank in order of distance. Either way, there is more correlation for nearby schools, but symmetry implies a single $\I$ -- shows power of even one type, and could lead to bad numerical match in theory due to looking like smaller/local markets.

Generate some plot for case with many different school types. Numerics of solving for $\I$ could be tricky here, but could be valuable, given that other plots have only one type and I claim model is good well beyond that.}
}


\bibliographystyle{ACM-Reference-Format}
\bibliography{Bibliography}

  \appendix

\renewcommand\thesection{\Alph{section}}
\setcounter{section}{0}

\section{Proofs from Section \ref{sec:results}}

\subsection{Proof of Proposition \tref{prop:finite}} \label{app:prop-finite}

The following Lemma states that for any $(\eta^\S, \vdet)$-stable outcome, the enrollment at each school will be the minimum of the school's capacity and the number of interested students. 

\begin{lemma} \label{lem:enrollment-finite}
If $\S$ is a finite subset of $\Theta$ with strict priorities, and $(M,I,A)$ is $(\eta^\S,\vdet)$-stable, then for any $h \in \H$ and $p \in [0,1]$, 
\[\sum_{\theta \in \S} {\bf 1}(p_h^\theta > p) M_h(\theta) = \min (I_h(p),C_h). \]
\end{lemma}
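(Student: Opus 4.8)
The plan is to work entirely at a single school $h$, using only the three consistency relations that a stable outcome $(M,I,A)$ satisfies there, so that no induction across schools is needed. First I would record that since $\mV=\vdet$ is $\{0,1\}$-valued, the admissions function $A$ and the matching $M$ are deterministic. Writing $a_h(\theta) = 1 - \sum_{h' \succ^\theta h} M_{h'}(\theta)$ for the indicator that $\theta$ is ``available'' for $h$ (matched to no preferred school), equations \eqref{eq:enroll-prob} and \eqref{eq:unravel} give $M_h(\theta) = A_h(p_h^\theta)\,a_h(\theta)$, so a student of $\S$ enrolls at $h$ exactly when she is available for $h$ and admitted to $h$. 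The quantity to compute is therefore
\[ \sum_{\theta \in \S} {\bf 1}(p_h^\theta > p)\, M_h(\theta) = \sum_{\theta \in \S} {\bf 1}(p_h^\theta > p)\, a_h(\theta)\, A_h(p_h^\theta), \]
where, by the stability relations, $I_h(q) = \sum_{\theta \in \S} {\bf 1}(p_h^\theta \ge q)\, a_h(\theta)$ is the number of available students of priority at least $q$, and $A_h(q) = {\bf 1}(I_h(q) < C_h)$.

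Next I would establish the threshold structure of admissions. Since $I_h$ is nonincreasing and $A_h(q) = {\bf 1}(I_h(q) < C_h)$, the set of admitted priorities is an up-set, so I would show that an available student is admitted to $h$ if and only if she lies among the highest-priority available students, up to the capacity $C_h$. Concretely, list the available students for $h$ in decreasing priority order $q_1 > q_2 > \cdots > q_m$ (distinct because $\S$ has strict priorities); evaluating $I_h$ at each $q_k$ counts precisely the available students of priority at least $q_k$, which pins down whether the $k$-th such student clears the admissions threshold. This identifies the enrolled students above $p$ as exactly the top available students with priority above $p$, capped at $C_h$.

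Finally I would count. The enrolled students with priority above $p$ are exactly the available students with priority above $p$ that lie among the top $C_h$ available students. If $I_h(p) \le C_h$, then all available students above $p$ lie in the top $C_h$ and are enrolled, so the count is $I_h(p)$; if $I_h(p) > C_h$, then the top $C_h$ available students all have priority above $p$ and are enrolled, while the remaining available students above $p$ are rejected, so the count is $C_h$. Either way the total is $\min(I_h(p),C_h)$, as claimed.

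The main obstacle is the boundary bookkeeping in the threshold step: because the admissions rule $A_h(q)={\bf 1}(I_h(q)<C_h)$ evaluates $I_h$ at a student's own priority, I must track carefully how a student at $p_h^\theta$ is counted in $I_h(p_h^\theta)$ and how the marginal student sitting at the cutoff is treated, and I must reconcile the weak inequality appearing in the definition of $I_h$ with the strict inequality ${\bf 1}(p_h^\theta > p)$ in the enrollment sum. This is exactly where strict priorities do the work: they guarantee at most one student at any priority level, so the ordering $q_1 > \cdots > q_m$ is well defined and the counts are unambiguous, converting the threshold into the clean ``top $C_h$'' characterization. A secondary point to keep straight is that the $I$ appearing in $A = \mA(I)$ is the same interest function induced by $M$; this is automatic for a stable outcome by Definition \ref{def:stability}, and it is what makes the self-referential count consistent.
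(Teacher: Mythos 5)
Your overall strategy is essentially the paper's argument in discrete dress: both proofs reduce enrollment at $h$ to ``available and admitted'' via \eqref{eq:enroll-prob} and \eqref{eq:unravel}, observe that $A_h$ is a priority threshold, and then count by cases according to whether $I_h(p)$ is below or at least $C_h$. Your ``top $C_h$ available students'' formulation is a restatement of the paper's cutoff $\bar p = \inf\{p : I_h(p) < C_h\}$, and your final counting paragraph is correct \emph{once} the threshold characterization is in hand.

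However, the threshold step as you wrote it has a genuine gap, sitting exactly where you flag ``the main obstacle'' --- and your proposed resolution (strict priorities) does not close it. You take $I_h(q) = \sum_{\theta \in \S} {\bf 1}(p_h^\theta \ge q)\, a_h(\theta)$ with a weak inequality, following the display in \eqref{eq:continuum-interest}. Under that convention the $k$-th available student, with priority $q_k$, satisfies $I_h(q_k) = k$, so $A_h(q_k) = {\bf 1}(k < C_h)$ admits only the top $C_h - 1$ available students, not the top $C_h$; your characterization, and indeed the lemma itself, then fail. Concretely: one school with $C_h = 1$ and two students listing only $h$, with priorities $0.8 > 0.5$. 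With the weak count, $I_h(0.8) = 1 \not< 1$, the top student is rejected, the empty matching is a fixed point of $\M(\mA(\mI(\cdot)))$, and the claimed identity reads $0 = \min(2,1)$. Strict priorities cannot repair this: they only guarantee $q_1 > \cdots > q_m$ are distinct, and say nothing about whether a student counts \emph{herself} in $I_h$ evaluated at her own priority, which is the entire question for the marginal student. The needed ingredient is the strict-count convention --- $I_h(p)$ counts available students with priority \emph{strictly above} $p$, matching the paper's verbal interpretation (``priority at $h$ exceeds $p$'') and the convention its proof actually uses: the paper asserts $I_h$ is right-continuous (true only under the strict count, since for the counting measure $\eta^\S$ the weak count is left-continuous instead), deduces $I_h(\bar p) < C_h$, hence $I_h(\bar p) = C_h - 1$ by unit jumps, and concludes that the marginal student at $\bar p$ is admitted and contributes exactly one to enrollment, bringing the total to $C_h$. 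Under the strict count your computation becomes $I_h(q_k) = k-1$ and admission iff $k \le C_h$, after which everything you wrote goes through. So: right skeleton, but you must commit to the strict convention at the student's own priority rather than appeal to strict priorities to settle the marginal case.
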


\begin{proof}[Proof of Lemma \ref{lem:enrollment-finite}]
Define $\bar{p} = \inf \{p \in [0,1] : I_h(p) < C_h\}$. Note that \eqref{eq:continuum-interest} and \eqref{eq:etaS} imply that $I_h$ is right-continuous, and therefore $I_h(\bar{p}) < C_h$. It follows from \eqref{eq:admissions} and \eqref{eq:vdet} that for $p \in [0,1]$,
\begin{equation}A_h(p) = \vdet(I_h(p),C_h) = {\bf 1}(I_h(p) < C_h) = {\bf 1}(p \geq \bar{p}). \label{eq:ad-simple}\end{equation}

Combining \eqref{eq:unravel} and \eqref{eq:ad-simple} we see that if $I_h(p) < C_h$ then $p \geq \bar{p}$ and
\begin{equation} \sum_{\theta \in \S}{\bf 1}(p_h^\theta > p)M_h(\theta) = \sum_{\theta \in \S}{\bf 1}(p_h^\theta > p)(1- \sum_{h' \succ^\theta h} M_{h'}(\theta) ) = I_h(p),\label{eq:matched} \end{equation}
where the final inequality uses \eqref{eq:continuum-interest} and \eqref{eq:etaS}.

Meanwhile, if  $I_h(p) \geq C_h$, then $\bar{p} > p$, and \eqref{eq:unravel} and \eqref{eq:ad-simple} imply that
\begin{align} \sum_{\theta \in \S}{\bf 1}(p_h^\theta > p)M_h(\theta) &  = \sum_{\theta \in \S}{\bf 1}(p_h^\theta > \bar{p})(1- \sum_{h' \succ^\theta h} M_{h'}(\theta) ) + \sum_{\theta \in \S}{\bf 1}(p_h^\theta = \bar{p})(1- \sum_{h' \succ^\theta h} M_{h'}(\theta) ), \nonumber \\
& =  I_h(\bar{p}) + \sum_{\theta \in \S}{\bf 1}(p_h^\theta = \bar{p})(1- \sum_{h' \succ^\theta h} M_{h'}(\theta) ), \label{eq:final-step-prop1}
\end{align}
where the second line also follows from \eqref{eq:ad-simple}.

Note that because $\vdet(\lambda,C) \in \{0,1\}$, \eqref{eq:admissions} and \eqref{eq:enroll-prob} imply that $M_h(\theta) \in \{0,1\}$ and therefore \eqref{eq:continuum-interest} implies that $I_h(p) \in \mathbb{N}$. Furthermore, the fact that $\S$ has strict priorities implies that at discontinuities of $I_h$, it decreases by exactly one. We know from the definition of $\bar{p}$ that $I_h(\bar{p}) < C_h$ but $I_h(p) \geq C_h$ for all $p < \bar{p}$, so $I_h(\bar{p}) = C_h-1$ and $\sum_{\theta \in \S}{\bf 1}(p_h^\theta = \bar{p})(1- \sum_{h' \succ^\theta h} M_{h'}(\theta) ) = 1$. This implies that the expression in \eqref{eq:final-step-prop1} is equal to $C_h$, completing the proof.
\end{proof}

\begin{proof}[Proof of Proposition \tref{prop:finite}]
We first suppose that $(M,I,A)$ is $(\eta^\S,\vdet)$-stable, and show that $\mu^M$ has no blocking pairs. Note that the definition of $\vdet$ in \eqref{eq:vdet} implies that for all $\lambda\in \mathbb{R}_+, C \in \mathbb{N}$ we have $\vdet(\lambda,C) \in \{0,1\}$, so by \eqref{eq:admissions} and \eqref{eq:enroll-prob}, $M$ is deterministic and $\mu^M$ is well-defined. We now show that $\mu$ is feasible. Note that
\begin{align*}
 \sum_{\theta \in \S} M_h(\theta) & = \sum_{\theta \in \S} M_h(\theta){\bf 1}(p_h^\theta > 0) + \sum_{\theta \in \S} M_h(\theta){\bf 1}(p_h^\theta = 0) \\
 & \leq \sum_{\theta \in \S} M_h^\theta{\bf 1}(p_h^\theta > 0) + \eta^{\S}(\Theta_h(0)) A_h(I_h(0)) \\
 & = \min(I_h(0),C_h) + \eta^{\S}(\Theta_h(0)) {\bf 1}(I_h(0) < C_h) \\
 & \leq \min(I_h(0),C_h) + {\bf 1}(I_h(0) < C_h) \\
 & \leq C_h.
 \end{align*}
 The second line follows from $M_h(\theta) \leq A_h(p_h^\theta)$ (see \eqref{eq:enroll-prob}) and $\sum_{\theta \in \S} {\bf 1}(p_h^\theta = 0) = \eta^{\S}(\Theta_h(0))$ (see \eqref{eq:etaS}); the third from Lemma \ref{lem:enrollment-finite}, along with \eqref{eq:admissions} and \eqref{eq:vdet}, and the fourth because $\S$ has strict priorities.

Finally, we show that $\mu^M$ has no blocking pairs. By definition, if $h \succ^{\theta'} \mu^M(\theta')$ then $M_h(\theta') = 0$, and $A_h(p_h^{\theta'}) = 0$ by \eqref{eq:enroll-prob}. From this, \eqref{eq:admissions} and \eqref{eq:vdet} imply that $I_h(p_h^{\theta'})) \geq C_h$, so by Lemma \ref{lem:enrollment-finite}, 
\[|\{\theta \in \S : \mu^M(\theta) = h, p_h^\theta > p_h^{\theta'}\}| = \sum_{\theta \in \S} M_h(\theta){\bf 1}(p_h^\theta > p_h^{\theta'}) = \min(I_h(p_h^{\theta'}),C_h) = C_h.\]
That is, \eqref{eq:no-blocking} holds, so $\mu^M$ has no blocking pairs.

Next, we assume that $\mu$ is an $\S$-matching with no blocking pairs, and show that 
\begin{enumerate}[label=\roman*)]
\item $M^\mu$ ``agrees" with $\mu$: for $\theta \in \S, h \in \H_0$, we have 
\begin{equation} M^\mu_h(\theta) = {\bf 1}(\mu(\theta) = h). \label{eq:agreement}\end{equation}
\item $M^\mu$ is $(\eta^{\S}, \vdet)$-stable.
\end{enumerate}
We start by showing \eqref{eq:agreement}. Fix $\theta' \in \S$. Then for any $h \succ^{\theta'} \mu(\theta')$, the fact that $\mu$ has no blocking pairs implies that \eqref{eq:no-blocking} holds, from which the definition of $A^\mu$ in \eqref{eq:amu} implies that $A_h^\mu(p_h^{\theta'}) = 0$, so $M_h^\mu(\theta') = 0$. Meanwhile, for $h' = \mu(\theta')$, feasibility of $\mu$ implies 
\[| \{\theta \in \S : \mu(\theta) = h', p_{h'}^\theta > p_{h'}^{\theta'}\}| < | \{\theta \in \S : \mu(\theta) = h'\}| \leq C_{h'}.\]
Therefore, the definition of $A^\mu$ in \eqref{eq:amu} implies that $A_{h'}^\mu(p_{h'}^{\theta'}) = 1$, from which \eqref{eq:enroll-prob} implies that $M_{h'}(\theta') = 1$ (and that $M_h(\theta) = 0$ for all $h$ such that $\mu(\theta') \succ^{\theta'} h$). Thus, \eqref{eq:agreement} holds, implying that $\mu = \mu^{M^\mu}$.

We now show that $M^\mu$ is $(\eta^{\S}, \vdet)$-stable. Define $I^\mu = \mI(M^{\mu})$. Then we have
\begin{align}
I^\mu_h(p) & = \sum_{\theta \in \S} {\bf 1}(p_h^\theta > p)(1 - \sum_{h' \succ^\theta h} M_h^\mu(\theta)) \label{eq:ai-firststep} \\
& \geq \sum_{\theta \in \S} {\bf 1}(p_h^\theta > p) {\bf 1}(\mu(\theta) = h) = | \{\theta \in \S : p_{h}^\theta > p,\mu(\theta) = h\}|, \label{eq:ai-ineq}
\end{align}
where the first equality follows from \eqref{eq:continuum-interest} and the definition of $\eta^{\S}$ in \eqref{eq:etaS}, and the inequality from \eqref{eq:agreement}. We claim that for $h \in \H_0$, $p \in [0,1]$,
\begin{equation} A^\mu_h(p) = {\bf 1}(I^\mu_h(p) < C_h) = \vdet(I^\mu_h(p),C_h), \label{eq:astable}\end{equation}
implying that $A^\mu = \mA(I^\mu) = \mA(\mI(\M(A^\mu)))$, so $A^\mu$ is $(\eta^{\S}, \vdet)$-stable, and therefore so is $M^\mu$. To establish \eqref{eq:astable}, we show that $A^\mu_h(p) = 0$ implies $I_h^\mu(p) \geq C_h$, and $A^\mu_h(p) = 1$ implies $I_h^\mu(p) < C_h$.

If $A^\mu_h(p) = 0$, then by definition of $A^\mu$ in \eqref{eq:amu}, 
\[ | \{\theta \in \S : p_{h}^\theta > p,\mu(\theta) = h\}| \geq C_h. \]
By \eqref{eq:ai-ineq}, this implies that $I_h^\mu(p) \geq C_h$. Conversely, if $A_h^\mu(p) = 1$, then by definition
\[ | \{\theta \in \S : p_{h}^\theta > p,\mu(\theta) = h\}| < C_h. \]
This implies that for each $\theta \in \S$ that contributes to the sum in \eqref{eq:ai-firststep}, $\mu(\theta) = h$ (otherwise, \eqref{eq:no-blocking} would be violated). Therefore, the inequality in \eqref{eq:ai-ineq} is an equality, implying that $I_h^\mu(p) < C_h$.
\end{proof}

\subsection{Proof of Proposition \tref{prop:al}} \label{app:prop-al}

We start by establishing an analog to Lemma \ref{lem:enrollment-finite}, which says that for any $(\eta,\mv)$-stable outcome, the measure of students matched to school $h$ can be determined by the measure of interest in $h$. 
\begin{lemma} \label{lem:enrollment}
If $\eta$ has strict priorities and $\mV$ is weakly decreasing in its first argument, then for any $(\eta, \mV)$-stable outcome $(M,\I,A)$, any school $h \in \H$, and any $p \in [0,1]$,
\[ \int {\bf 1}(p_h^\theta > p) M_h(\theta) d\eta(\theta)=  \int_{0}^{\I_h(p)} \mV(\lambda,C_h) d \lambda.\]
\end{lemma}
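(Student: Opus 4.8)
The plan is to use the three stability relations from Definition~\ref{def:stability} to rewrite both sides in terms of a single weight attached to each type, and then reduce the claim to a change-of-variables identity on $[0,1]$.

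First, since $(M,\I,A)$ is stable we have $M = \M(A)$, $\I = \mI(M)$, and $A = \mA(\I)$. For each $\theta$ I write $g_h^\theta = \prod_{h' \succ^\theta h}(1 - A_{h'}(p_{h'}^\theta))$ for the probability that $\theta$ is admitted to no school preferred to $h$. Then \eqref{eq:enroll-prob} and \eqref{eq:admissions} give $M_h(\theta) = \mV(\I_h(p_h^\theta), C_h)\, g_h^\theta$, while \eqref{eq:continuum-interest} combined with \eqref{eq:unravel} gives $\I_h(p) = \int {\bf 1}(p_h^\theta \ge p)\, g_h^\theta\, d\eta(\theta)$. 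Note that $g_h^\theta = 0$ whenever $\emptyset \succ^\theta h$, since the product then contains the factor $1 - A_\emptyset(\cdot) = 0$; thus only types who find $h$ acceptable carry positive weight.

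Second, I would push the weighted measure $g_h^\theta\, d\eta(\theta)$ forward along $\theta \mapsto p_h^\theta$ to obtain a finite Borel measure $\nu$ on $[0,1]$, namely $\nu(B) = \int {\bf 1}(p_h^\theta \in B)\, g_h^\theta\, d\eta(\theta)$. With this definition the two expressions above become
\[ \I_h(p) = \nu([p,1]), \qquad \int {\bf 1}(p_h^\theta > p) M_h(\theta)\, d\eta(\theta) = \int_{(p,1]} \mV(\I_h(q), C_h)\, d\nu(q), \]
so the lemma reduces to showing $\int_{(p,1]} \mV(\I_h(q),C_h)\,d\nu(q) = \int_0^{\I_h(p)} \mV(\lambda, C_h)\, d\lambda$.

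Third, the substance is a change of variables driven by the observation that $q \mapsto \I_h(q) = \nu([q,1])$ pushes $\nu$ forward to Lebesgue measure. The key input is that $\nu$ is atomless: for any $p_0$, $\nu(\{p_0\}) \le \eta(\indif_h(p_0)) = 0$ by strict priorities (Definition~\ref{def:strict}), using that $g_h^\theta$ vanishes on types for which $h$ is unacceptable and is bounded by $1$ on the rest. Atomlessness makes $\I_h$ continuous and non-increasing and yields the identity $\nu((q_1,q_2]) = \I_h(q_1) - \I_h(q_2)$ for $p \le q_1 < q_2 \le 1$. From this I would compute, for $0 < a < \I_h(p)$, the survival function $\nu(\{q \in (p,1] : \I_h(q) > a\}) = \nu((p, q^*)) = \I_h(p) - a$, where $q^* = \inf\{q : \I_h(q) \le a\}$ satisfies $\I_h(q^*) = a$ by continuity. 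Hence the pushforward of $\nu|_{(p,1]}$ under $\I_h$ is exactly Lebesgue measure on $[0, \I_h(p))$, and the standard pushforward formula delivers the claimed identity.

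The step I expect to be the main obstacle is this change-of-variables argument: establishing atomlessness of $\nu$ (where the strict-priorities hypothesis is essential, and where one must be careful that flat stretches of $\I_h$ receive no $\nu$-mass) and then handling the continuity and endpoint bookkeeping needed to identify the pushforward with Lebesgue measure. Once $\I_h$ is known to be continuous and to satisfy $\nu((q_1,q_2]) = \I_h(q_1) - \I_h(q_2)$, the remaining computation is routine, and the discrepancy between the strict inequality ${\bf 1}(p_h^\theta > p)$ on the left and the weak inequality in $\I_h$ disappears precisely because $\nu$ has no atom at $p$.
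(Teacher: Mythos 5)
Your proof is correct, and it reaches the conclusion by a genuinely different route than the paper. The paper proves the same underlying fact---that $\I_h$ transports the weighted type measure to Lebesgue measure---via an explicit Riemann-sum squeeze: fix $n$, set levels $L_i = i/n$ up to $L_m = \I_h(p)$, use continuity and monotonicity of $\I_h$ (from strict priorities) to choose thresholds $1 = P_0 > P_1 > \cdots > P_m = p$ with $\I_h(P_i) = L_i$, bound $\mV(\I_h(p_h^\theta),C_h)$ on each band $(P_i, P_{i-1}]$ by its value $\mV(L_i,C_h)$ at the band edge, compute each band's weighted mass as exactly $L_i - L_{i-1} = 1/n$ via \eqref{eq:continuum-interest} and \eqref{eq:unravel}, and let $n \rightarrow \infty$ to sandwich the integral from both sides. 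You package that same band-mass computation abstractly, as the statement that the pushforward of $\nu$ (the image of $g_h^\theta\, d\eta(\theta)$ under $\theta \mapsto p_h^\theta$) under the map $\I_h$ is Lebesgue measure on $[0,\I_h(p))$, verified through survival functions and atomlessness. Two differences are worth flagging. First, your handling of atomlessness is actually more careful than the paper's: the paper asserts continuity of $\I$ directly from strict priorities, whereas, as you note, the indifference sets $\indif_h(p_0)$ of Definition \ref{def:strict} cover only types with $h \succ^\theta \emptyset$, and one needs $g_h^\theta = 0$ on the remaining types (via $A_\emptyset \equiv 1$) to rule out atoms contributed by types that find $h$ unacceptable---a point the paper leaves implicit. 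Second, the paper's squeeze uses monotonicity of $\mV$ essentially, to compare $\mV(\I_h(p_h^\theta),C_h)$ with its band-edge value, while your change of variables uses the monotonicity hypothesis only to guarantee Borel measurability of $\mV(\cdot,C_h)$; your argument therefore establishes the identity for any bounded measurable vacancy function, a mild strengthening. The trade-off is machinery versus length: the paper's proof is elementary and self-contained, while yours invokes the pushforward formalism but eliminates the discretization and the limiting step.
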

If $\mV = \vdet$, then the expression on the right is $\min(\I_h(p), C_h)$, matching that in Lemma \ref{lem:enrollment-finite}. However, Lemma \ref{lem:enrollment} provides a more general expression that holds for any monotone vacancy function.\todo{\footnote{\na{The result does not hold when $\eta$ is a discrete measure. To see this, consider a simple case with one school with capacity one, and two students, with priority $p_H > p_L$. Regardless of what matching we start with, both students are interested in the school, so the resulting interest function is $I(p) = {\bf 1}(p \le p_H) + {\bf 1}(p \le p_L)$ (a step function that steps down at $p_L$ and $p_H$). For this interest function, if we use $\vpois$, the resulting admissions function is $A(p) = 1$ for $p \in [p_H,1]$, $A(p) = 1/e$ for $p \in [p_L,p_H)$ and $A(p) = 1/e^2$ for $p \in [0,p_L)$. Therefore, the top student gets in for sure, and the lower student with probability $1/e$. The total number of matches predicted by the stable matching $M$ is $1 + 1/e$; this is more than $\int_0^\infty \vpois(\lambda,1) d\lambda = 1$.}}}

\begin{proof}[Proof of Lemma \ref{lem:enrollment}]
%
Fix $n \in \mathbb{N}$, define $m = \lceil nI_h(p) \rceil$, and define $\{L_i\}_{i = 0}^{m}$ by $L_i = i/n$ for $i < m$, and $L_m = I_h(p)$. Note that if $\eta$ is a continuum measure with strict priorities, then \eqref{eq:continuum-interest} implies that $\I$ is continuous and decreasing. In particular, this implies that we can choose $1 = P_0 > P_1 > \cdots P_m = p$ such that $I_h(P_i) = L_i$ for each $i$. We claim the following chain of inequalities:
\begin{align}
\int {\bf 1}(p_h^\theta > p) M_h(\theta) d\eta(\theta) & = \int {\bf 1}(p_h^\theta > p) A_h(p_h^\theta) \prod_{h' \succ^\theta h} (1 - A_{h'}(p_{h'}^\theta)) d\eta(\theta)\nonumber  \\
& = \int {\bf 1}(p_h^\theta > p) \mV(I_h(p_h^\theta),C_h) \prod_{h' \succ^\theta h} (1 - A_{h'}(p_{h'}^\theta)) d\eta(\theta) \nonumber\\
& = \int \sum_{i = 1}^m {\bf 1}(P_{i-1} \geq p_h^\theta > P_{i})  \mV(I_h(p_h^\theta),C_h) \prod_{h' \succ^\theta h} (1 - A_{h'}(p_{h'}^\theta)) d\eta(\theta)\nonumber\\
& \geq  \sum_{i = 1}^m \mV(L_i,C_h) \int {\bf 1}(P_{i-1} \geq p_h^\theta > P_{i}) \prod_{h' \succ^\theta h} (1 - A_{h'}(p_{h'}^\theta)) d\eta(\theta). \label{eq:first-step}
\end{align}
The first equality holds from \eqref{eq:enroll-prob}, the second from \eqref{eq:admissions}, and the third by definition of $P_i$. The final inequality comes from the fact that $\mV$ is weakly decreasing in its first argument and $I_h$ is weakly decreasing, and thus $p_h^\theta > P_i$ implies  $\mV(I_h(p_h^\theta),C_h) \geq \mV(I_h(P_i),C_h) = \mV(L_i,C_h)$. Note that 
\begin{align}
 \int {\bf 1}(P_{i-1} \geq \,\,& p_h^\theta > P_{i}) \prod_{h' \succ^\theta h} (1 - A_{h'}(p_{h'}^\theta)) d\eta(\theta)\nonumber \\
  & = \int{\bf 1}(p_h^\theta > P_{i}) \prod_{h' \succ^\theta h} (1 - A_{h'}(p_{h'}^\theta)) d\eta(\theta) - \int {\bf 1}(p_h^\theta > P_{i-1}) \prod_{h' \succ^\theta h} (1 - A_{h'}(p_{h'}^\theta)) d\eta(\theta) \nonumber\\
  & = I_h(P_i) - I_h(P_{i-1})\nonumber \\
  & = L_i - L_{i-1}. \label{eq:second-step}
 \end{align}
 The second equality follows from \eqref{eq:continuum-interest} and the third from the choice of $P_i$. Combining \eqref{eq:first-step} and \eqref{eq:second-step}, and noting that $L_{i} - L_{i - 1} = 1/n$ for $i < m$, we get 
 \[\int {\bf 1}(p_h^\theta > p) M_h(\theta) d\eta(\theta) \geq \frac{1}{n} \sum_{i = 1}^{m-1} \mV(L_i,C_h). \]
This holds for any $n \in \mathbb{N}$. Taking the limit as $n \rightarrow \infty$ yields
\begin{equation} \int {\bf 1}(p_h^\theta > p) M_h(\theta) d\eta(\theta) \geq \int_0^{I_h(p)} \mV(\lambda,C_h)d\lambda. \label{eq:done} \end{equation}
The inequality in \eqref{eq:first-step} can be reversed if we replace $\mV(L_i,C_h)$ with $\mV(L_{i-1},C_h)$. From there, an analogous argument implies \eqref{eq:done} with the inequality reversed, completing the proof.

\end{proof}

\begin{proof}[Proof of Proposition \tref{prop:al}]
\todo{Could take a pass to add more intuition/explanation.} Given cutoffs $P \in [0,1]^\H$, we define
\begin{align}
 A^P_h(p) & =  {\bf 1}( p > P_h).\label{eq:aP} \\
 M^P & = \M(A^P). \label{eq:mP}
 \end{align}

We begin by noting two equalities that will repeatedly prove useful. Note that the definition of $\mA$ through \eqref{eq:admissions} and \eqref{eq:vdet}, and the definition of $A^P$ and $\P(I)$ in \eqref{eq:aP} and \eqref{eq:cutoffs} imply that for any interest function $I \in \Interest$,
\begin{equation} \hspace{.3 in} \mA(I) = A^{\P(I)}. \label{eq:a-equality}\end{equation}
Furthermore, the definition of $I^P$, $A^P$ and $M^P$ in \eqref{eq:iP}, \eqref{eq:aP} and \eqref{eq:mP}, along with \eqref{eq:continuum-interest} and \eqref{eq:enroll-prob}, imply that for any cutoff vector $P$, 
\begin{equation} I^P = \mI(\M(A^P)). \label{eq:i-equality}\end{equation}

We first assume that $I$ is $(\eta,\vdet)$-stable, and show that $I = I^{\P(I)}$. 
By Definition \ref{def:stability},  if $I$ is $(\eta,\vdet)$-stable, then so is
\begin{equation} \M(\mA(I)) = \M(\mA^{\P(I)}) =M^{\P(I)}, \label{eq:m-coincide} \end{equation}
where the equalities follow from \eqref{eq:a-equality} and \eqref{eq:mP}, respectively. Furthermore, stability of $I$ implies that
\begin{equation} I = \mI(\M(\mA(I))) = \mI(\M(\mA^{\P(I)})) = I^{\P(I)}, \label{eq:i-coincide}\end{equation}
where the second and third equalities follow from \eqref{eq:m-coincide} and \eqref{eq:i-equality}, respectively. 

Next, we show that $\P(I)$ is market-clearing. We claim that
\begin{equation} D_h(\P(I)) = \int M_h^{\P(I)}(\theta) d\eta(\theta) = \int_0^{I_h(0)} \vdet(\lambda,C_h) d\lambda = \min(I_h(0),C_h) \leq C_h. \label{eq:demand-chain} \end{equation}
The first equality follows from \eqref{eq:demand}, the second from Lemma \ref{lem:enrollment} and the fact that $\mI(M^\P(I)) = I$ (by \eqref{eq:m-coincide} and \eqref{eq:i-coincide}), and the third from the definition of $\vdet$ in \eqref{eq:vdet}. Furthermore, if $\P_h(I) > 0$, it follows from definition of $\P$ in \eqref{eq:cutoffs} that $I_h(P_h) \geq C_h$ (this also uses the fact that $I_h$ is continuous, which follows from \eqref{eq:continuum-interest} and the fact that $\eta$ is a continuum measure with strict priorities). Because \eqref{eq:continuum-interest} implies that $I_h$ is weakly decreasing, it follows that $I_h(0) \geq C_h$, and therefore the inequality in \eqref{eq:demand-chain} is tight. Therefore, $\P(I)$ is market-clearing.

Finally, we show that if $P \in [0,1]^\H$ is market-clearing, then $I^P$ is $(\eta,\vdet)$-stable. By \eqref{eq:a-equality} and  \eqref{eq:i-equality},
\begin{equation}\mI(\M(\mA(I^P))) = \mI(\M(A^{\P(I^P)})) = I^{\P(I^P)}. \label{eq:partway-to-stability} \end{equation}
We wish to show that this is equal to $I^P$. For less cumbersome notation, we define the cutoff vector 
\begin{equation} \tilde{P} = \P(I^P). \label{eq:tildeP} \end{equation}
The steps to prove that $I^{\tilde{P}} = I^P$ are as follows:
\begin{enumerate}[label=\Roman*.]
\item \label{step:one} Show that $I_h^P(P_h) = D_h(P)$. Conclude that 
\begin{equation} \tilde{P}_h \geq P_h.  \label{eq:p-ineq}\end{equation}
\item \label{step:two} Define 
\begin{equation} \Delta_h = \{ \theta : p_h^\theta \in (P_h,\tilde{P}_h], p_{h'}^\theta \leq P_{h'} \text{ for all } h' \succ^{\theta} h \}  \label{eq:delta-h}\end{equation}
to be the $\eta$-measure of students who have priority between $P_h$ and $\tilde{P}_h$ at $h$, and are not admitted to any school preferred to $h$ under either $P$ or $\tilde{P}$. Establish that $I_h^P$ is constant on $(P_h,\tilde{P}_h]$, and therefore that
\begin{equation} 0 = I_h^P(P_h) - I_h^P(\tilde{P}_h) = \eta(\Delta_h).\label{eq:meas-zero} \end{equation}
\item \label{step:three} Conclude that for all $h \in \H$ and $p \in [0,1]$, 
\begin{equation} I^{\tilde{P}}_h(p) = I_h^P(p). \label{eq:no-difference} \end{equation}
\end{enumerate}
We now establish step \ref{step:one} Note that
\begin{align*} 
I_h^P(p) & = \int {\bf 1}(p_h^\theta > p)(1 - \sum_{h' \succ^\theta h} M_{h'}^P(\theta)) d\eta(\theta)\\
& = \int {\bf 1}(p_h^\theta > p)\prod_{h' \succ^\theta h}(1 -  A_{h'}^P(p_{h'}^\theta)) d\eta(\theta),
\end{align*} 
where the first line follows from \eqref{eq:i-equality} and \eqref{eq:continuum-interest}, and the second from \eqref{eq:unravel}. It follows that 
\begin{align}
I_h^P(P_h) & =  \int A_h^P(p_h^\theta)\prod_{h' \succ^\theta h}(1 -  A_{h'}^P(p_{h'}^\theta)) d\eta(\theta)\nonumber \\
& = \int M_h^P d\eta(\theta) =  D_h(P),\label{eq:ieqd}
\end{align}
where the second line follows from \eqref{eq:enroll-prob} and the definitions of $A^P$ and $D_h(P)$ in \eqref{eq:aP} and \eqref{eq:demand}. From \eqref{eq:ieqd} and the definition of $\P(\cdot)$ in \eqref{eq:cutoffs}, \eqref{eq:p-ineq} follows.

Next, we move to step \ref{step:two} Because $\eta$ is a continuous measure with strict priorities, \eqref{eq:continuum-interest} implies that $I^P_h$ is continuous for each $h \in \H_0$. Therefore, the definition of $\P(\cdot)$ in \eqref{eq:cutoffs} implies that either $\tilde{P}_h = 0$ (in which case \eqref{eq:p-ineq} implies that $P_h = \tilde{P}_h$), or $I^P_h(\tilde{P}) = C_h$. But then \eqref{eq:p-ineq} and the fact that $I^P$ is decreasing imply that 
\[I_h^P(P_h) \geq I_h^P(\tilde{P}_h) = C_h.\]
Because $P$ is market-clearing, $D_h(P) \leq C_h$, implying that the inequality above must hold with equality. Therefore, $I_h^P$ is constant on $(P_h,\tilde{P}_h]$. In particular, applying the definition of $I^P$ in \eqref{eq:iP} reveals that \eqref{eq:meas-zero} holds.

Finally, we move to step \ref{step:three} By \eqref{eq:iP}, for any $h \in \H$ and $p \in [0,1]$ we have
\begin{equation} I_h^{\tilde{P}}(p) - I_h^P(p)  = \eta( \Delta), \label{eq:difference-delta}\end{equation}
where
\[ \Delta = \{\theta : p_h^\theta > p, \prod_{h' \succ^\theta h} {\bf 1}(p_{h'}^\theta \leq \tilde{P}_{h'}) - \prod_{h' \succ^\theta h} {\bf 1}(p_{h'}^\theta \leq P_{h'}) = 1  \}\]
That is, the difference $I_h^{\tilde{P}}(p) - I_h^P(p)$ is the measure of students who have priority above $p$ at $h$, and are admitted to a school preferred to $h$ under cutoffs $P$, but are not admitted to any such school under $\tilde{P}$. For any $\theta \in \Delta$, there is some most-preferred school $h'$ where $\theta$ is admitted under $P$ but not under $\tilde{P}$. Then \eqref{eq:delta-h} implies that $\theta \in \Delta_{h'}$. In other words, $\Delta \subseteq \bigcup_{h \in \H} \Delta_h$. From this, \eqref{eq:meas-zero} implies that $\eta(\Delta) = 0$, and \eqref{eq:difference-delta} implies that \eqref{eq:no-difference} holds. This establishes that $I^P = I^{\tilde{P}} = I^{\P(I^P)}$, from which \eqref{eq:partway-to-stability} implies that $I^P$ is stable.


\end{proof}

\subsection{Proof of Theorems \tref{thm:rht} and \tref{thm:uniqueness}} \label{app:main-results}

\begin{proof}[Proof of Theorem \tref{thm:rht}] 
By Theorem \tref{thm:existence}, there exist maximal and minimal stable outcomes, corresponding to the school-optimal and student-optimal stable outcomes, respectively. Denote these outcomes by $(M^H, \I^H,A^H) \succeq (M^L, \I^L,A^L)$, respectively. It is enough to prove the result for these outcomes. Note that 
\begin{align}
\int \sum_{h \succ^\theta \emptyset} M_h^L(\theta) d\eta(\theta) &  = \sum_{h \in \H} \int M_h^L(\theta) d\eta(\theta) = \sum_{h \in \H} \int_0^{\I_h^L(0)} \mV(\lambda,C_h) d\lambda \nonumber \\
&  \geq \sum_{h \in \H} \int_0^{\I^H_h(0)} \mV(\lambda,C_h) = \sum_{h \in \H} \int M^H_h(\theta) d\eta(\theta) = \int \sum_{h \succ^\theta \emptyset} M^H_h(\theta) d\eta(\theta). \label{eq:geq} \end{align}
The first and last equalities hold because $A_\emptyset(p) = 1$ for all $p$, so by \eqref{eq:enroll-prob}, $M_h(\theta) = 0$ if $\emptyset \succ^\theta h$. The second and second-to-last equalities hold by Lemma \ref{lem:enrollment}. The inequality follows from the fact that $\I^L \succeq^\Interest \I^H$. But $M^H \succeq^\Matching M^L$ implies 
\begin{equation} \sum_{h \succ^\theta \emptyset} M_h^L(\theta) \leq \sum_{h \succ^\theta \emptyset} M_h^H(\theta) \hspace{.5 in } \forall \theta \in \Theta. \label{eq:leq} \end{equation} 
Therefore, the inequality in \eqref{eq:geq} must hold with equality. In particular, this implies that \eqref{eq:school-rht} holds for each $h \in \H$. Furthermore, this implies that \eqref{eq:student-rht} holds for all $\theta$ except possibly a set of $\eta$-measure zero. \todo{Show that this holds for all students (not even measure zero exception).}
\end{proof}

\begin{proof}[Proof of Theorem \tref{thm:uniqueness}] 
It suffices to show that there is a unique stable interest function: that is, if $\I^H$ and $\I^L$ are the largest and smallest stable interest functions according to $\succeq^{\Interest}$, then $\I^H = \I^L$. We let $A^H, M^H$ be the admissions function and matching associated with $\I^H$, and define $A^L, M^L$ analogously.  We note that by \eqref{eq:admissions} and the fact that $\mV$ is decreasing in its first argument, $\I^H \succeq^\Interest \I^L$ implies that 
\begin{equation} A^L \succeq^{\Admissions} A^H.  \label{eq:aineq}\end{equation}
The proof proceeds by contradiction, showing that $\I^H \succ^{\interest} \I^L$ implies that Theorem \tref{thm:rht} does not hold. That is, $\I^H \succ^{\interest} \I^L$ implies the existence of a set $\tilde{\Theta}$ with $\eta(\tilde{\Theta}) > 0$ such that 
\begin{equation} \sum_{h \in \H} M_h^H(\theta) <  \sum_{h \in \H} M_h^L(\theta)\text{ for all  }\theta \in \tilde{\Theta}. \label{eq:thetamatchineq2}\end{equation} 
We establish existence of such a $\tilde{\Theta}$ in three steps.
\begin{enumerate}[label = \Roman*.]
\item Note that Definition \ref{def:strict} and \eqref{eq:continuum-interest} imply that 
\begin{enumerate}[label = \alph*)]
\item the stable interest functions $\I^H$ and $\I^L$ are component-wise continuous, and
\item for all $h \in \H$, $\I_h^H(1) = \I_h^L(1) = 0$.
\end{enumerate}
\item \label{item:secondstep} These jointly imply that if $\I^H_h(p) > \I^L_h(p)$ for some $p \in [0,1]$ and $h \in \H$, then there must exist an interval $(\underline{p}, \overline{p})$ such that:
\begin{enumerate}[label = \alph*)]
\item $\I_h^H(p) > \I_h^L(p)$ for $p \in (\underline{p}, \overline{p})$, and \label{item:strict}
\item $\I_h^H(\overline{p}) > \I_h^H(\underline{p})$. \label{item:strict2}
\end{enumerate}
That is, $\I_h^H$ is not constant and strictly larger than $\I_h^L$ on this interval.
\item Define 
\begin{equation}S = \{\theta : M_\emptyset^H(\theta) = 0\} \} \label{eq:S} \end{equation}
to be the set of student types who are sure to be matched. Define
\begin{equation} \tilde{\Theta} = \{ \theta : h \succ^\theta \emptyset, p_h^\theta \in (\underline{p},\overline{p})\} \backslash S. \label{eq:thetatilde} \end{equation}
We will show that \eqref{eq:thetamatchineq2} holds, and that $\eta(\tilde{\Theta}) > 0$.
\end{enumerate}
To see that \eqref{eq:thetamatchineq2} holds, note that if $\theta \in \tilde{\Theta}$,
\begin{equation} A_h^H(p_h^\theta)  = \mV(\I^H(p_h^\theta),C_h) < \mV(\I^L(p_h^\theta),C_h) = A_h^L(p_h^\theta), \label{eq:astrictineq} \end{equation}
where the equalities hold by \eqref{eq:admissions} and the inequality follows from \ref{item:secondstep}\ref{item:strict}, the fact that $p_h^\theta \in (\underline{p},\overline{p})$, and the fact that $\mV(\cdot, C_h)$ is strictly decreasing. Thus, when comparing $A^H$ to $A^L$, each student in $\tilde{\Theta}$ is
\begin{enumerate}[label=\roman*.]
\item weakly less likely to be admitted to each school under $A^H$ by \eqref{eq:aineq},
\item strictly less likely to be admitted to $h$ under $A^H$ by \eqref{eq:astrictineq}, and
\item not certain to be admitted to any school by definition of $\tilde{\Theta}$ in \eqref{eq:thetatilde}.
\end{enumerate}
From this, \eqref{eq:enroll-prob} implies that each $\theta \in \tilde{\Theta}$ is strictly less likely to match under $A^H$. That is, \eqref{eq:thetamatchineq2} holds.

Finally, we show that $\eta(\tilde{\Theta}) > 0$. By definition of $\tilde{\Theta}$ in \eqref{eq:thetatilde}
\begin{align}
\eta(\tilde{\Theta} \cup S) & \geq \int {\bf 1}(h \succ^\theta \emptyset, p_h^\theta \in (\underline{p},\overline{p})) d\eta(\theta) \label{eq:almost} \\
& \geq \int {\bf 1}(h \succ^\theta \emptyset, p_h^\theta \in (\underline{p},\overline{p}))(1 - \sum_{h' \succ^\theta h} M_{h'}^H(\theta)) d\eta(\theta)\nonumber  \\
& = \I^H_h(\underline{p}) -  \I^H_h(\overline{p}) \nonumber \\
& > 0,\nonumber
\end{align}
where the equality follows from \eqref{eq:continuum-interest} and stability of $\I^H$, and the final line follows from \ref{item:secondstep}\ref{item:strict2}. We complete the proof by showing that $\eta(S) = 0$.

Because $\mV(\cdot, C_h)$ is strictly decreasing, \eqref{eq:admissions} implies that $A_h^H(p) = 1$ if and only if $\I^H_h(p) = 0$. Define $p_h  = \inf \{p : \I^H_h(p_h^\theta) = 0\}$ to be the lowest priority at school $h$ that guarantees admission, and note that
 \begin{equation} 0 = \I_h^H(p_h) = \int {\bf 1}(p_h^\theta > p_h)(1 - \sum_{h' \succ^\theta h} M_{h'}^H(\theta)) d\eta(\theta) , \label{eq:lambdazero} \end{equation}
where the second equality comes from stability of $\I^H$. Define 
\[ S_h = \{\theta :  p_h^\theta > p_h, \sum_{h' \succ^\theta h} M_{h'}^H(\theta) < 1 \}\] 
to be the set of agents who are certain to be admitted to $h$, and not certain to be admitted to any option that they prefer to $h$. Note that 
 \[ \eta(S_h) = \int {\bf 1}( p_h^\theta > p_h){\bf 1} \left( \sum_{h' \succ^\theta h} M_{h'}^H(\theta) < 1 \right)  d\eta(\theta) = 0,\]
 where the second equality follows from \eqref{eq:lambdazero}. Because $S = \bigcup_{h \in \H} S_h$, it follows that $\eta(S) = 0$ and thus $\eta(\tilde{\Theta})>0$ by \eqref{eq:almost}.

\end{proof}

\remove{
\section{Proofs from Section \sref{sec:model}}

\begin{proof}[Proof of Lemma \ref{lem:deriv}]
For any random $X$ variable on $\N_0$, $\E[X] = \sum_{k \geq 0} \p(X > k)$.\footnote{This follows by exchanging the order of summation: \[\E[X] = \sum_{j = 0}^\infty j \, \p(X = j) = \sum_{j = 0}^\infty \sum_{k = 0}^{j-1}\mathbb{P}(X = j) = \sum_{k = 0}^\infty \sum_{j = k+1}^\infty  \mathbb{P}(X = j) = \sum_{k = 0}^\infty \mathbb{P}(X > k).\]} Therefore, 
\[ \enrollment(\lambda,C) = \sum_{k = 0}^\infty \p(\min(\pois{\lambda},C) > k) = \sum_{k = 0}^{C-1} \p(\pois{\lambda}>k).\]
Furthermore,
\begin{align*}
\frac{d}{d\lambda} \p(\pois{\lambda}>k) & = \frac{d}{d\lambda} \left(1 - \sum_{j = 0}^{k} \frac{e^{-\lambda}\lambda^j}{j!} \right) \\ & = \sum_{j = 0}^{k} \frac{e^{-\lambda} \lambda^j}{j!} - \sum_{j =1}^k \frac{e^{-\lambda}\lambda^{j-1}}{(j-1)!} \\ & = \frac{e^{-\lambda}\lambda^k}{k!} = \mathbb{P}(\pois{\lambda} = k).
\end{align*}
It follows that 
\begin{align*}
\frac{d}{d\lambda} \enrollment(\lambda,C) & = \frac{d}{d\lambda} \sum_{k = 0}^{C-1} \p(\pois{\lambda} > k) \\ &  = \sum_{k = 0}^{C-1} \p(\pois{\lambda} = k) \\ & = \p(\pois{\lambda} < C) = \mV(\lambda,C).\end{align*}
\end{proof}
}

\remove{
\section{Proofs from Section \sref{sec:azevedo-leshno}}

\begin{proof}[Proof of Theorem \tref{thm:correspondence}]
Suppose that $\I$ is stable with large capacities. By Definition \ref{def:stable-large}, and equations \eqref{eq:acceptanceprob} and \eqref{eq:continuum-interest}, this is true if and only if for each $\tau \in \T$ and $p \in [0,1]$,
\begin{equation}\I^\tau(p) = \frac{1}{\Dtau} \int_\Theta \sum_{k = 1}^{\ell(\theta)} {\bf 1}(\tau_k^\theta = \tau, p_k^\theta > p) \prod_{j < k}(1 - \tilde{Q}_j^\theta(\I)) d\eta. \label{eq:stable-large}\end{equation}
By definition of $\tilde{Q}$ in \eqref{eq:qtilde}, we can rewrite this as 
\begin{equation} \I^\tau(p) = \frac{1}{\Dtau} \int_\Theta \sum_{k = 1}^{\ell(\theta)} {\bf 1}(\tau_k^\theta = \tau, p_k^\theta > p) \prod_{j < k}{\bf 1}(p_j^\theta < \P^{\tau_j^\theta}(\I)) d\eta.\label{eq:stable-large-transform} \end{equation}
Comparing this expression to that in \eqref{eq:demand}, the only difference is strict vs weak inequality of $p_k^\theta > p$. By Assumption \ref{as:strictpriorities} (strict priorities), it follows that
\[\I^\tau(\P^\tau(\I)) = \demand^\tau(\P(\I)).\]
By the definition of the cutoffs $\P(\I)$ in \eqref{eq:cutoffs}, this implies that $\demand^\tau(\P(\I)) \leq C^\tau$ for all $\tau$. Furthermore, strict priorities implies continuity of $\I^\tau$, which implies that if $P^\tau > 0$, then $\demand^\tau(\P(\I)) = C^\tau$. Hence, $\P(\I)$ is market-clearing.

Conversely, if $P \in [0,1]^\T$ is market-clearing, define $\I_P$ as follows:
\begin{equation} \I_P^\tau(p) = \frac{1}{\Dtau}\int_\Theta \sum_{k = 1}^{\ell(\theta)}{\bf 1}(\tau_k^\theta = \tau, p_k^\theta > p) \prod_{j < k} {\bf 1}(p_j^\theta < P^{\tau_j^\theta})d\eta.\label{eq:lambdaP} \end{equation}
We claim that $\I_P$ is stable with large capacities, and that the measure of students who receive different outcomes under cutoffs $P$ and $\P(\I_P)$ is zero. 


To start, we note that
\begin{equation} \I_P^{\tau}(P^\tau) = \demand^\tau(P), \label{eq:peqd} \end{equation}
 because when $p = P^\tau$, the only difference between the right sides of \eqref{eq:demand} and \eqref{eq:lambdaP} is for agents with $p_k^\theta = P^\tau$, which have $\eta$-measure zero by Assumption \ref{as:strictpriorities}. Because $P$ is market-clearing, $\demand^\tau(P) \leq C^\tau$  for all $\tau \in \T$, and therefore the definition of $\P$ in \eqref{eq:cutoffs} implies that
\begin{equation} \P^\tau(\I_P) \leq P^\tau. \label{eq:Pleq} \end{equation}
We claim that for all $\tau \in \T$,
\begin{equation} \I_P^\tau(\P^\tau(\I_P)) = \I_P^\tau(P^\tau). \label{eq:Peq} \end{equation}
This follows immediately from \eqref{eq:Pleq} if $P^\tau = 0$, as in that case $\P^\tau(\I_P) = P^\tau$. If $P^\tau > 0$, we have $\demand^\tau(P) = C^\tau$ because $P$ is market-clearing, and therefore $\I_P^\tau(P^\tau) = C^\tau$ by \eqref{eq:peqd}. Furthermore, Assumption \ref{as:strictpriorities} and \eqref{eq:lambdaP} jointly imply that $\I_P$ is continuous, and therefore the definition of $\P(\I_P)$ in \eqref{eq:cutoffs} implies $\I_P^\tau(\P^\tau(\I_P)) = C^\tau$, so \eqref{eq:Peq} holds.

Note that if $\eta$ has full support, then \eqref{eq:lambdaP} implies that $\I_P^\tau$ is strictly decreasing, so \eqref{eq:Peq} implies that $\P^\tau(\I_P) = P^\tau$, which in turn implies that $\I_P$ is stable with large capacities. The remainder of the proof is necessary for the case where $\eta$ does not have full support. 

For $k \in \mathbb{N}$, define 
\begin{equation}A_k = \{\theta : \ell(\theta) \geq k, p_j^\theta < P^{\tau_j^\theta} \text{ for all }  j \in \{1, \ldots, k\},p_k^\theta \geq \P^{\tau_k^\theta}(\I_P))\label{eq:Ak} \end{equation}
to be the set of agents that do not get into any of their top $k$ choices under cutoffs $P$, but would be admitted to their $k^{th}$ choice under cutoffs $\P(\I_P)$.

By \eqref{eq:Peq} and the definition of $\I_P$ in \eqref{eq:lambdaP}, we have
\[ 0 = \sum_{\tau \in \T} \Dtau \Big(\I_P^\tau(\P^\tau(\I_P)) - \I_P^\tau(P^\tau)\Big) = \sum_{k \in \N} \eta(A_k), \]
from which it follows that 
\begin{equation} \eta(A_k) = 0 \text{ for all } k \in \N. \label{eq:measure-zero} \end{equation} 
Define
\begin{equation} B_k = \{\theta : \ell(\theta) \geq k,p_j^\theta < P^{\tau_j^\theta} \text{ for all } j \in \{1, \ldots, k\},p_j^\theta \geq \P^{\tau_k^\theta}(\I_P) \text{ for some } j \in \{1, \ldots, k\}) \label{eq:Bk}\end{equation}
to be the set of agents that do not get into any of their top $k$ choices under cutoffs $P$, but would be admitted to one of their first $k$ choices under cutoffs $\P(\I_P)$. Note that $B_k \subseteq \bigcup_{j = 1}^k A_j$, so \eqref{eq:measure-zero} implies that 
\begin{equation} \eta(B_k) = 0 \text{ for all } k \in \N. \label{eq:measure-zero-2} \end{equation} 

We now show that $\I_P$ is stable in the large. By \eqref{eq:stable-large-transform} and \eqref{eq:lambdaP}, this is equivalent to showing that for all $\tau \in \T$ and $p \in [0,1]$,
\[ \int_\Theta \sum_{k = 1}^{\ell(\theta)}{\bf 1}(\tau_k^\theta = \tau, p_k^\theta > p) \left(\prod_{j < k}{\bf 1}(p_j^\theta < P^{\tau_j^\theta})   - \prod_{j < k} {\bf 1}(p_j^\theta < \P^{\tau_j^\theta}(\I_P))\right)d\eta = 0. \]
Noting that the integrand is pointwise non-negative by \eqref{eq:Pleq}, we see that the expression on the left is upper-bounded by 
\[ \int_\Theta \sum_{k = 1}^{\ell(\theta)}\left(\prod_{j < k}{\bf 1}(p_j^\theta < P^{\tau_j^\theta})   - \prod_{j < k} {\bf 1}(p_j^\theta < \P^{\tau_j^\theta}(\I_P))\right)d\eta = \sum_{k \in \N} \eta(B_k) = 0, \]
where the first equality comes from definition of $B_k$ in \eqref{eq:Bk} and the second from \eqref{eq:measure-zero-2}. Thus, $\I_P$ is stable in the large. Finally, \eqref{eq:Pleq} implies that $R^\theta(\P(\I_P)) \leq R^\theta(P)$ for all $\theta$, so 
\[ \{ \theta : R^\theta(\P(\I_P)) \neq R^\theta(P)\} = \bigcup_{k \in \N} B_k,\]
which has measure zero by \eqref{eq:measure-zero-2}.

\end{proof}
}

\remove{
\section{Old Proofs}

\begin{definition} \text{ } \label{def:finite-stability-old}
A matching $M$ is {\bf feasible} if for each $h \in \H_0$, the number of students assigned to $h$ does not exceed its capacity: 
\begin{equation} \int M_h(\theta) d \eta(\theta) \leq C_h,\label{eq:feasibility} \end{equation}
and is {\bf deterministic} if $M_h(\theta) \in \{0,1\}$ for each $h \in \H_0$ and $\theta \in \Theta$.

A feasible deterministic matching $M$ {\bf has no blocking pairs} if for each $\theta' \in \Theta$ and $h, h' \in \H_0$ such that $M_{h'}(\theta') = 1$ and $h \succ^{\theta'} h'$, \eqref{eq:feasibility} holds with equality and 
\begin{equation} p_h^{\theta'} \leq \inf_{\{\theta:M_h(theta) = 1\}}  p_h^{\theta}. \label{eq:no-blocking-old} \end{equation}
\end{definition}



The following result states that in a finite market with strict priorities, stability as defined by Definition \ref{def:stability} coincides with the absence of blocking pairs.

Define
\begin{equation} \vdet(\lambda,C) = {\bf 1}(\lambda < C). \end{equation}

\begin{proposition} \label{prop:finite-old}
Let $(\H, {\bf C}, \eta)$ be a finite market with strict priorities. If matching $M$ is $(\eta, \vdet)$-stable, then it is deterministic, feasible, and has no blocking pairs. If $M$ is deterministic, feasible, and has no blocking pairs, then $\tilde{M} = \M(\mA(\mI(M)))$ is $(\eta, \vdet)$-stable, and the set of students who have different assignments under $M$ and $\tilde{M}$ has $\eta$-measure zero.\end{proposition}

\na{Comment that in essence there's a one-to-one correspondence. Because we're defining the assignment even for agents not in $\S$, there is some multiplicity. Could alternatively say $\tilde{M}$ coincides with $M$ on $\S$.}

\begin{proof}[Proof of Proposition \tref{prop:finite-old}]
We first suppose that $M$ is stable, and show that it is deterministic, feasible, and has no blocking pairs. The definition of $\mV$ ensures that for each $h \in \H$, and $p \in [0,1]$, $A_h(p) \in \{0,1\}$. By \eqref{eq:enroll-prob}, it follows that $M$ is deterministic. Lemma \ref{lem:enrollment} \todo{can't use this anymore} states that
\[ \int M_h(\theta) d\eta(\theta) = \int_0^{I_h(0)} \mV(\lambda,C_h) d\lambda \leq \int_0^\infty \mV(\lambda,C_h) d\lambda = C_h,\]
where the final equality uses Assumption \ref{as:vacancy-monotone}. Therefore, $M$ is feasible. Finally, we fix $\theta', h, h'$ such that $M_{h'}(\theta') = 1$ and $h \succ^{\theta'} h'$. It follows from \eqref{eq:enroll-prob} that $A_h(p_h^{\theta'}) = 0$. By \eqref{eq:admissions}, this implies $I_h(p_h^{\theta'}) \geq C_h$. Therefore, Lemma \ref{lem:enrollment} implies that 
\[\int M_h(\theta) d\eta(\theta) \int_0^{I_h(0)} \mV(\lambda,C_h) d\lambda \geq \int_0^{I_h(p_h^{\theta'})} \mV(\lambda,C_h) d\lambda = C_h,\]
so \eqref{eq:feasibility} holds with equality. Furthermore, for any $\theta$ such that $M_h(\theta) = 1$, it follows from \eqref{eq:admissions}, \eqref{eq:enroll-prob} and the definition of $\mV$ that $I_h(p_h^{\theta}) < C_h$. From this, monotonicity of $I_h$ implies that $p_h^{\theta} > p_h^{\theta'}$, so \eqref{eq:no-blocking} holds, and $M$ has no blocking pairs.

We now assume that $M$ is deterministic, feasible, and has no blocking pairs, and show that $\tilde{M}$ differs from $M$ only on a set of measure zero, and is stable. Because $\eta$ is a finite market, it is supported on a finite set $S \subset \Theta$. We must show that for $\theta' \in S$, $M(\theta') = \tilde{M}(\theta')$. Let $h' \in \H_0$ be the assignment of $\theta'$ under $M$, and let $h$ be any school preferred by $\theta'$ to $h'$. Define $I = \mI(M)$. We will show that
\begin{align}
I_h(p_h^{\theta'})&  \geq C_h, \label{eq:high-interest-old} \\
I_{h'}(p_{h'}^{\theta'})&  \geq C_{h'}. \label{eq:low-interest-old}
\end{align}
By the definition of $\mV$ and  \eqref{eq:admissions}, these equations imply that in $\mA(\mI(M))$,  $\theta'$ will be admitted to $h'$ but not to any preferred school. By \eqref{eq:enroll-prob}, this implies that $\tilde{M}(\theta') = M(\theta')$.

We first show \eqref{eq:high-interest}. We see from \eqref{eq:continuum-interest} that 
\[I_{h}(p_{h}^{\theta'}) = \int {\bf 1}(p_{h}^{\theta} > p_h^{\theta'})(1 - \sum_{h'' \succ^\theta h} M_{h''}(\theta) )d\eta(\theta) \geq \eta(\theta : M_{h}(\theta) = 1, p_{h}^{\theta} > p_{h}^{\theta'}). \]
But the definition of ``no blocking pairs" implies that 
\[C_h = \eta( \theta : M_{h}(\theta) = 1) =  \eta( \theta : M_{h}(\theta) = 1, p_{h}^{\theta} > p_{h}^{\theta'}).\]
The first equality follows from the fact that \eqref{eq:feasibility} holds with equality, and the second from \eqref{eq:no-blocking}. Therefore, $I_{h}(p_{h}^{\theta'}) \geq C_h$. 

We now show \eqref{eq:low-interest}. Condition \eqref{eq:no-blocking} of no blocking implies that for any $\theta$ with $p_{h'}^\theta > p_{h'}^{\theta'}$, if $M_{h'}(\theta) = 0$, then $M_h(\theta) = 1$ for some $h \succ^\theta h'$. Therefore, 
\[I(p_{h'}^{\theta'}) = \int {\bf 1}(p_{h'}^\theta > p_{h'}^{\theta'}) (1 - \sum_{h \in \H} M_h(\theta))d\eta(\theta) =   \eta(\theta : M_{h'}(\theta) = 1, p_{h'}^\theta > p_{h'}^{\theta'}).\]
Because $M$ is feasible, we have 
\[ C_h \geq \eta(\theta : M_{h'}(\theta) = 1) > \eta(\theta : M_{h'}(\theta) = 1, p_h^\theta > p_h^{\theta'}),\]
with the strict inequality following from the fact that $M_{h'}(\theta') = 1$ and $\theta' \in S$. Combining the preceding two equations establishes \eqref{eq:low-interest}.

As explained above, \eqref{eq:high-interest} and \eqref{eq:low-interest} jointly imply that $M$ and $\tilde{M}$ differ only on a set of measure zero. We now show that $\tilde{M}$ is stable. Because $M$ and $\tilde{M}$ differ only on a set of measure zero, \eqref{eq:continuum-interest} implies that $\mI(\tilde{M}) = \mI(M)$. Therefore,
\[\M(\mA(\mI(\tilde{M})) = \M(\mA(\mI(M))) = \tilde{M}.\]

\end{proof}
}

\section{Proofs from Section \ref{sec:poisson}}
\na{Why can't I make this sref?}

Define 
\begin{equation} \acceptancerate(\lambda,C) = \enrollment(\lambda,C)/\lambda = \frac{1}{\lambda} \int_0^\lambda \mV(x,C) dx,\end{equation}
where the second equality follows from the definition of $\enrollment$ in \eqref{eq:enrollment}. The following result implies that $\acceptancerate$ is decreasing in its first argument.

\begin{lemma} \label{lem:monotonicity}
Given $f : \mathbb{R}_+ \rightarrow \mathbb{R}_+$, define $g : \mathbb{R}_+ \rightarrow \mathbb{R}_+$ by 
\[ g(y) = \frac{1}{y} \int_0^y f(x) dx.\]
If $f$ is weakly increasing, then so is $g$. If $f$ is weakly decreasing, then so is $g$.
\end{lemma}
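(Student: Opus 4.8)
The plan is to give a direct comparison argument, avoiding any appeal to differentiability of $f$ (which is only assumed monotone and may therefore be discontinuous). The key observation is that $g(y)$ is simply the \emph{average} value of $f$ over the interval $[0,y]$, and that monotonicity of $f$ pins down how this average compares with the endpoint value $f(y)$.

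First I would record the elementary endpoint inequality: for weakly increasing $f$ and any $y>0$ we have $g(y)\le f(y)$, since $f(x)\le f(y)$ for all $x\in[0,y]$ gives $\int_0^y f(x)\,dx \le y\,f(y)$, i.e. $g(y)\le f(y)$. Symmetrically, for weakly decreasing $f$ one obtains $g(y)\ge f(y)$. Then, fixing $0<y_1<y_2$, I would split $\int_0^{y_2} f = \int_0^{y_1} f + \int_{y_1}^{y_2} f$. In the increasing case, $f(x)\ge f(y_1)$ on $[y_1,y_2]$ yields $\int_{y_1}^{y_2} f \ge (y_2-y_1)\,f(y_1) \ge (y_2-y_1)\,g(y_1)$, where the second inequality is precisely the endpoint bound from the previous step. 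Combining this with $\int_0^{y_1} f = y_1\,g(y_1)$ gives $\int_0^{y_2} f \ge y_2\,g(y_1)$, hence $g(y_2)=\frac{1}{y_2}\int_0^{y_2} f \ge g(y_1)$. The weakly decreasing case is identical with every inequality reversed, which establishes the claim on $(0,\infty)$; the value at $y=0$ is handled by continuity of $g$.

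The only subtlety — and the reason I prefer the comparison argument to the slick alternative — is regularity. One can formally compute $g'(y) = \tfrac{1}{y}\bigl(f(y)-g(y)\bigr)$ and read off the sign of $g'$ from the endpoint inequality, but this presumes enough smoothness to differentiate, whereas a monotone $f$ need only be integrable. The comparison argument above uses nothing beyond integrability of $f$ on bounded intervals (automatic for a bounded monotone function) and the pointwise monotonicity inequalities, so there is no genuine obstacle; the apparent difficulty of non-differentiable $f$ simply dissolves.
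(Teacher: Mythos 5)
Your proof is correct, but it takes a genuinely different route from the paper's. The paper simply differentiates: it writes $g'(y) = \bigl(yf(y) - \int_0^y f(x)\,dx\bigr)/y^2$ and reads off the sign of the numerator from monotonicity of $f$. Your comparison argument replaces this with the two-step inequality $\int_{y_1}^{y_2} f \ge (y_2-y_1)f(y_1) \ge (y_2-y_1)g(y_1)$ (increasing case), which combined with $\int_0^{y_1} f = y_1 g(y_1)$ gives $g(y_2)\ge g(y_1)$ directly. You correctly identify the regularity issue that the paper's one-line proof glosses over: a monotone $f$ may be discontinuous (at countably many points), and the fundamental theorem of calculus only yields the stated formula for $g'$ at continuity points of $f$. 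The paper's argument can be repaired — $g$ is locally absolutely continuous on $(0,\infty)$, so a nonnegative a.e.\ derivative does imply monotonicity — but this requires machinery that the displayed proof does not invoke, whereas your argument needs only local integrability and pointwise monotonicity. What the paper's approach buys is brevity and the transparent interpretation $g'(y) = \frac{1}{y}\bigl(f(y)-g(y)\bigr)$, i.e.\ the average rises exactly when the marginal value exceeds it; what yours buys is airtightness under the stated hypotheses. One cosmetic point: since $g(y)$ involves $1/y$, the function is not actually defined at $y=0$, so your closing remark about handling $y=0$ by continuity is unnecessary — monotonicity on $(0,\infty)$ is all the lemma can mean and all its application (to $\acceptancerate$) requires.
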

\begin{proof}[Proof of Lemma \ref{lem:monotonicity}]
Note that 
\[g'(y) = \frac{yf(y) - \int_0^y f(x) dx}{y^2}.\]
If $f$ is weakly increasing, then $yf(y) \geq \int_0^y f(x)dx$; if $f$ is weakly decreasing, the inequality reverses.
\end{proof}

\begin{proof}[Proof of Proposition \tref{prop:more-seats}]
\todo{Should add language interpreting/explaining ideas.}
Let $\ell(\theta) = | \{h : h \succ^\theta \emptyset\}|$ be the number of schools listed by type $\theta$. We note that for any individually rational matching $M$ and any $\theta$, 
\begin{align}
 \sum_{h' \succ^\theta \emptyset} M_{h'}(\theta)R_{h'}(\theta) &  \leq \ell(\theta)(1 - \sum_{h' \succ^\theta \emptyset} M_{h'}(\theta)) + \sum_{h' \succ^\theta \emptyset} M_{h'}(\theta)R_{h'}(\theta) \nonumber \\
 & = \ell(\theta) - \sum_{h' \succ^\theta \emptyset} M_{h'}(\theta)(\ell(\theta) - R_{h'}(\theta))  \nonumber \\
 & = \ell(\theta) - \sum_{h' \succ^\theta \emptyset} \sum_{h' \succ^\theta h \succ^\theta \emptyset} M_{h'}(\theta)  \nonumber\\
 & = \sum_{h \succ^\theta \emptyset} (1 - \sum_{h' \succ^\theta h} M_{h'}(\theta)) \\
 & = \sum_{h \in \H \emptyset} (1 - \sum_{h' \succ^\theta h} M_{h'}(\theta)) \label{eq:interest-ub}
 \end{align}
 where the third line follows from the fact that $\ell(\theta) - R_{h'}(\theta)$ is the number of acceptable schools that rank below $h'$, the fourth follows by exchanging the order of summation, and the last uses the fact that $M$ is individually rational.
 
 From \eqref{eq:interest-ub} and the definition of $\averagerank$ in \eqref{eq:avg-rank}, it follows that if $(M,I,A)$ is a stable outcome, 
 \begin{equation} \averagerank(M) \leq \frac{\int \sum_{h \in \H} M_h(\theta) (1 - \sum_{h' \succ^\theta h} M_{h'}(\theta)) d\eta(\theta)}{\int \sum_{h \in \H} M_h(\theta) d\eta(\theta)} = \frac{\sum_{h \in \H} \I_h(0)}{\sum_{h \in \H} \int_0^{I_h(0)} \mV(\lambda,C_h) d\lambda}. \label{eq-avg-rank-bound}  \end{equation} 
 Note that the final equality follows by the fact that $M$ is stable, \eqref{eq:continuum-interest} and Lemma \ref{lem:enrollment}.  In a symmetric iid market, $C_h = C_{h'}$ and $I_h = I_{h'}$ for all $h, h' \in \H$ \todo{Check claim that $I_h = I_{h'}$ -- does this require iid?}, so \label{eq:avg-rank-bound} implies that for any $h \in \H$, 
\begin{equation}  \averagerank(M) \leq \frac{\I_h(0)}{\int_0^{I_h(0)} \mV(\lambda,C_h) d\lambda} = \frac{1}{\acceptancerate(I_h(0),C_h)}. \label{eq:avg-rank-bound-2}\end{equation} 
Lemma \ref{lem:monotonicity}  implies that $\acceptancerate$ is decreasing in its first argument, so we can obtain an upper bound on this expression by obtaining an upper bound on $I_h(0)$. But it is clear that the denominator in \eqref{eq-avg-rank-bound} is at most $\eta(\Theta)$, from which symmetry implies that $\enrollment(I_h(0),C_h) \leq \rho = \enrollment(\Lambda(\rho,C_h),C_h)$. Because $\enrollment$ is increasing in its first argument, this implies that $I_h(0) \leq \Lambda(\rho,C)$. Plugging this into \eqref{eq:avg-rank-bound-2} completes the proof.
\end{proof}

\begin{lemma} \label{lem:ar}
The function $AR : (0,1] \rightarrow \mathbb{R}_+$ defined by\todo{\footnote{\todo{For a general list length distribution, the appropriate definition is \[AR(\alpha) = \frac{\mu(\alpha) - \E[\ell(1 - \alpha)^\ell]}{\E[1 - (1-\alpha)^\ell]} = 1 -  \frac{(1-\alpha) \mu'(\alpha)}{\mu(\alpha)}.\]}}}
\begin{equation} AR(q) = \frac{1}{q} - \frac{\ell (1 - q)^\ell}{1 - (1 - q)^\ell},\end{equation}
is decreasing in $q$.
\end{lemma}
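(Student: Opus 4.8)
The plan is to show directly that $AR'(q)\le 0$ on $(0,1)$ (with strict inequality), so that $AR$ is decreasing. First I would differentiate. Writing $u=(1-q)^\ell$, one computes $\frac{d}{dq}\frac{u}{1-u}=\frac{u'}{(1-u)^2}$ with $u'=-\ell(1-q)^{\ell-1}$, so that
\[ AR'(q) = -\frac{1}{q^2} + \frac{\ell^2 (1-q)^{\ell-1}}{\left(1 - (1-q)^\ell\right)^2}. \]
Thus $AR'(q)\le 0$ is equivalent to $\ell^2 q^2 (1-q)^{\ell-1} \le \left(1 - (1-q)^\ell\right)^2$. Since every quantity in sight is positive for $q\in(0,1)$, I can take square roots and reduce to
\[ \ell\, q\, (1-q)^{(\ell-1)/2} \le 1 - (1-q)^\ell. \]

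Next I would use the factorization $1 - (1-q)^\ell = q \sum_{k=0}^{\ell-1}(1-q)^k$ and cancel the factor $q>0$, turning the target inequality into
\[ \ell\,(1-q)^{(\ell-1)/2} \le \sum_{k=0}^{\ell-1} (1-q)^k. \]
This is exactly the AM--GM inequality applied to the $\ell$ numbers $(1-q)^0,(1-q)^1,\ldots,(1-q)^{\ell-1}$: their arithmetic mean is $\frac{1}{\ell}\sum_{k=0}^{\ell-1}(1-q)^k$, while their geometric mean is $(1-q)^{(1/\ell)\sum_{k=0}^{\ell-1}k} = (1-q)^{(\ell-1)/2}$. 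Hence the inequality holds, with equality only when all the terms coincide, i.e.\ only at $q=0$. Therefore $AR'(q)<0$ for $q\in(0,1)$, and $AR$ is strictly decreasing on $(0,1]$ by continuity.

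The reduction is entirely routine; the one point needing care is keeping track of signs when taking square roots, which is legitimate because $1-(1-q)^\ell>0$ for $q\in(0,1]$. The only genuine idea is recognizing the final inequality as AM--GM. Equivalently, for a general real $\ell\ge 1$ one could substitute $1-q=e^{-2t}$ (so $t>0$) and use $1-e^{-2s}=2e^{-s}\sinh(s)$ to reduce the claim to $\sinh(\ell t)\ge \ell\,\sinh(t)$, which follows from convexity of $\sinh$ together with $\sinh(0)=0$. I expect no real obstacle beyond carrying out the differentiation cleanly.
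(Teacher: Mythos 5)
Your proof is correct, and there is in fact nothing in the paper to compare it against: the source wraps the proof of Lemma \ref{lem:ar} in a macro that expands to nothing, and the hidden body consists solely of the placeholder ``Complete this.'' So your argument fills a genuine gap rather than taking a different route. Every step checks out: $AR'(q) = -\frac{1}{q^2} + \frac{\ell^2(1-q)^{\ell-1}}{\left(1-(1-q)^\ell\right)^2}$; taking square roots is legitimate since both sides of $\ell^2 q^2(1-q)^{\ell-1} \le \left(1-(1-q)^\ell\right)^2$ are nonnegative on $(0,1]$; the geometric-series factorization $1-(1-q)^\ell = q\sum_{k=0}^{\ell-1}(1-q)^k$ cancels the factor $q>0$; and the resulting inequality $\ell(1-q)^{(\ell-1)/2} \le \sum_{k=0}^{\ell-1}(1-q)^k$ is exactly AM--GM applied to $(1-q)^0,\ldots,(1-q)^{\ell-1}$, whose geometric mean is $(1-q)^{(\ell-1)/2}$. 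One small caveat: your strictness claim fails when $\ell=1$, since then AM--GM over a single term is an identity for every $q$ and indeed $AR\equiv 1$ is constant; so the lemma holds only in the weak sense in that case. This is harmless, because the proof of Proposition \ref{prop:more-students} uses only weak monotonicity of $AR$ (to pass from $q\le\alpha\le\alpha'$ to $AR(q)\ge AR(\alpha)$, and in the Chebyshev-type step establishing \eqref{eq:step3}), but you should state the conclusion as ``weakly decreasing, strictly for $\ell\ge 2$.'' Your parenthetical hyperbolic-sine variant is also correct: with $1-q=e^{-2t}$ one gets $1-(1-q)^\ell = 2e^{-\ell t}\sinh(\ell t)$ and $\ell q(1-q)^{(\ell-1)/2} = 2\ell e^{-\ell t}\sinh(t)$, reducing the claim to $\sinh(\ell t)\ge \ell\sinh(t)$, which follows from convexity of $\sinh$ with $\sinh(0)=0$; this version has the incidental merit of covering non-integer $\ell\ge 1$, which would be the natural starting point for the general list-length extension of $AR$ that the author sketches in the suppressed footnote to the lemma statement.
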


\todo{
\begin{proof}[Proof of Lemma \ref{lem:ar}]
Complete this.
\end{proof}
}



\begin{proof}[Proof of Proposition \tref{prop:more-students}]
Let $(M,I,A)$ be the unique $(\eta^{IID}, \vpois)$-stable outcome. Note that by symmetry, $I_h(p) = I_{h'}(p)$ and $A_h(p) = A_{h'}(p)$ for all $h, h' \in \H$ and $p \in [0,1]$, so in what follows, we write $I(p)$ and $A(p)$ in place of $I_h(p)$ and $A_h(p)$. \na{Do we want to impose this right away, or hold off for as long as possible? Former seems clearer, but latter generalizes more easily.}

Define
\begin{equation} q = \int_0^1 \vpois(I(p),C) dp. \end{equation}

Note that 
\begin{align}
 \int \sum_{h \in \H} M_h(\theta) d\eta(\theta) & = \int (1 - \prod_{h \succ^\theta \emptyset} (1 - A(p_h^\theta)) d\eta^{IID}(\theta) \nonumber \\
 & = \int (1 - \prod_{h \succ^\theta \emptyset} (1 - \vpois(I(p_h^\theta),C)) d\eta^{IID}(\theta) \nonumber \\
 & = \eta^{IID}(\Theta) (1 - (1 - q)^\ell), \label{eq:ar-num}
\end{align}
where the first equality follows from \eqref{eq:enroll-prob}, the second from \eqref{eq:admissions}, and the last from the fact that we assume all students list $\ell$ schools, and in an iid market, the priorities $p_h$ are drawn iid $U[0,1]$.

\todo{Revisit below.}
Furthermore, we have
\begin{align}
\int \sum_{h \in \H} M_h(\theta)R_h(\theta) d\eta(\theta) & = \int \sum_{h \in \H} R_h(\theta)A_h(p_h^\theta) \prod_{h' \succ^\theta h}(1 - A_{h'}(p_{h'}^\theta)) d\eta^{IID}(\theta) \nonumber \\
&  = \eta^{IID}(\Theta) \sum_{k = 1}^\ell k q(1-q)^{k-1} \nonumber\\
& = \eta^{IID}(\Theta) (1 - (1-q)^\ell) AR(q). \label{eq:ar-denom}
\end{align}
Jointly, \eqref{eq:ar-num} and \eqref{eq:ar-denom} imply that
\begin{equation} \averagerank(M) = AR(q).\label{eq:avg-rank-ar} \end{equation}

Note that \eqref{eq:fill-to-capacity} implies that 
\begin{equation} \enrollment(\lambda,C) \leq C \text{ for all } \lambda \in \mathbb{R}_+,\label{eq:enr-ub} \end{equation} 
and therefore
\begin{equation} \eta^{IID}(\Theta) (1 - (1 - q)^\ell)  = |\H| \enrollment(I(0),C) \leq |\H| C. \end{equation} 
Define $\alpha$ and $\alpha'$ as the solutions to 
\begin{equation} \eta^{IID}(\Theta)(1 - (1-\alpha)^\ell) = |\H|C= \eta^{IID}(\Theta)(1 - e^{-\alpha' \ell}). \end{equation}
Then it follows that 
\begin{equation} q \leq \alpha \leq \alpha', \label{eq:q-alpha} \end{equation}
with the last inequality holding because $e^{-\alpha' \ell} \geq (1 - \alpha')^\ell$. Because $AR$ is decreasing by Lemma \ref{lem:ar}, equations \eqref{eq:avg-rank-ar} and \eqref{eq:q-alpha} imply that
\begin{equation} \averagerank(M) = AR(q) \geq AR(\alpha) = 1/\alpha -  \ell(\rho/C - 1) \geq 1/\alpha' +  \ell(1-\rho/C ), \end{equation} 
where the second equality follows from the definition of $\alpha$. Finally, noting that $\alpha' = \frac{-\log(1- C/\rho)}{\ell}$ completes the proof.

We now turn to the case where priorities are identical across schools. We define 
\begin{equation} q(u) = \vpois(\Lambda(u,C),C).\label{eq:q(u)} \end{equation} 
We will prove that the following chain of inequalities hold:
\begin{align}
\averagerank(M) & = \frac{1}{\enrollment(I(0),C)} \int_0^{\enrollment(I(0),C)} AR(q(u)) du \label{eq:step1} \\
& \leq \frac{1}{C} \int_0^C AR(q(u)) du \label{eq:step2} \\
& \leq \int_0^1 AR(q) dq \label{eq:step3} \\ 
& \leq 1 + \log(\ell). \label{eq:step4}
\end{align}

To evaluate $\averagerank(M)$, we note that
\begin{align}
\int \sum_{h \in \H} M_h(\theta)R_h(\theta) d\eta^{RSD}(\theta) & = |\H| \enrollment(I(0),C).\label{eq:messy-one} \\
\int \sum_{h \in \H} M_h(\theta)R_h(\theta) d\eta^{RSD}(\theta) & = \eta^{RSD}(\Theta) \int_{0}^1 (1-(1-\vpois(I(p),C))^\ell)  \nonumber AR(\vpois(I(p),C)) dp. 
\end{align}
We apply $u$-substitution to the latter integral, with $u = \enrollment(\I(p),C)$, so that 
\[ \frac{du}{dp} = \vpois(\I(p),C) \I'(p) = - \frac{\eta^{RSD}(\Theta) }{|\H|}(1 - (1-\vpois(\I(p),C))^\ell).\]
This yields{\small
\[\eta^{RSD}(\Theta) \int_{0}^1 (1-(1-\vpois(I(p),C))^\ell) AR(\vpois(I(p),C)) dp = |\H| \int_{0}^{\enrollment(I(0),C)}AR(q(u)) du, \]}
which when combined with \eqref{eq:messy-one} yields \eqref{eq:step1}.

We now establish \eqref{eq:step2}. The function $q$ given in \eqref{eq:q(u)} is decreasing, as is $AR$ by Lemma \ref{lem:ar}.  Therefore, \eqref{eq:enr-ub} and Lemma \ref{lem:monotonicity}  imply \eqref{eq:step2}.

%
We move on to establishing \eqref{eq:step3}. Define $f: [0,1] \rightarrow \mathbb{R}_+$ implicitly by 
\begin{equation}\vpois(f(q),C) = q. \label{eq:f-mess} \end{equation}
We note that \todo{reference equation from school choice paper} \begin{equation} - \frac{d}{d\lambda}\vpois(\lambda,C) = \vpois(\lambda,C) - \vpois(\lambda,C-1) \leq \vpois(\lambda,C). \label{eq:vprime} \end{equation}
Therefore, by \eqref{eq:q(u)} and \eqref{eq:f-mess}, we have
\begin{align} \frac{1}{C} \int_0^C AR(q(u))du &  =\frac{1}{C} \int_0^1 AR(q) \frac{\vpois(f(q),C)}{\vpois(f(q),C) - \vpois(f(q),C-1)} dq.
\label{eq:ar-bound} \\
\end{align}

The function\footnote{\na{This looks like an inverse hazard rate. For large $C$, basically get normal cdf over normal density.}}
\[h(\lambda) = \frac{1}{C} \frac{\vpois(\lambda,C)}{\vpois(\lambda,C) - \vpois(\lambda,C-1) }\]
 is decreasing in $\lambda$ \todo{by what?}, from which it follows that $h(f(q))$ is increasing in $q$. Meanwhile, $AR$ is decreasing in $q$ by Lemma \ref{lem:ar}. It follows that 
\begin{equation} \int_0^1 AR(q) h(f(q)) dq \leq \int_0^1 AR(q) dq \int_0^1h(f(q)) dq  =  \int_0^1 AR(q) dq.\label{eq:splitting} \end{equation}
The final inequality follows because by \eqref{eq:f-mess} and \eqref{eq:fill-to-capacity} we have
\[ \int_0^1h(f(q)) dq = \frac{1}{C}\int_0^\infty \vpois(\lambda) d\lambda = 1.\]

Finally, we show \eqref{eq:step4}. We note that by $u$-substitution with $1-u = (1-q)^\ell$,
\[ \int_\varepsilon^1 \frac{\ell (1 - q)^\ell}{1 - (1-q)^\ell} dq = \int_{1-(1-\varepsilon)^\ell}^1 \frac{(1-u)^{1/\ell}}{u} du.\]
Thus, we can write 
\begin{align*} \int_{\varepsilon}^1 AR(q) dq = \int_\varepsilon^1 \frac{1}{q} - \frac{\ell (1 - 1)^\ell}{1 - (1-q)^\ell} dq & = \int_\varepsilon^{1 - (1-\varepsilon)^\ell} \frac{1}{q} d q + \int_{1 - (1-\varepsilon)^\ell}^1\frac{1-(1-u)^{1/\ell}}{u} du \\
& \leq \log\left( \frac{1 - (1-\varepsilon)^\ell}{\varepsilon} \right) + 1,
\end{align*}
where the second line follows by evaluating the first integral and bounding the second using the fact that $(1-(1-u)^{1/\ell})/u \leq (1 - (1-u))/u = 1$. Combining this with the fact that\footnote{Despite appearances, $AR$ is well-behaved at zero: for $q > 0$, 
\[ 1\leq \frac{1}{q} - \frac{\ell (1 - q)^\ell}{1 - (1-q)^\ell} \leq \frac{\ell+1}{2}.\]}
\[ \int_0^1 AR(q) dq = \lim_{\varepsilon \rightarrow 0} \int_{\varepsilon}^1 AR(q) dq. \]
implies \eqref{eq:step4}.\todo{\footnote{When $C = 1$, the analysis is fairly tight. I conjecture that tighter analysis would yield the nearly perfect bound $\averagerank(\I) \leq \log(e+\ell-1)$ Letting $\Delta = \int_0^1 \frac{1-(1-u)^{1/\ell}}{u} du$, it would suffice to show that $\Delta +\log(\ell) \leq \log(\ell + e-1)$, or equivalently $\Delta \leq \log(1 + (e-1)/\ell)$. \\
When $C > 1$, we can get a better bound, but not sure what it should be. As $C \rightarrow \infty$ with fixed $\rho/C$, should have average rank approaching $1$. One bound that appears to hold (but is guess, not analytical) is $AvgRank \leq 1+\log(\ell)/\sqrt{C}$.  Nice to note that $AR$ depends on $\mu$ but not $C$, and $h$ depends on $C$ but not $\mu$. Might be helpful to note that $\int_0^1 \ell(1-q)^\ell h(q) dq \leq 1$. As $C$ grows, second term converges to $\frac{1}{\sqrt{C}} \frac{q}{dnorm(qnorm(q))}$, but this latter term isn't integrable, so can't just pass limit through integral.}}

\end{proof}

\end{document}